\newif\ifanonym
\DeclareSymbolFontAlphabet{\amsmathbb}{AMSb}
\DeclareSymbolFontAlphabet{\amsmathbb}{AMSb}
\newcommand\munderbar[1]{%
  \underaccent{\bar}{#1}}
\pgfplotsset{compat=newest}
\definecolor{red}{rgb}{0.745,0.192,0.102}
\definecolor{darkgreen}{RGB}{34,161,55}
\definecolor{ruhuisstijlrood}{rgb}{0.745,0.192,0.102}
\definecolor{ruhuisstijlzwart}{rgb}{0,0,0}
\definecolor{ruhuisstijlwit}{rgb}{0.98,0.98,0.98}
\definecolor{plotblue}{rgb}{0.1,0.498039215686275,0.9549019607843137}
\crefname{equation}{Eq.}{Eqs.}
\crefname{pluralequation}{Eqs.}{Eqs.}
\crefname{algorithm}{Algorithm}{Algorithm}
\crefname{figure}{Fig.}{Figs.}
\crefname{pluralfigure}{Figs.}{Figs.}
\crefname{section}{Sect.}{Sects.}
\crefname{pluralsection}{Sects.}{Sects.}
\crefname{table}{Table}{Table}
\crefname{pluraltable}{Tables}{Tables}
\crefname{definition}{Def.}{Def.}
\crefname{pluraldefinition}{Defs.}{Defs.}
\crefname{theorem}{Theorem}{Theorems}
\crefname{pluraltheorem}{Theorems}{Theorems}
\crefname{lemma}{Lemma}{Lemmas}
\crefname{plurallemma}{Lemmas}{Lemmas}
\crefname{example}{Example}{Example}
\crefname{pluralexample}{Examples}{Examples}
\crefname{problem}{Problem}{Problem}
\crefname{pluralproblem}{Problems}{Problems}
\crefname{assumption}{Assumption}{Assumption}
\crefname{pluralassumption}{Assumptions}{Assumptions}
\crefname{remark}{Remark}{Remark}
\crefname{pluralremark}{Remarks}{Remarks}
\crefname{proposition}{Proposition}{Proposition}
\crefname{pluralproposition}{Propositions}{Propositions}
\crefname{corollary}{Corollary}{Corollary}
\crefname{pluralcorollary}{Corollaries}{Corollaries}
\crefname{appendix}{Appendix}{Appendices}
\crefname{pluralappendix}{Appendices}{Appendices}
\tikzset{ 
    every state/.style={
        rectangle, 
        rounded corners,
        semithick,
        fill=gray!10,
        minimum height=0.8cm,
        inner sep=3pt,
        minimum size=3pt
        }
}
\definecolor{labelA}{rgb}{0.992156863,0.682352941,0.380392157}
\definecolor{empty_color}{rgb}{1,1,0.749019608}
\definecolor{nonempty_color}{rgb}{0.670588235,0.866666667,0.643137255}
\newcommand{\bigcdot}[1]{%
 \begin{tikzpicture}
  \node [circle, draw=black, outer sep=2pt, inner sep=2pt, minimum size=2pt, fill=#1] {};
 \end{tikzpicture}}
\tikzset{empty label/.style={draw, fill=empty_color}}
\tikzset{nonempty label/.style={draw, fill=nonempty_color}}
\tikzset{no label/.style={draw}}
\tikzset{small/.style={draw, inner sep=1pt, minimum size=1pt, font=\scriptsize}}
\tikzset{medium/.style={draw, inner sep=1pt, minimum size=15pt, font=\scriptsize}}
\newcommand{\Prob}{\amsmathbb{P}}
\newcommand{\Real}{\amsmathbb{R}}
\newcommand{\Q}{\amsmathbb{Q}}
\newcommand{\N}{\amsmathbb{N}}
\newcommand{\distr}[1]{\ensuremath{\mathit{Dist(#1)}}}
\newcommand{\diag}[1]{\ensuremath{{\mathrm{diag}(#1)}}}
\newcommand{\Always}{\Box \, }
\newcommand{\Finally}{\lozenge \, }
\newcommand{\ctmc}{\ensuremath{\mathcal{C}}}
\newcommand{\mdp}{\ensuremath{\mathcal{M}}}
\newcommand{\imdp}{\ensuremath{\mathcal{I}}}
\newcommand{\sched}{\ensuremath{\sigma}}
\newcommand{\Sched}{\ensuremath{\mathrm{Sched}}}
\newcommand{\imphist}{\ensuremath{\Omega}}
\newcommand{\imp}[2]{\ensuremath{[\,\munderbar{#1}_{#2}, \bar{#1}_{#2}\,]}}
\newcommand{\timegraph}{\ensuremath{\mathcal{G}}}
\newcommand{\nodes}{\ensuremath{\mathcal{N}}}
\newcommand{\edges}{\ensuremath{\mathcal{E}}}
\newcommand{\post}{\ensuremath{\mathsf{post}}}
\newcommand{\Unfold}{\ensuremath{\mathsf{Unfold}}}
\newcommand{\Abstract}{\ensuremath{\mathsf{Abstract}}}
\newcommand{\pr}{\ensuremath{\text{Pr}}}
\newcommand{\sTr}{\ensuremath{\mathsf{sTr}}}
\newcommand{\hor}{\ensuremath{d}}
\newcommand{\impT}{\ensuremath{\mathcal{T}}}
\newcommand{\interval}{\ensuremath{\mathbbm{I}}}
\newcommand{\partition}{\ensuremath{\Psi}}
\newcommand{\Partition}{\ensuremath{\mathsf{partition}}}
\newcommand{\calP}{\ensuremath{\mathcal{P}}}
\newcommand{\consistent}{\ensuremath{\sim}}
\newcommand{\tuple}[1]{\ensuremath{\langle #1 \rangle}} 
\newcommand{\closure}{\ensuremath{\mathrm{cl}}} 
\newcommand{\weight}{\ensuremath{w}} 
\newcommand{\Weight}{\ensuremath{W}} 
\def\Ddots{\mathinner{\mkern1mu\raise\p@
\vbox{\kern7\p@\hbox{.}}\mkern2mu
\raise4\p@\hbox{.}\mkern2mu\raise7\p@\hbox{.}\mkern1mu}}
\DeclareMathOperator*{\argmax}{arg\,max}
\newcommand{\ex}[1]{\textsc{#1}}
\newcommand{\storm}{\textsc{Storm}}
\renewcommand{\paragraph}[1]{\smallskip\noindent\emph{#1}\,\,}
\renewcommand{\subsubsection}[1]{\medskip\noindent\textbf{#1}}
\def\orcidID#1{\smash{\href{http://orcid.org/#1}{\protect\raisebox{-1.25pt}{\protect\includegraphics{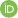}}}}}
\newif\ifappendix
\newif\iftikzcompile
\begin{document}

\title{CTMCs with Imprecisely Timed Observations
\ifanonym\else
\thanks{This work has been funded by the NWO grant \href{https://primavera-project.com}{PrimaVera} (NWA.1160.18.238), 
the EU Consolidator Grant 864075 (CAESAR)
the EU Horizon 2020 project MISSION (101008233), and the ERC Starting Grant 101077178 (DEUCE).}
\fi
}
\titlerunning{CTMCs with Imprecisely Timed Observations}

\ifanonym
\author{}
\institute{}
\authorrunning{Authors omitted}
\else
\author{
Thom Badings\inst{1}\orcidID{0000-0002-5235-1967}
\and Matthias~Volk\inst{2}\orcidID{0000-0002-3810-4185}
\and Sebastian~Junges\inst{1}\orcidID{0000-0003-0978-8466}
\and Marielle~Stoelinga\inst{1,3}\orcidID{0000-0001-6793-8165}
\and Nils~Jansen\inst{1,4}\orcidID{0000-0003-1318-8973}
}
\authorrunning{T.\ Badings, M.\ Volk, S.\ Junges, M.\ Stoelinga, N.\ Jansen}
\institute{
Radboud University, Nijmegen, the Netherlands \and Eindhoven University of Technology, Eindhoven, the Netherlands\and
University of Twente, Enschede, the Netherlands\and
Ruhr-University Bochum, Germany
}
\fi

\maketitle
\begin{abstract}
Labeled continuous-time Markov chains (CTMCs) describe processes subject to random timing and partial observability. In applications such as runtime monitoring, we must incorporate past observations. The timing of these observations matters but may be uncertain. Thus, we consider a setting in which we are given a sequence of imprecisely timed labels called the evidence. The problem is to compute reachability probabilities, which we condition on this evidence. Our key contribution is a method that solves this problem by unfolding the CTMC states over all possible timings for the evidence. We formalize this unfolding as a Markov decision process (MDP) in which each timing for the evidence is reflected by a scheduler. This MDP has infinitely many states and actions in general, making a direct analysis infeasible. Thus, we abstract the continuous MDP into a finite interval MDP (iMDP) and develop an iterative refinement scheme to upper-bound conditional probabilities in the CTMC. We show the feasibility of our method on several numerical benchmarks and discuss key challenges to further enhance the performance.
\end{abstract}

\section{Introduction}
\label{sec:introduction}

Continuous-time Markov chains (CTMCs) are stochastic processes subject to random timing, which are ubiquitous in reliability engineering~\cite{DBLP:journals/csr/RuijtersS15}, network processes~\cite{DBLP:conf/srds/HaverkortHK00,hermanns1999multi}, and systems biology~\cite{DBLP:journals/acta/CeskaDPKB17,DBLP:conf/tacas/BortolussiS18}.
Here, we consider finite-state labeled CTMCs, which exhibit partial observability through a labeling function, such that analysis can only be done based on observations of the state.
Specific techniques such as model checking algorithms compute quantitative aspects of CTMC behavior under the assumption of a static and known initial state~\cite{DBLP:journals/tse/BaierHHK03,aziz2000model}.

\paragraph{Conditional probabilities}
In applications such as runtime monitoring~\cite{DBLP:journals/fmsd/SanchezSABBCFFK19,DBLP:series/lncs/BartocciDDFMNS18}, we need to analyze an already running system without a static initial state.
Instead, we must incorporate past observations, which are given as a sequence of CTMC labels, each of which is observed at a specific time.
We call this sequence of timed labels the \emph{evidence}.
We want to incorporate this evidence by conditioning the state of the CTMC on the evidence.
For example, ``what is the probability of a failure for a production machine (modeled as a CTMC) before time $T$, given that we have observed particular labels at earlier times $t_1, t_2, \ldots, t_n$?''

\paragraph{Imprecise observation times}
These conditional probabilities depend on the exact time at which each label was observed.
However, in realistic scenarios, the times for the labels in the evidence may not be known precisely.
For example, inspections are always done in the first week of a month, but the precise moment of inspection may be unknown.
Intuitively, we can interpret such \emph{imprecisely timed evidence} as a potentially infinite set of (precisely timed) \emph{instances} of the evidence that vary only in the observation times.
For example, an inspection done on ``\emph{January 2 exactly at noon}'' is an instance of the imprecise observation time of ``\emph{the first week of January}.''
This perspective motivates a robust version of the previous question:
``Given the imprecisely timed evidence, what is the maximal probability of a failure before time $T$ over all instances of the evidence?''

\paragraph{Problem statement}
In this paper, we are given a labeled CTMC together with imprecisely timed evidence.
For each instance of the evidence, we can define the probability of reaching a set of target states, conditioned on that evidence.
The problem is to compute the supremum over these conditional probabilities for all instances of the evidence.
We generalize this problem by considering \emph{weighted} conditional reachability probabilities (or simply the \emph{weighted reachability}), where we assign to each state a nonnegative weight.
Standard conditional reachability is then a special case with a weight of one for the target states and zero elsewhere.

\paragraph{Contributions}
Our main contribution is the first method to compute weighted conditional reachability probabilities in CTMCs with imprecisely timed evidence.
Our approach consists of the following three steps.

\paragraph{1) Unfolding}
In \cref{sec:unfolding}, we introduce a method that \emph{unfolds} the CTMC over all possible timings of the imprecisely timed evidence.
We formalize this unfolding as a Markov decision process (MDP)~\cite{DBLP:books/wi/Puterman94}, in which the timing imprecision is reflected by nondeterminism.
We show that the weighted reachability can be computed via (unconditional) reachability probabilities on a transformed version of this MDP~\cite{DBLP:conf/tacas/BaierKKM14,DBLP:conf/cav/JungesTS20}.
For the special case of evidence with precise observation times, we obtain a precise solution to the problem that we can directly compute.

\paragraph{2) Abstraction}
In general, imprecisely timed evidence yields an unfolded MDP with infinitely many states and actions.
In \cref{sec:abstraction}, we propose an abstraction of this continuous MDP as a finite interval MDP (iMDP)~\cite{DBLP:journals/ai/GivanLD00}, similar to game-based abstractions~\cite{DBLP:journals/fmsd/KattenbeltKNP10}.
A robust analysis of the iMDP yields upper and lower bounds on the weighted reachability for the CTMC.
Moreover, we propose an iterative refinement scheme that converges to the weighted reachability in the limit.

\paragraph{3) Computing bounds in practice}
In \cref{sec:algorithm}, we use the iMDP abstraction and refinement to obtain sound upper and lower bounds on the weighted reachability in practice.
In \cref{sec:experiments}, we show the feasibility of our method across several numerical benchmarks.
Concretely, we show that we obtain reasonably tight bounds on the weighted reachability within a reasonable time.
Finally, we discuss the key challenges in further enhancing the performance of our method in \cref{sec:conclusion}.

\paragraph{Related work}
Closest to our problem are works on model checking CTMCs against deterministic timed automata (DTA)~\cite{DBLP:journals/corr/abs-1101-3694,DBLP:journals/pe/AmparoreD18,DBLP:conf/cav/FengKLXZ18}.
Evidence can be expressed as a single-clock DTA, and tools such as \ex{MC4CSL}~\cite{DBLP:conf/qest/AmparoreD10a} can calculate the weighted reachability for precise timings. However, for imprecisely timed evidence, checking CTMCs against DTAs yields the \emph{sum of probabilities} over all instances of the evidence, whereas we are interested in the \emph{maximal probability} over all instances.

Our setting is also similar to synthesizing timeouts in CTMCs with fixed-delay transitions~\cite{DBLP:conf/qest/BrazdilKKNR15,DBLP:conf/mascots/KorenciakKR16,DBLP:journals/tomacs/BaierDKKR19}.
Finding optimal timeouts is similar to our objective of finding an instance of the imprecisely timed evidence such that the weighted reachability is maximized.
While timeouts can model the time \emph{between} observations, we consider \emph{global} observation times, i.e., the time between observations depends on the previous time of observation---which cannot be modeled with timeouts.

We discuss other related work in more detail in \cref{sec:related}.
\section{Problem Statement}
\label{sec:preliminaries}
We recap continuous-time Markov chains (CTMCs)~\cite{DBLP:journals/tse/BaierHHK03,aziz2000model} and formalize the problem statement.
The set of all probability distributions over a finite set $X$ is denoted as $\distr{X}$.
We write tuples $\tuple{ a, b }$ with square brackets, and $\mathbbm{1}_x$ is the indicator function over $x$, i.e., $\mathbbm{1}_{(y=z)}$ is one if $y=z$ and zero otherwise.
We use the standard temporal operators $\Finally$ and $\Always$ to denote \emph{eventually} reaching or \emph{always} being in a state~\cite{BK08}.

\begin{definition}[CTMC]
\label{def:ctmc}
A (labeled) \emph{continuous-time Markov chain} $\ctmc$ is a tuple $\tuple{ S, s_I, \Delta, E, C, L }$ with a finite set $S$ of \emph{states}, an \emph{initial state} $s_I \in S$, a \emph{transition matrix} $\Delta \colon S \rightarrow \distr{S}$, \emph{exit-rates} $E \colon S \rightarrow \Q_{\geq 0}$, a finite set of \emph{colors} $C$, and a labeling function $L\colon S \rightarrow C$.
\end{definition}

A \emph{(timed) CTMC path} $\pi = s_0 t_0 s_1 t_2 s_3 \cdots \in \Pi = S \times (\Real_{\geq 0} \times S)^*$ is an alternating sequence of states and residence times, where $\Delta(s_i)(s_{i+1}) > 0 \, \forall i \in \N$.
The path $s_0 3 s_1 4 s_2$ means we stayed exactly 3 time units in $s_0$, then transition to $s_1$, where we stayed 4 time units before moving to $s_2$.
The CTMC state at time $t \in \Real_{\geq 0}$ is denoted by $\pi(t) \in S$, e.g., $\pi(6.2) = s_1$ for the example path above.

An alternative (and equivalent) view of CTMCs is to combine the transition matrix $\Delta$ and exit-rates $E$ in a transition rate matrix $R \colon S \times S \to \Q_{\geq 0}$, where $R(s,s') = \Delta(s,s') \cdot E(s)$ $\forall s, s' \in S$~\cite{DBLP:conf/lics/Katoen16}.
From state $s \in S$, the \emph{transient probability distribution} $\pr_s (t) \in \distr{S}$ after time $t \geq 0$ is $\pr_s (t) = \delta_s \cdot e^{(R - \diag{E})t}$, where $\delta_s \in \{0,1\}^{|S|}$ is the Dirac distribution for state $s$, and $\diag{E}$ is the diagonal matrix with the exit rates $E$ on the diagonal.
Thus, the probability of starting in state $s$ and being in state $s' \in S$ after time $t$ is $\Pr_s(t)(s') \in [0,1]$.

\begin{example}
\label{example:queue}
    Consider a simple, single-product inventory where the number of items in stock ranges from 0 to 2, but we can only observe if the inventory is empty or not.
    This system is modeled by the CTMC shown in \cref{fig:ctmc} with states $S = \{s_0,s_1,s_2\}$ (modeling the stock) and labels shown by the two colors (\bigcdot{empty_color}~for empty and \bigcdot{nonempty_color}~for nonempty).
    The rates at which items arrive and deplete are $R(s_0,s_1) = R(s_1,s_2) = 3$ and $R(s_1,s_0) = R(s_2,s_1) = 2$, respectively.
\end{example}

\begin{figure}[t]
\begin{subfigure}[b]{0.2\linewidth}
    \centering
    \iftikzcompile
    \begin{tikzpicture}[node distance=0.8cm]
      \node[state, empty label, initial,initial where=left, initial text=] (s0) {$s_0$};
      \node[state, nonempty label, below of=s0] (s1) {$s_1$};
      \node[state, nonempty label, below of=s1] (s2) {$s_2$};
      \path[->] (s0) edge [bend right=30] node [left, bend right=45] {$\lambda_a$} (s1)
                (s1) edge [bend right=30] node [left, bend right=45] {$\lambda_a$} (s2)
                (s2) edge [bend right=30] node [right] {$\lambda_d$} (s1)
                (s1) edge [bend right=30] node [right] {$\lambda_d$} (s0);
    \end{tikzpicture}
    \fi
    \caption{CTMC.}
    \label{fig:ctmc}
    \end{subfigure}
\hfill 
\begin{subfigure}[b]{0.3\linewidth}
    \centering
    \iftikzcompile
    \begin{tikzpicture}[node distance=1.2cm]
    \def\xshift{-0.3cm}
      \node[state, no label, initial,initial where=left, initial text=, align=center] (t0) {$0$};
      \node[state, no label, right of=t0, align=center, xshift=\xshift] (t1) {$t_1$};
      \node[state, no label, right of=t1, align=center, xshift=\xshift] (t2) {$t_2$};
      \node[state, no label, right of=t2, align=center, xshift=\xshift] (tstar) {$t_\star$};
      \path[->] (t0) edge [] node [above] {} (t1)
                (t1) edge [] node [above] {} (t2)
                (t2) edge [] node [above] {} (tstar);
    \end{tikzpicture}
    \fi
     \caption{Evidence graph.}
    \label{fig:inspection_graph}
\end{subfigure}
\hfill
\begin{subfigure}[b]{0.48\linewidth}
    \centering
    \iftikzcompile
    \begin{tikzpicture}[node distance=0.8cm]

    \def\xshift{0.6cm}
      \node[state, no label, initial,initial where=left, initial text=] (s0t0) {$s_0, 0$};
      \node[state, no label, below of=s0t0] (s1t0) {$s_1, 0$};
      \node[state, no label, below of=s1t0] (s2t0) {$s_2, 0$};
      \node[state, empty label, right of=s0t0, xshift=\xshift] (s0t1) {$s_0, t_1$};
      \node[state, nonempty label, below of=s0t1] (s1t1) {$s_1, t_1$};
      \node[state, nonempty label, below of=s1t1] (s2t1) {$s_2, t_1$};
      \node[state, empty label, right of=s0t1, xshift=\xshift] (s0t2) {$s_0, t_2$};
      \node[state, nonempty label, below of=s0t2] (s1t2) {$s_1, t_2$};
      \node[state, nonempty label, below of=s1t2] (s2t2) {$s_2, t_2$};
      \node[state, no label, right of=s0t2, xshift=\xshift] (s0star) {$s_0, t_\star$};
      \node[state, no label, below of=s0star] (s1star) {$s_1, t_\star$};
      \node[state, no label, below of=s1star] (s2star) {$s_2, t_\star$};
      \begin{pgfonlayer}{bg}    
      \path[->,shorten <= -2pt+\pgflinewidth] 
                (s0t0) edge [] node [above] {} (s0t1)
                (s0t0) edge [] node [above] {} (s1t1)
                (s0t0) edge [] node [above] {} (s2t1)
                (s1t0) edge [] node [above] {} (s0t1)
                (s1t0) edge [] node [above] {} (s1t1)
                (s1t0) edge [] node [above] {} (s2t1)
                (s2t0) edge [] node [above] {} (s0t1)
                (s2t0) edge [] node [above] {} (s1t1)
                (s2t0) edge [] node [above] {} (s2t1)
                (s0t1) edge [] node [above] {} (s0t2)
                (s0t1) edge [] node [above] {} (s1t2)
                (s0t1) edge [] node [above] {} (s2t2)
                (s1t1) edge [] node [above] {} (s0t2)
                (s1t1) edge [] node [above] {} (s1t2)
                (s1t1) edge [] node [above] {} (s2t2)
                (s2t1) edge [] node [above] {} (s0t2)
                (s2t1) edge [] node [above] {} (s1t2)
                (s2t1) edge [] node [above] {} (s2t2)
                (s0t2) edge [] node [above] {} (s0star)
                (s1t2) edge [] node [above] {} (s1star)
                (s2t2) edge [] node [above] {} (s2star);
    \end{pgfonlayer}
    \end{tikzpicture}
    \fi
     \caption{Unfolded MDP.}
    \label{fig:unfolding}
\end{subfigure}
    \caption{The CTMC (a) for \cref{example:queue}, (b) the graph for the precise evidence $\rho = \tuple{ t_1, o_1 }, \tuple{ t_2, o_2 }$, and (c) the states of the MDP unfolding defined by \cref{def:unfolded_mdp}.}
\end{figure}
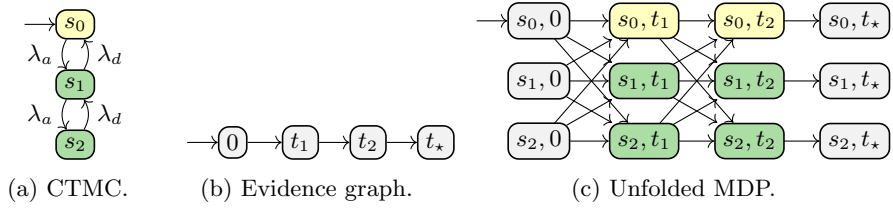

\subsection{Problem statement}
\label{subsec:problem_statement}

The key problem we want to solve is to compute reachability probabilities for the CTMC conditioned on a timed sequence of labels, which we call the \emph{evidence}.

\paragraph{Evidence}
The \emph{evidence} $\rho = \tuple{ t_1, o_1 }, \ldots, \tuple{ t_\hor, o_\hor } \in (\Real_{> 0} \times C)^\hor$ is a sequence of $d$ times and labels such that $t_i < t_{i+1}$ for all $i \in \{1,\ldots,\hor-1\}$.
A timed label $\tuple{ t_i, o_i }$ means that at time $t_i$, the CTMC was in a state $s \in S$, that is, $L(s) = o_i$.
Since each time $t \in \Real_{> 0}$ can only occur once in $\rho$, we overload $\rho$ and denote the evidence at time $t \in \{t_1,\ldots,t_\hor\}$ by $\rho(t) = o \in C$, such that $\tuple{ t,o } \in \rho$.
While a timed path of a CTMC describes the state at every continuous point in time, the evidence only contains the observations at $\hor$ points in time.
We say that a path $\pi$ is \emph{consistent} with evidence $\rho$, written as $\pi \models \rho$, if each timed label in $\rho$ matches the label of path $\pi$ at time $t$, i.e., if $L(\pi(t)) = \rho(t) \, \forall t \in \{t_1,\ldots,t_\hor\}$.

\paragraph{Conditional probabilities}
We want to compute the conditional probability $\Prob_\ctmc(\pi(t_\hor) = s ) \mid [\pi \models \rho])$ that the CTMC $\ctmc$ with initial state $s_I$ generates a path being in state $s$ at time $t_\hor$, conditioned on the evidence $\rho$.
Using Bayes' rule, we can characterize this conditional probability as follows (assuming $\frac{0}{0} = 0$, for brevity):
\begin{equation}
\begin{split}
    \label{eq:BayesRule}
    \Prob_\ctmc(\pi(t_\hor) = s \mid [\pi \models \rho]) 
    &= \frac{\Prob_\ctmc([\pi(t_\hor) = s] \cap [\pi \models \rho])}{\Prob_\ctmc(\pi \models \rho)}.
\end{split}
\end{equation}

\paragraph{Imprecise timings}
We extend evidence with uncertainty in the timing of each label.
The \emph{imprecisely timed evidence} (or \emph{imprecise evidence}) $\imphist = \tuple{ \impT_1, o_1 }, \ldots, \tuple{ \impT_\hor, o_\hor }$ is a sequence of $\hor$ labels and uncertain timings $\impT_i = \cup_{j=1}^q \imp{t}{j}$, with $\munderbar{t}_j \leq \bar{t}_j$ and $q \in \N$.
Observe that $\impT$ can model both singletons ($\impT_i = \{1,2,3\}$) and unions of intervals ($\impT_i = [1,1.5] \cup [2,2.5]$).
We require that $\max_{t \in \impT_i} (t) < \min_{t' \in \impT_{i+1}} (t')$ for all $i \in \{1,\ldots,d-1\}$, i.e., the order of the labels is known, despite the uncertainty in the observation times.
Again, we overload notation and denote the evidence at time $t$ by $\imphist(t) = o$, such that $\exists \tuple{ \impT,o } \in \imphist$ with $t \in \impT$.
Imprecise evidence induces a set of \emph{instances} of the evidence that only differ in the label times.
This set of instances is uncountably infinite whenever one of the imprecise timings $\impT$ is a continuous set.
Formally, the evidence $\rho = \tuple{ t_1, o_1 }, \ldots, \tuple{ t_\hor, o_\hor }$ is an instance of the imprecise evidence $\imphist$, written as $\rho \in \imphist$, if $t_i \in \impT_i$ for all $i=1,\ldots,\hor$.

\begin{example}
    An example of imprecise evidence for the CTMC in \cref{example:queue} is $\imphist = \tuple{[0.2, 0.8], \bigcdot{empty_color}}, \tuple{[1.4,2.1], \bigcdot{nonempty_color}}$.
    The precise evidence $\rho = \tuple{0.4, \bigcdot{empty_color}}, \tuple{1.9, \bigcdot{nonempty_color}}$ is an instance of $\imphist$, i.e., $\rho \in \imphist$.
    However, $\rho' = \tuple{0.1, \bigcdot{empty_color}}, \tuple{1.9, \bigcdot{nonempty_color}}$ and $\rho'' = \tuple{0.4, \bigcdot{nonempty_color}}, \tuple{1.9, \bigcdot{nonempty_color}}$ are not, i.e., $\rho' \notin \imphist$, $\rho'' \notin \imphist$, as the timings and labels do not match, respectively.
\end{example}

\paragraph{State-weights}
Let $\weight \colon S \to \Real_{\geq 0}$ be a \emph{state-weight function}, which assigns to each CTMC state $s \in S$ a non-negative weight.
The weight $w(s)$ represents a general measure of risk associated with each state $s \in S$, as used in~\cite{DBLP:conf/cav/JungesTS20}.
For example, $w(s)$ may represent the probability of reaching a set of target states $S_T$ from $s$ within some time horizon $h \geq 0$.
We then consider the following problem.
\begin{mdframed}
\begin{problem}[Weighted conditional reachability probability]
\label{problem}
Given a CTMC $\ctmc$, a state-weight function $\weight$, and the imprecisely timed evidence $\imphist$, compute the (maximal) weighted conditional reachability probability $\Weight(\imphist)$:
\begin{equation}
    \label{eq:problem}
    \Weight(\imphist) = \sup_{\rho \in \imphist} \, \sum_{s \in S} \Prob_\ctmc(\pi(t_\hor) = s \mid [\pi \models \rho]) \cdot \weight(s).
\end{equation}
\end{problem}
\end{mdframed}

\begin{example}
    \label{ex:weights}
    For the CTMC in \cref{example:queue}, consider the state-weight function that assigns to each state the probability of reaching state $s_0$ within time $t=0.1$.
    Then, the problem above is interpreted as: \emph{Given the imprecisely timed evidence $\imphist$, compute the probability (conditioned on $\imphist$) of reaching state $s_0$ within time $t=0.1$ (after the end of the evidence).}
\end{example}

Our overall workflow to solve \cref{problem} is summarized in \cref{fig:Approach} and consists of four blocks, which we discuss in \cref{sec:unfolding,sec:abstraction,sec:algorithm}, respectively.

\paragraph{Variations}
To instead minimize \cref{eq:problem}, we would swap every $\inf$ and $\sup$ (and $\max$ and $\min$) in the paper, but our general approach remains the same.
Furthermore, by setting $w(s) = 1$ for all $s \in S_T$ and zero otherwise, we can also compute the probability of being in a state in $S_T$ \emph{immediately} after the evidence.
Finally, we remark that \cref{problem} only considers events \emph{after} the end of the evidence.
This setting is motivated by applications where the exact system state is not observable, but actual system failures can be observed.
Thus, one can typically assume that the system has not failed yet and the problem as formalized in  \cref{problem} is to predict the conditional probability of a future system failure.

\begin{figure}[t!]
	\centering
\iftikzcompile
	\usetikzlibrary{calc}
\usetikzlibrary{arrows.meta}
\usetikzlibrary{positioning}
\usetikzlibrary{fit}
\usetikzlibrary{shapes}

\tikzstyle{node} = [rectangle, rounded corners, minimum width=3.0cm, text width=3.0cm, minimum height=1.1cm, text centered, draw=black, fill=plotblue!15]
\tikzstyle{nodebig} = [rectangle, rounded corners, minimum width=3.5cm, text width=3.5cm, minimum height=1.1cm, text centered, draw=black, fill=plotblue!15]

\resizebox{\linewidth}{!}{%
	\begin{tikzpicture}[node distance=5cm,->,>=stealth,line width=0.3mm,auto]
		
		\newcommand\xshift{-1cm}
		\newcommand\yshift{-3.9cm}
		
		\node (unfolding) [node] {\textbf{(1)} MDP unfolding \\ $\mdp = \Unfold(\ctmc, \timegraph_\imphist)$ \\ \textbf{(\cref{def:unfolded_mdp})}};
		\node (condition) [node, right of=unfolding, xshift=-1cm] {\textbf{(2)} Conditioning on the evidence $\imphist$ \\ \textbf{(\cref{def:conditioned_MDP})}};
		\node (abstraction) [nodebig, right of=condition, xshift=-0.7cm] {\textbf{(3)} Abstract iMDP \\ $\imdp = \Abstract(\mdp_{|\imphist}, \partition)$ \\ \textbf{(\cref{def:abstraction})}};
		\node (compute) [node, right of=abstraction, xshift=-0.7cm] {\textbf{(4)} Compute upper and lower bounds \\ \textbf{(\cref{lemma:computable_bounds})}};
  
		
		\node (in1) [above of=unfolding, yshift=\yshift] {};
		\node (in2) [above of=condition, yshift=\yshift] {};
		\node (in3) [above of=abstraction, yshift=\yshift] {};
		\node (out) [right of=compute, xshift=-1.2cm] {};
		
		
		\draw [->] (in1) -- (unfolding) node [pos=0.0, above, align=left, font=\sffamily\small] {1) CTMC $\ctmc$ \\ 2) State-weight function $\weight$ \\ 3) Imprecise evidence $\imphist$};
		
		\draw [->] ($(in3) + (-1cm, 0cm)$) -- ($(abstraction.north) + (-1cm, 0cm)$) node [pos=0.0, above, align=center, font=\sffamily\small] {Initial time \\ partition $\partition$};

            \draw [->] (compute) -- (out) node [pos=0.55, above, align=center, font=\sffamily\small] {Upper and \\ lower bounds \\ on $\Weight(\imphist)$};
		
		
		\draw [->] (unfolding) -- (condition) node [pos=0.5, above, align=center, font=\sffamily\small] {$\mdp$};
		
		\draw [->] (condition) -- (abstraction) node [pos=0.5, above, align=center, font=\sffamily\small] {$\mdp_{|\imphist}$};
		
		\draw [->] (abstraction) -- (compute) node [pos=0.5, above, align=center, font=\sffamily\small] {$\imdp$};		

            \draw [->] (compute.north) -- ($(compute.north) + (0, 0.4cm)$) -| (abstraction) node[pos=0.25, above, align=center, font=\sffamily\small] {Refinement (splitting \\ elements of $\partition$)};


		\draw [-,decorate,decoration={brace,amplitude=5pt,mirror,raise=3ex}] (-1.55cm, -0.5cm) -- (5.6cm, -0.5cm) node[midway, below, align=center, yshift=-0.6cm, font=\normalsize] {\textbf{\cref{sec:unfolding}}};
          \draw [-,decorate,decoration={brace,amplitude=5pt,mirror,raise=3ex}] (6.4cm, -0.5cm) -- (10.15cm, -0.5cm) node[midway, below, align=center, yshift=-0.6cm, font=\normalsize] {\textbf{\cref{sec:abstraction}}};
          \draw [-,decorate,decoration={brace,amplitude=5pt,mirror,raise=3ex}] (11cm, -0.5cm) -- (14.2cm, -0.5cm) node[midway, below, align=center, yshift=-0.6cm, font=\normalsize] {\textbf{\cref{sec:algorithm}}};
  
	\end{tikzpicture}
}
\fi
	\vspace{-2em}
	\caption{
		Conceptual workflow of our approach for solving \cref{problem}.
	}
	\label{fig:Approach}
\end{figure}
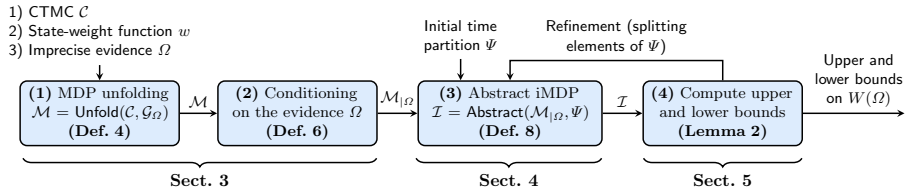

\subsection{Interval Markov decision processes}
\label{sec:preliminaries:MDPs}
We recap interval MDPs (iMDPs)~\cite{DBLP:journals/ai/GivanLD00} and define standard MDPs as special case.
We denote (i)MDP states by $q \in Q$, whereas CTMC states are denoted $s \in S$.
\begin{definition}[iMDP]
\label{def:iMDP}
An \emph{interval MDP} $\imdp$ is a tuple $\tuple{ Q, q, A, \calP }$, with $Q$ a set of states, $q \in Q$ the initial state, $A$ a set of actions, and where the uncertain transition function $\calP \colon Q \times A \times Q \rightharpoonup \interval \cup \{ [0,0] \}$ is defined over intervals $\interval = \{ [a,b] \mid a,b \in (0,1] \text{ and } a \leq b \}$.
The actions enabled in state $q \in Q$ are $A(q) \subseteq A$.
\end{definition}
The assumption that an interval cannot have a lower bound of $0$ except the $[0,0]$ interval is standard, see, e.g.,~\cite{DBLP:conf/cdc/WolffTM12,DBLP:conf/cav/PuggelliLSS13}.
An MDP is a special case of iMDP, where the upper and lower bounds coincide, i.e., $\calP(q,a,q') = [b,b], \, b \in [0,1]$ for all intervals, and each $\calP(q,a,\cdot) \in \distr{Q}$ is a distribution over states.
We denote an MDP as $\mdp = \tuple{ Q,q,A,P }$, with transition function $P \colon Q \times A \times Q \rightharpoonup [0,1]$.
For an MDP $\mdp$ with transition function $P$, we write $P \in \calP$ if for all $q,q' \in Q$ and $a \in A$ we have $P(q,a,q') \in \calP (q,a,q')$ and each $P(q, a, \cdot) \in \distr{Q}$. 
Fixing a transition function $P \in \calP$ for iMDP $\imdp$ yields an induced MDP $\imdp[P]$.

The nondeterminism in an iMDP $\imdp$ is resolved by a memoryless scheduler $\sched \colon Q \to A$, with $\sched \in \Sched_{\imdp}$ the set of all schedulers.
We denote a finite (i)MDP path by $\xi = q_0, \ldots, q_n \in \Xi_{\imdp}^\sched$, where $\Xi_{\imdp}^\sched$ is the set of all paths under scheduler $\sched$.
For the Markov chain induced by scheduler $\sched$ in $\imdp[P]$, we use the standard probability measure $\Prob_{\imdp[P]}^\sched$ over the smallest sigma-algebra containing the cylinder sets of all finite paths $\xi \in \Xi_{\imdp}^\sched$; see, e.g.,~\cite{BK08}.
If $\Sched_{\imdp}$ is a singleton (i.e., $\imdp$ has only one scheduler), we omit the script $\sched$ and simply write $\Prob_{\imdp[P]}$ and $\Xi_{\imdp}$.
For MDPs $\mdp$, we use the analogous notation with subscripts $\mdp$.
\section{Conditional Reachability with Imprecise Evidence}
\label{sec:unfolding}

In this section, we treat the first two blocks of \cref{fig:Approach}.
In \cref{sec:unfolding:MDP}, we \emph{unfold} the CTMC over the times in the imprecise evidence into an MDP.
The main result of this section, \cref{thm1:risk_on_MDP}, states that the conditional reachability on the CTMC in \cref{problem} is equal to the \emph{maximal} conditional reachability probabilities in the MDP over a \emph{subset of schedulers} (those that we call \emph{consistent}; see \cref{def:consistent_schedulers}).
In \cref{sec:unfolding:conditioning}, we use results from~\cite{DBLP:conf/tacas/BaierKKM14} to determine these conditional probabilities via unconditional reachability probabilities on a transformed version of the MDP.

\subsection{Unfolding the CTMC into an MDP}
\label{sec:unfolding:MDP}

We interpret the (precisely timed) evidence $\rho = \tuple{ t_1, o_1 },\ldots,\tuple{ t_\hor, o_\hor}$ as a directed graph that encodes the trivial progression over the time steps $t_1,\ldots,t_\hor$.
\begin{definition}[Evidence graph]
    \label{def:graph}
    An \emph{evidence graph} $\timegraph = \tuple{ \nodes, \edges }$ is a directed graph where each node $t \in \nodes \subseteq \Real_{> 0}$ is a point in time, and with directed edges $\edges \subset \{ t \to t' : t,t' \in \nodes \}$, such that $t' > t$ for all $t \to t' \in \edges$.
\end{definition}
The graph $\timegraph_\rho = \tuple{ \nodes_\rho, \edges_\rho }$ for the precise evidence $\rho$ has nodes $\nodes_\rho = \{ 0, t_1, \ldots, t_\hor, t_\star \}$ and edges $\edges_\rho = \{ t_{i-1} \to t_i : i = 2,\ldots,\hor \} \cup \{ 0 \to  t_1, t_\hor \to t_\star \}$.
As illustrated in \cref{fig:inspection_graph}, the graph $\timegraph_\rho$ has exactly one path, which follows the time points $t_1,\ldots,t_d$ of the evidence $\rho$ itself.
Likewise, we model the imprecise evidence $\imphist$ as a graph $\timegraph_\imphist$ which is the union of all graphs $\timegraph_\rho$ for all instances $\rho \in \imphist$, i.e.,
\begin{equation}
        \timegraph_\imphist 
        = \tuple{ \nodes_\imphist, \edges_\imphist }
        = \cup_{\rho \in \imphist} (\timegraph_\rho) 
        = \tuple{ \cup_{\rho \in \imphist} (\nodes_\rho), \cup_{\rho \in \imphist} (\edges_\rho) }.
\end{equation}
If $\imphist$ has infinitely many instances, then $\timegraph_\imphist$ has infinite branching.
Every path $t_0 t_1 \ldots t_\hor t_\star$ through graph $\timegraph_\imphist$ corresponds to the time points of the precise evidence $\rho = \tuple{ t_1, o_1 },\ldots,\tuple{ t_\hor, o_\hor} \in \imphist$ (and vice versa).

We denote the successor nodes of $t \in \nodes$ by $\post(t) = \{ t' \in \nodes : t \to t' \in \edges \}$. 
For example, the graph in \cref{fig:inspection_graph} has $\post(0) = t_1$, $\post(t_1) = t_2$ and $\post(t_2) = t_\star$.
We introduce the \emph{unfolding operator} $\mdp = \Unfold(\ctmc, \timegraph)$, which takes a CTMC $\ctmc$ and a graph $\timegraph$, and returns the \emph{unfolded MDP} $\mdp$ defined as follows.
\begin{definition}[Unfolded MDP]
    \label{def:unfolded_mdp}
    For a CTMC $\ctmc = \tuple{ S, s_I, \Delta, E, C, L }$ and a graph $\timegraph = \tuple{ \nodes, \edges }$, the \emph{unfolded MDP} $\Unfold(\ctmc, \timegraph) = \tuple{ Q, q_I, A, P }$ has states
    states $Q = S \times \nodes$, initial state $q_I = \tuple{ s_I, 0 }$,
    actions $A = \nodes$, and
    transition function $P$, which is defined for all $ \tuple{ s,t } \in Q$, $t' \in \post(t)$, $s' \in S$ as
    \begin{align}
        P\big( \tuple{ s,t }, t', \tuple{ s',t' } \big) = \begin{cases}
            \pr_s(t' - t)(s') \enskip & \text{if } t' \neq t_\star,
            \\
            \mathbbm{1}_{(s = s')} & \text{if } t' = t_\star,
        \end{cases}
    \end{align}
\end{definition}
The unfolding of the CTMC in \cref{fig:ctmc} over the graph in \cref{fig:inspection_graph} is shown in \cref{fig:unfolding}.
A state $\tuple{s,t} \in Q$ in the unfolded MDP is interpreted as being in CTMC state $s \in S$ at time $t$.
In state $\tuple{s,t}$, the set of enabled actions is $A(\tuple{s,t}) = \post(t) \subset \nodes$, and taking an action $t' \in \post(t)$ corresponds to \emph{deterministically} jumping to time $t'$. 
The effect of this action is \emph{stochastic} and determines the next CTMC state.
The transition probability $P( \tuple{ s,t }, t', \tuple{ s',t' } )$ for $t' \neq t_\star$ models the probability of starting in CTMC state $s \in S$ and being in state $s' \in S$ after time $t'-t$ has elapsed, which is precisely the transient probability $\pr_s(t' - t)(s')$ defined in \cref{sec:preliminaries}.
Finally, the (terminal) states $\tuple{s,t_\star}$ for all $s \in S$ are absorbing.

\paragraph{Interpretation of schedulers}
Every instance $\rho \in \imphist$ of the imprecise evidence $\imphist = \tuple{ \impT_1, o_1 }, \ldots, \tuple{ \impT_\hor, o_\hor }$ corresponds to fixing a precise time $t_i \in \impT_i$ for all $i=1,\ldots,\hor$.
For each such $\rho \in \imphist$, there exists a scheduler $\sched \in \Sched_\mdp$ for MDP $\mdp = \Unfold(\ctmc, \timegraph_\imphist)$ that induces a Markov chain which only visits those time points $t_1,\ldots,t_\hor$.
We call such a scheduler $\sched$ \emph{consistent} with the evidence $\rho$.
\begin{definition}[Consistent scheduler]
    \label{def:consistent_schedulers}
    A scheduler $\sched \in \Sched_\mdp$ is consistent with $\rho = \tuple{ t_1, o_1 }, \ldots, \tuple{ t_\hor, o_\hor } \in \imphist$, written as $\sched \consistent \rho$, if for all CTMC states $s \in S$:
    \begin{equation}
        \label{eq:def:consistent_schedulers}
        \sched(\tuple{ s, 0 }) = t_1,
        \quad
        \sched(\tuple{ s, t_{i} }) = t_{i+1} 
        \, \forall i \in \{0,\ldots,d-1\},
        \quad 
        \sched(\tuple{ s, t_d }) = t_\star.
    \end{equation}
    We denote the set of all consistent schedulers by $\Sched_\mdp^\mathsf{con} \subseteq \Sched_\mdp$.
\end{definition}
A consistent scheduler chooses the same action $\sched(\tuple{ s, t }) = \sched(\tuple{ s', t' })$ in any two MDP states $\tuple{ s, t }, \tuple{ s', t' } \in Q$ for which $t = t'$.
There is a one-to-one correspondence between choices $\rho \in \imphist$ and consistent schedulers: for every $\rho \in \imphist$, there exists a scheduler $\sched \in \Sched_\mdp^\mathsf{con}$ such that $\sched \sim \rho$, and vice versa.

\begin{example}
    \label{example:inconsistent_scheduler}
    Consider imprecise evidence $\imphist = \tuple{[0.2, 0.8], \bigcdot{empty_color}}, \tuple{[1.4,2.1], \bigcdot{nonempty_color}}$ for the CTMC in \cref{example:queue}.
    A scheduler with $\sched(\tuple{s_0, 0.4}) = 1.5$, $\sched(\tuple{s_1, 0.4}) = 1.8$ is inconsistent as it chooses different actions in MDP states with the same time.
\end{example}

\begin{remark}
    \label{remark:schedulers_precise_evidence}
    The unfolded MDP $\mdp' = \Unfold(\ctmc, \timegraph_\rho)$ for the precise evidence $\rho$ has only a single action enabled in every state (i.e., $\mdp'$ directly reduces to a discrete-time Markov chain).
    Hence, $\mdp'$ has only one scheduler, and $\Sched_{\mdp'}^\mathsf{con} = \Sched_{\mdp'}$.
\end{remark}

\paragraph{Conditional reachability on unfolded MDP}
As a main result, we show that $\Weight(\imphist)$ in \cref{problem} can be expressed as maximizing conditional reachability probabilities in the unfolded MDP $\mdp$ over the consistent schedulers $\Sched_\mdp^\mathsf{con} \subset \Sched_\mdp$.
\begin{theorem}
    \label{thm1:risk_on_MDP}
    For a CTMC $\ctmc$ and the imprecise evidence $\imphist$ with graph $\timegraph_\imphist$, let $\mdp = \Unfold(\ctmc, \timegraph_\imphist)$ be the unfolded MDP.
    Then, using the notation from \cref{sec:preliminaries:MDPs} (for the probability measure $\Prob_\mdp^\sched$ over paths $\xi \in \Xi_\mdp^\sched$), \cref{eq:problem} is rewritten as
    \begin{equation}
    \begin{split}
        \label{eq:thm1:risk_on_MDP}
        \Weight(\imphist) 
        &= 
        \sup_{\sched \in \Sched_\mdp^\mathsf{con}} \, \sum_{s \in S} \Prob_\mdp^\sched( \Finally\tuple{ s, t_\star } \mid [\xi \models \rho, \,\, \sched \sim \rho]) \cdot \weight(s).
    \end{split}
    \end{equation}
\end{theorem}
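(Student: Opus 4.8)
The plan is to reduce the claimed equality to a \emph{per-instance} identity and then lift it to the two suprema. Both sides of \cref{eq:thm1:risk_on_MDP} are suprema of the same weighted sum $\sum_{s \in S} (\cdots) \cdot \weight(s)$, taken over instances $\rho \in \imphist$ on the left (via \cref{eq:problem}) and over consistent schedulers $\sched \in \Sched_\mdp^\mathsf{con}$ on the right. By the one-to-one correspondence between instances $\rho \in \imphist$ and consistent schedulers $\sched \consistent \rho$ stated after \cref{def:consistent_schedulers}, it suffices to prove that for every such matched pair $(\rho, \sched)$ and every $s \in S$,
\[
    \Prob_\ctmc(\pi(t_\hor) = s \mid [\pi \models \rho])
    = \Prob_\mdp^\sched\big(\Finally \tuple{ s, t_\star } \mid [\xi \models \rho, \, \sched \consistent \rho]\big).
\]
Since $\sched$ is fixed and already satisfies $\sched \consistent \rho$ for the matched pair, the tag $\sched \consistent \rho$ in the conditioning is vacuous and merely records which $\rho$ is attached to $\sched$. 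The weighted sums then agree termwise, and taking the supremum over matched pairs yields the theorem.

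First I would characterize the Markov chain that a consistent scheduler $\sched \consistent \rho$ induces in $\mdp$. By \cref{def:consistent_schedulers}, $\sched$ moves deterministically through the times $0, t_1, \ldots, t_\hor, t_\star$, so every path in $\Xi_\mdp^\sched$ has the form $\tuple{ s_I, 0 }, \tuple{ s_1, t_1 }, \ldots, \tuple{ s_\hor, t_\hor }, \tuple{ s_\hor, t_\star }$, where the final step is the identity transition at $t_\star$ from \cref{def:unfolded_mdp}. Its probability is the telescoping product
\[
    \pr_{s_I}(t_1)(s_1) \cdot \textstyle\prod_{i=2}^{\hor} \pr_{s_{i-1}}(t_i - t_{i-1})(s_i).
\]
The central step is to show that this product equals the joint law $\mu_\rho(s_1, \ldots, s_\hor)$ of the CTMC states $(\pi(t_1), \ldots, \pi(t_\hor))$ at the sample times under initial state $s_I$. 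This follows from the Markov property of the CTMC together with the Chapman--Kolmogorov identity for the transient distributions $\pr_s(\cdot)$ defined in \cref{sec:preliminaries}: conditioning on the state at each $t_{i-1}$ makes the increment to $t_i$ depend only on that state, and the transition probabilities of $\mdp$ are defined to be exactly these transient probabilities.

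Next I would match the events on both sides. Reaching $\tuple{ s, t_\star }$ under $\sched$ is equivalent to $s_\hor = s$, i.e.\ to $\pi(t_\hor) = s$; and $\xi \models \rho$ holds iff $L(s_i) = o_i$ for all $i$, which is exactly the defining condition of $\pi \models \rho$, since consistency of $\pi$ with $\rho$ depends only on the states at $t_1, \ldots, t_\hor$. Because these events are measurable with respect to the sample-time states alone, summing $\mu_\rho$ over the matching tuples computes both the numerator and the denominator of \cref{eq:BayesRule} identically on the CTMC side and on the MDP side. Applying Bayes' rule, with the convention $\frac{0}{0} = 0$ for the degenerate case $\Prob_\ctmc(\pi \models \rho) = 0$, then gives the per-instance identity.

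The hard part is the central step of the second paragraph: rigorously arguing that marginalizing the full continuous-time path measure of the CTMC down to the finitely many sample-time states is lossless for these events (which holds precisely because $[\pi \models \rho]$ and $[\pi(t_\hor) = s]$ are cylinder events in $\pi(t_1), \ldots, \pi(t_\hor)$), and that the induced-MDP path measure reproduces exactly this finite-dimensional marginal via the telescoping product. The remaining work is bookkeeping: handling the deterministic $t_\star$-step, transferring the one-to-one correspondence to the level of the two suprema, and ensuring both sides degenerate consistently under the $\frac{0}{0}$ convention.
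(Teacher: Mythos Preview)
Your proposal is correct and follows essentially the same route as the paper: reduce to a per-instance identity between the CTMC conditional probability and the MDP conditional reachability under the matched consistent scheduler, establish that identity by matching both numerator and denominator of Bayes' rule via the product of transient probabilities (the paper formalizes this with an explicit state-trace partition of $\Pi_\rho$, which is exactly your ``marginalize to the sample-time states'' step), and then lift to the suprema using the one-to-one correspondence between instances and consistent schedulers. Your reading of the $\sched \consistent \rho$ tag in the conditioning as vacuous bookkeeping is also how the paper treats it, since its key lemma conditions only on $[\xi \models \rho]$.
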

\begin{proof}
    The proof is in 
    \ifappendix
        \cref{appendix:proof:trans} 
    \else
        \cite[Appendix~A]{Badings2024TACAS:extended}
    \fi
    and shows that for every instance $\rho \in \imphist$, the conditional transient probabilities in the CTMC are equivalent to conditional reachability probabilities in the unfolded MDP under a $\sched \sim \rho$ consistent to $\rho$.
    \qed
\end{proof}

\begin{figure}[t]
\begin{subfigure}[b]{0.49\linewidth}
    \centering
    \iftikzcompile
    \begin{tikzpicture}[node distance=0.8cm]

    \def\xshift{0.6cm}
      \node[state, initial,initial where=left, initial text=] (s0t0) {$s_0, 0$};
      \node[state, below of=s0t0] (s1t0) {$s_1, 0$};
      \node[state, below of=s1t0] (s2t0) {$s_2, 0$};
      \node[state, empty label, right of=s0t0, xshift=\xshift] (s0t1) {$s_0, t_1$};
      \node[state, nonempty label, below of=s0t1] (s1t1) {$s_1, t_1$};
      \node[state, nonempty label, below of=s1t1] (s2t1) {$s_2, t_1$};
      \node[state, empty label, right of=s0t1, xshift=\xshift] (s0t2) {$s_0, t_2$};
      \node[state, nonempty label, below of=s0t2] (s1t2) {$s_1, t_2$};
      \node[state, nonempty label, below of=s1t2] (s2t2) {$s_2, t_2$};
      \node[state, right of=s0t2, xshift=\xshift] (s0star) {$s_0, t_\star$};
      \node[state, below of=s0star] (s1star) {$s_1, t_\star$};
      \node[state, below of=s1star] (s2star) {$s_2, t_\star$};
      \begin{pgfonlayer}{bg}    
      \path[->,shorten <= -2pt+\pgflinewidth] 
                (s0t0) edge [] node [above] {} (s0t1)
                (s0t0) edge [] node [above] {} (s1t1)
                (s0t0) edge [] node [above] {} (s2t1)
                (s1t0) edge [] node [above] {} (s0t1)
                (s1t0) edge [] node [above] {} (s1t1)
                (s1t0) edge [] node [above] {} (s2t1)
                (s2t0) edge [] node [above] {} (s0t1)
                (s2t0) edge [] node [above] {} (s1t1)
                (s2t0) edge [] node [above] {} (s2t1)
                %
                (s0t1) edge [red, thick, bend right=10] node [] {} (s0t0)
                (s1t1) edge [] node [above] {} (s0t2)
                (s1t1) edge [] node [above] {} (s1t2)
                (s1t1) edge [] node [above] {} (s2t2)
                (s2t1) edge [] node [above] {} (s0t2)
                (s2t1) edge [] node [above] {} (s1t2)
                (s2t1) edge [] node [above] {} (s2t2)
                %
                (s0t2) edge [red, thick, bend right=25] node [] {} (s0t0)
                (s1t2) edge [] node [above] {} (s1star)
                (s2t2) edge [] node [above] {} (s2star);
    \end{pgfonlayer}
    \end{tikzpicture}
    \fi
    \caption{For $\rho = \langle t_1, \bigcdot{nonempty_color} \rangle, \langle t_2, \bigcdot{nonempty_color} \rangle$.}
    \label{fig:unfolded_inspection1}
    \end{subfigure}
\hfill 
\begin{subfigure}[b]{0.49\linewidth}
    \centering
    \iftikzcompile
    \begin{tikzpicture}[node distance=0.8cm]

    \def\xshift{0.6cm}
      \node[state, initial,initial where=left, initial text=] (s0t0) {$s_0, 0$};
      \node[state, below of=s0t0] (s1t0) {$s_1, 0$};
      \node[state, below of=s1t0] (s2t0) {$s_2, 0$};
      \node[state, empty label, right of=s0t0, xshift=\xshift] (s0t1) {$s_0, t_1$};
      \node[state, nonempty label, below of=s0t1] (s1t1) {$s_1, t_1$};
      \node[state, nonempty label, below of=s1t1] (s2t1) {$s_2, t_1$};
      \node[state, empty label, right of=s0t1, xshift=\xshift] (s0t2) {$s_0, t_2$};
      \node[state, nonempty label, below of=s0t2] (s1t2) {$s_1, t_2$};
      \node[state, nonempty label, below of=s1t2] (s2t2) {$s_2, t_2$};
      \node[state, right of=s0t2, xshift=\xshift] (s0star) {$s_0, t_\star$};
      \node[state, below of=s0star] (s1star) {$s_1, t_\star$};
      \node[state, below of=s1star] (s2star) {$s_2, t_\star$};
      \begin{pgfonlayer}{bg}    
      \path[->,shorten <= -2pt+\pgflinewidth] 
                (s0t0) edge [] node [above] {} (s0t1)
                (s0t0) edge [] node [above] {} (s1t1)
                (s0t0) edge [] node [above] {} (s2t1)
                (s1t0) edge [] node [above] {} (s0t1)
                (s1t0) edge [] node [above] {} (s1t1)
                (s1t0) edge [] node [above] {} (s2t1)
                (s2t0) edge [] node [above] {} (s0t1)
                (s2t0) edge [] node [above] {} (s1t1)
                (s2t0) edge [] node [above] {} (s2t1)
                (s0t1) edge [] node [above] {} (s0t2)
                (s0t1) edge [] node [above] {} (s1t2)
                (s0t1) edge [] node [above] {} (s2t2)
                (s1t1) edge [red, thick, bend right=20] node [] {} (s0t0)
                (s2t1) edge [red, thick, bend left=65] node [] {} (s0t0)
                %
                (s0t2) edge [red, thick, bend right=25] node [] {} (s0t0)
                (s1t2) edge [] node [above] {} (s1star)
                (s2t2) edge [] node [above] {} (s2star);
    \end{pgfonlayer}
    \end{tikzpicture}
    \fi
    \caption{For $\rho = \langle t_1, \bigcdot{empty_color} \rangle, \langle t_2, \bigcdot{nonempty_color} \rangle$.}
    \label{fig:unfolded_inspection2}
\end{subfigure}
    \caption{The unfolded MDP from \cref{fig:unfolding} conditioned on different precise evidences. States that do not agree with the evidence are looped back to the initial state.}
    \label{fig:unfolded_inspection}
\end{figure}
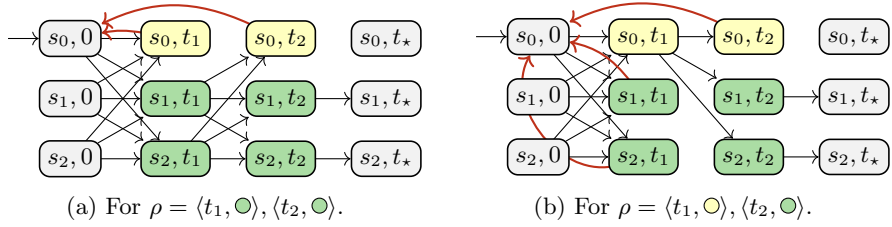

\subsection{Computing conditional probabilities in MDPs}
\label{sec:unfolding:conditioning}

We describe a transformation of the unfolded MDP to compute the conditional reachability probabilities in \cref{eq:thm1:risk_on_MDP}.
Intuitively, we \emph{refute} all paths through the MDP that do not agree with the labels in the evidence.
Specifically, we find the subset of MDP states $Q_\mathsf{reset}(\imphist) \subset Q$ that disagree with the evidence, defined as 
\begin{equation}
    \label{eq:inconsistent_states}
    Q_\mathsf{reset}(\Omega) = \big\{ \tuple{ s,t } \in Q : \, L(s) \neq \Omega(t) \big\} \subset Q.
\end{equation}
We \emph{reset} all states in $Q_\mathsf{reset}(\Omega)$ by adding transitions back to the initial state with probability one.
Formally, we define the \emph{conditioned MDP} $\mdp_{|\imphist}$ as follows.
\begin{definition}[Conditioned MDP]
    \label{def:conditioned_MDP}
    For $\mdp = \Unfold(\ctmc, \timegraph_\imphist) = \tuple{ Q, q_I, A, P }$, the conditioned MDP $\mdp_{|\imphist} = \tuple{ Q, q_I, A, P_{|\Omega} }$ has the same states and actions, but the transition function is defined for all $ \tuple{ s,t } \in Q$, $t' \in \post(t)$, $s' \in S$ as
    \begin{align}
        \label{eq:conditioned_MDP}
        P_{|\Omega} \big( \tuple{ s,t }, t', \tuple{ s', t' } \big) = \begin{cases}
            P \big( \tuple{ s,t }, t', \tuple{ s', t' } \big) \! \! & \text{if } \tuple{ s,t } \notin Q_\mathsf{reset}(\imphist),
            \\
            \mathbbm{1}_{(s' = s_I)} & \text{if } \tuple{ s,t } \in Q_\mathsf{reset}(\imphist).
        \end{cases}
    \end{align}
\end{definition}
Two examples of conditioning on precise evidence are shown in \cref{fig:unfolded_inspection}.
Compared to \cref{fig:unfolding}, we removed all probability mass over paths that are not consistent with the evidence and normalized the probabilities for all other paths.
The following result from \cite{DBLP:conf/tacas/BaierKKM14} shows that conditional reachabilities in the unfolded MDP are equal to \emph{unconditional} reachabilities in the conditioned MDP.
\begin{lemma}[Thm. 1 in \cite{DBLP:conf/tacas/BaierKKM14}]
    \label{lemma:Baier}
    For the imprecise evidence $\imphist$, unfolded MDP $\mdp = \Unfold(\ctmc, \timegraph_\imphist)$, and conditioned MDP $\mdp_{|\imphist}$ defined by \cref{def:conditioned_MDP}, it holds that
    \begin{align}
        \label{eq:lemma:Baier}
        \Prob_\mdp^\sched( \Finally\tuple{ s, t_\star } \mid [\xi \models \rho, \,\, \sched \sim \rho])
        \, = \,
        \Prob_{\mdp_{|\imphist}}^\sched( \Finally\tuple{ s, t_\star } ) \quad \forall \sched \in \Sched_\mdp \,\, \forall s \in S.
    \end{align}
\end{lemma}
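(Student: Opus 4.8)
The plan is to prove the identity separately for each fixed scheduler $\sched \in \Sched_\mdp$, since both sides of \cref{eq:lemma:Baier} are evaluated under a single $\sched$; this reduces the claim to a statement about the two discrete-time Markov chains that $\sched$ induces on $\mdp$ and on $\mdp_{|\imphist}$. The first step is to recognize that $\mdp_{|\imphist}$ is exactly the conditioning transformation of \cite{DBLP:conf/tacas/BaierKKM14}: by \cref{eq:inconsistent_states} the states in $Q_\mathsf{reset}(\imphist)$ are precisely those whose label contradicts the evidence, so a path of $\mdp$ satisfies the condition $[\xi \models \rho]$ if and only if it never enters $Q_\mathsf{reset}(\imphist)$, and the reset transitions of \cref{def:conditioned_MDP} redirect every condition-violating path back to $q_I$. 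Hence the lemma is an instance of \cite[Thm.~1]{DBLP:conf/tacas/BaierKKM14}, and it suffices to reproduce that argument in our notation, phrasing ``consistent'' as ``avoids $Q_\mathsf{reset}(\imphist)$'' so that both $c$ and $c_s$ below are well-defined for every scheduler.

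The core of the argument is a one-step renewal equation for $f_s \coloneqq \Prob_{\mdp_{|\imphist}}^\sched(\Finally\tuple{s,t_\star})$. Writing $c \coloneqq \Prob_\mdp^\sched([\xi \models \rho])$ for the probability that a single run of $\mdp$ is consistent and $c_s \coloneqq \Prob_\mdp^\sched([\xi \models \rho] \cap \Finally\tuple{s,t_\star})$ for the probability that it is consistent and ends in $\tuple{s,t_\star}$, I would observe that, starting from $q_I$, a run of $\mdp_{|\imphist}$ falls into three cases: with probability $c_s$ it is consistent and reaches $\tuple{s,t_\star}$ (immediate success); with probability $c - c_s$ it is consistent but is absorbed in a different terminal state $\tuple{s',t_\star}$, $s' \neq s$ (contributing nothing to $f_s$); and with probability $1 - c$ it enters $Q_\mathsf{reset}(\imphist)$ and is sent back to $q_I$. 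Since the scheduler is memoryless and the reset restores the exact initial configuration $q_I$, the strong Markov property gives that the continuation after a reset is again distributed as a fresh run, so
\begin{equation*}
    f_s = c_s + (1-c)\, f_s .
\end{equation*}

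Solving for $f_s$ when $c > 0$ yields $f_s = c_s / c$, and by the definition of conditional probability $c_s/c = \Prob_\mdp^\sched(\Finally\tuple{s,t_\star} \mid [\xi \models \rho])$, which for $\sched \sim \rho$ is the left-hand side of \cref{eq:lemma:Baier}; the convention $\tfrac{0}{0} = 0$ handles $c = 0$, where the chain resets forever almost surely, $\tuple{s,t_\star}$ is never reached, and both sides vanish. The main obstacle is making the renewal step rigorous: one must argue measure-theoretically that, conditioned on a reset occurring, the post-reset trajectory is an independent copy of a run from $q_I$, and that the paths suffering infinitely many resets form a null set, so that $f_s$ genuinely equals $c_s/c$ rather than leaving residual probability mass unaccounted for. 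Both facts follow from the memorylessness of $\sched$ together with the strong Markov property at the successive (almost surely finite) reset times, but they are the points that require care; the remaining identities are routine manipulations of the induced Markov-chain measure on cylinder sets.
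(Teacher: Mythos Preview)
The paper does not prove this lemma at all: it is stated as an imported result (Theorem~1 of \cite{DBLP:conf/tacas/BaierKKM14}) and used without further justification. Your proposal therefore goes beyond what the paper does, supplying the actual argument behind the cited theorem.

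Your renewal argument is correct and is indeed the standard proof of the reset construction: fixing a memoryless scheduler reduces both sides to Markov chains, the condition $[\xi\models\rho]$ coincides with ``never enter $Q_\mathsf{reset}(\imphist)$'', and the equation $f_s = c_s + (1-c)f_s$ follows from the strong Markov property at the first reset time. The one point worth tightening is the scope of ``$\forall\sched\in\Sched_\mdp$'' in the lemma statement: the left-hand side as written is only literally well-typed for consistent schedulers (otherwise there is no $\rho$ with $\sched\sim\rho$), and you handle this by reinterpreting the conditioning event as avoidance of $Q_\mathsf{reset}(\imphist)$, which is well-defined for every memoryless $\sched$. That reinterpretation is exactly what the reset construction needs and is how the lemma is actually used downstream in \cref{thm2:problem_solution}, so your reading is the right one; you might just make this step explicit rather than leaving it in the plan paragraph.
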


Finally, combining \cref{lemma:Baier} with \cref{thm1:risk_on_MDP} directly expresses the conditional reachability $\Weight(\imphist)$ in terms of reachability probabilities on the conditioned MDP.
\begin{theorem}
    \label{thm2:problem_solution}
    Given a CTMC $\ctmc$, a state-weight function $\weight$, and the imprecisely timed evidence $\imphist$, let $\mdp = \Unfold(\ctmc, \timegraph_\imphist)$.
    Then, it holds that
    \begin{equation}
    \begin{split}
        \label{eq:thm2}
        \Weight(\imphist)
        &= 
        \sup_{\sched \in \Sched_\mdp^\mathsf{con}} \sum_{s \in S} \Prob^\sched_{\mdp_{|\imphist}}( \Finally\tuple{ s, t_\star } ) \cdot \weight(s).
    \end{split}
    \end{equation}
\end{theorem}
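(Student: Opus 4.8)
The plan is to obtain \cref{eq:thm2} as an immediate consequence of chaining \cref{thm1:risk_on_MDP} and \cref{lemma:Baier}: the genuine content has already been established in those two results, so this proof is essentially a term-by-term substitution rather than a fresh argument.

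First I would recall the characterization from \cref{thm1:risk_on_MDP}, which expresses $\Weight(\imphist)$ as a supremum over the consistent schedulers $\Sched_\mdp^\mathsf{con}$ of a weighted sum of the \emph{conditional} reachability probabilities $\Prob_\mdp^\sched(\Finally\tuple{s, t_\star} \mid [\xi \models \rho, \,\, \sched \sim \rho])$. By the one-to-one correspondence between consistent schedulers and instances noted after \cref{def:consistent_schedulers}, each $\sched \in \Sched_\mdp^\mathsf{con}$ determines a unique $\rho \in \imphist$ with $\sched \sim \rho$, so every term in the inner sum is well-defined. Next I would invoke \cref{lemma:Baier}, which holds for \emph{all} $\sched \in \Sched_\mdp$ and all $s \in S$, to rewrite each such conditional probability as the corresponding \emph{unconditional} reachability probability $\Prob_{\mdp_{|\imphist}}^\sched(\Finally\tuple{s, t_\star})$ on the conditioned MDP. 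Since this equality is pointwise in both $\sched$ and $s$, I can substitute it inside the inner sum over $s$ without altering the value; and because the substitution is valid for every scheduler in the common index set $\Sched_\mdp^\mathsf{con} \subseteq \Sched_\mdp$, the supremum of the rewritten expression coincides with that of the original. This produces exactly \cref{eq:thm2}.

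The only point requiring care is bookkeeping rather than a mathematical obstacle: one must confirm that the scheduler index set and the conditioning event $[\xi \models \rho, \,\, \sched \sim \rho]$ line up across the two cited results. Here this is immediate, since \cref{lemma:Baier} is stated over the full scheduler set $\Sched_\mdp$, so restricting to the consistent subset used in \cref{thm1:risk_on_MDP} needs no extra justification, and the matching between each $\rho$ and its scheduler $\sched$ is already guaranteed by the correspondence for consistent schedulers. Consequently the proof does not reopen any of the measure-theoretic or transient-probability reasoning underlying the earlier statements.
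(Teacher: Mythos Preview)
Your proposal is correct and matches the paper's own approach: the paper states that \cref{thm2:problem_solution} follows directly by combining \cref{thm1:risk_on_MDP} with \cref{lemma:Baier}, which is exactly the term-by-term substitution you describe. The bookkeeping point you raise about the scheduler index sets lining up is indeed the only thing to check, and it goes through for the reason you give.
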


Solving Problem~1 with precisely timed evidence is now straightforward by solving a finite DTMC, see \cref{remark:schedulers_precise_evidence}. 
Furthermore, if the imprecise evidence has finitely many instances, then the MDP is finite. 
A naive approach to optimize over the consistent schedulers is enumeration, which we discuss in details \cref{sec:algorithm}.

\begin{remark}[Variations on \cref{problem}]
\label{remark:variation}
With minor modifications to our approach, we can compute, e.g., the likelihood that a CTMC generates precise evidence $\rho$.
Concretely, we define a transformed version $\mdp_\rho$ of the unfolded MDP in which all states in $Q_\mathsf{reset}$ are absorbing.
We discuss this variation in 
\ifappendix
    \cref{appendix:path_probabilities}.
\else
    \cite[Appendix~C]{Badings2024TACAS:extended}
\fi
\end{remark}
\section{Abstraction of Conditioned MDPs}
\label{sec:abstraction}

For imprecisely timed evidence with \emph{infinitely many instances} (e.g., imprecise timings over intervals), the conditioned MDP from \cref{sec:unfolding} has infinitely many states and actions.
In this section, we treat block (3) of \cref{fig:Approach} and propose an abstraction of this continuous MDP into a finite interval MDP (iMDP).
Similar to game-based abstractions~\cite{DBLP:journals/fmsd/KattenbeltKNP10,DBLP:conf/tacas/HahnHWZ10,DBLP:conf/qest/HahnNPWZ11}, we capture abstraction errors as nondeterminism in the transition function of the iMDP.
Robust reachability probabilities in the iMDP yield sound bounds on the conditional reachability $\Weight(\imphist)$.
The crux of our abstraction is to create a finite \emph{partition} of the (infinite) sets of uncertain timings in the evidence, as illustrated by \cref{fig:time_partition}.
\begin{figure}[t]
\begin{subfigure}[b]{0.47\linewidth}
    \centering
    \iftikzcompile
    \begin{tikzpicture}[xscale=1.5, yscale=0.8]
    
    \draw[shift={(0,0)},color=black] (0pt,0pt) -- (0pt,-3pt) node[below] {$0$};
    \draw[shift={(1,0)},color=black] (0pt,0pt) -- (0pt,-3pt) node[below] {};
    \draw[shift={(2,0)},color=black] (0pt,0pt) -- (0pt,-3pt) node[below] {};

    \draw[thick, shift={(0,0)},color=red] (0pt,5pt) -- (0pt,-0pt);
    \draw[shift={(2.8,0)},color=black] (0pt,0pt) -- (0pt,-3pt) node[below] 
    {$t_\star$};
    \draw[thick, shift={(2.8,0)},color=red] (0pt,5pt) -- (0pt,-0pt);

    \foreach \from/\to/\txt in 
    {0.2/0.8/$\tilde\impT_1^1$,
     1.4/2.1/$\tilde\impT_2^1$}
    {
    \fill[red!20] (\from,0pt) rectangle (\to,5pt);
    \draw[thick, shift={(\from,0)},color=red] (0pt,5pt) -- (0pt,-0pt);
    \draw[thick, shift={(\to,0)},color=red] (0pt,5pt) -- (0pt,-0pt);
    \draw [red,decorate,decoration={brace,amplitude=5pt,mirror,raise=2ex}] (\from+0.02,0.2) -- (\to-0.02,0.2) node[red, midway, below, yshift=-0.4cm]{\txt};
    }

    \draw[-latex] (0,0) -- (3,0) ; 
    \foreach \x in  {0} 
    \draw[shift={(\x,0)},color=black] (0pt,3pt) -- (0pt,-3pt);
    
    \path[-stealth] 
    \foreach \x/\xx in {0/0.5,0.5/1.75,1.75/2.8}{
    (\x,0.2) edge [bend left=70] node [] {} (\xx-0.02,0.2)
    };
    
    \end{tikzpicture}
    \fi
    \caption{Coarsest time partition.}
    \label{fig:time_partition_coarse}
    \end{subfigure}
\hfill 
\begin{subfigure}[b]{0.47\linewidth}
    \centering
    \iftikzcompile
    \begin{tikzpicture}[xscale=1.5, yscale=0.8]
    
    \draw[shift={(0,0)},color=black] (0pt,0pt) -- (0pt,-3pt) node[below] {$0$};
    \draw[shift={(1,0)},color=black] (0pt,0pt) -- (0pt,-3pt) node[below] {};
    \draw[shift={(2,0)},color=black] (0pt,0pt) -- (0pt,-3pt) node[below] {};

    \draw[thick, shift={(0,0)},color=red] (0pt,5pt) -- (0pt,-0pt);
    \draw[shift={(2.8,0)},color=black] (0pt,0pt) -- (0pt,-3pt) node[below] 
    {$t_\star$};
    \draw[thick, shift={(2.8,0)},color=red] (0pt,5pt) -- (0pt,-0pt);

    \foreach \from/\to/\txt in 
    {0.2/0.5/$\tilde\impT_1^1$,
     0.5/0.8/$\tilde\impT_1^2$,
     1.4/1.75/$\tilde\impT_2^1$,
     1.75/2.1/$\tilde\impT_2^2$}
    {
    \fill[red!20] (\from,0pt) rectangle (\to,5pt);
    \draw[thick, shift={(\from,0)},color=red] (0pt,5pt) -- (0pt,-0pt);
    \draw[thick, shift={(\to,0)},color=red] (0pt,5pt) -- (0pt,-0pt);
    \draw [red,decorate,decoration={brace,amplitude=5pt,mirror,raise=2ex}] (\from+0.02,0.2) -- (\to-0.02,0.2) node[red, midway, below, yshift=-0.4cm]{\txt};
    }

    \draw[-latex] (0,0) -- (3,0) ; 
    \foreach \x in  {0} 
    \draw[shift={(\x,0)},color=black] (0pt,3pt) -- (0pt,-3pt);
    
    \path[-stealth] 
    \foreach \x/\xx in 
    {0/0.35,
     0/0.65,
     0.40/1.55,
     0.35/1.90,
     0.70/1.50,
     0.65/1.85,
     1.6/2.8,
     1.925/2.8}
    {
    (\x,0.2) edge [bend left=70] node [] {} (\xx-0.02,0.2)
    };
    
    \end{tikzpicture}
    \fi
    \caption{Refined time partition.}
    \label{fig:time_partition_refined}
\end{subfigure}
    \caption{Two partitions of imprecise evidence $\imphist = \tuple{[0.2, 0.8], o_1}, \tuple{[1.4, 2.1], o_2}$. The partition in (a) consists of two elements, such that $\tilde{\impT}_1^1 = [0.2, 0.8]$ and $\tilde{\impT}_2^1 = [1.4, 2.1]$, where (b) refines this to $\tilde{\impT}_1^1 \cup \tilde{\impT}_1^2 = [0.2, 0.8]$ and $\tilde{\impT}_2^1 \cup \tilde{\impT}_2^2 = [1.4, 2.1]$.}
    \label{fig:time_partition}
\end{figure}
\begin{definition}[Time partition]
    \label{def:partition}
    A \emph{time partition} $\partition$ of the imprecise evidence $\imphist = \tuple{\impT_1, o_1}, \ldots, \tuple{\impT_\hor, o_\hor}$ is a set $\partition = \cup_{i=1}^d \Partition(\impT_i) \cup \{0, t_\star\}$, where each $\Partition(\impT_i) = \{\impT_i^1, \ldots, \impT_i^{n_i}\}$ is a finite partition\footnote{A partition $\Partition(X) = (X_1, \ldots, X_n)$ covers $X$ (i.e., $X = \cup_{i=1}^n X_i$) and the interior of each element is disjoint (i.e., $\text{int}(X_i) \cap \text{int}(X_j) = \varnothing, \enskip i,j \in \{1,\ldots,n\}, \, i \neq j$).} of $\impT_i$ into $n_i \in \N$ elements.
\end{definition}
With abuse of notation, the element of $\partition$ containing time $t$ is $\partition(t) \in \partition$, and $\partition^{-1}(\psi) = \{ t : \partition(t) = \psi \}$ is the set of times mapping to $\psi \in \partition$.
As shown by \cref{fig:time_partition}, for each $i \in \{1,\ldots,\hor\}$, the sets $\tilde\impT_i^1, \ldots, \tilde\impT_i^{n_i}$ are a partition of the set $\impT_i$.

To illustrate the abstraction, let $\tuple{s,t} \xrightarrow{t' : P'} \tuple{s',t'}$ denote the MDP transition from state $\tuple{s,t} \in Q$, under action $t' \in A(\tuple{s,t})$ to state $\tuple{s',t'} \in Q$, which has probability $P'$.
With this notation, we can express any MDP path as
\begin{alignat}{7}
    \label{eq:MDP_path}
    \tuple{s_I, 0} 
    &\xrightarrow{t : P}
    &&\tuple{s,t}
    &&\xrightarrow{t' : P'}
    &&\tuple{s',t'}
    &&\xrightarrow{t'' : P''}
    \cdots
    &&\xrightarrow{t''' : P'''}
    &&\tuple{s,t_\star}. \\
    \intertext{For every element $\psi \in \partition$ of partition $\partition$, the abstraction merges all MDP states $\tuple{s,t} \in Q$ for which the time $t$ belongs to the element $\psi$, that is, for which $t \in \partition^{-1}(\psi)$.
    Thus, we merge infinitely many MDP states into finitely many abstract states.
    The MDP path in \cref{eq:MDP_path} matches the next path in the abstraction:}
    \label{eq:iMDP_path}
    \tuple{s_I, 0} 
    &\xrightarrow{\impT : \calP} \,
    &&\tuple{s,\impT}
    &&\xrightarrow{\impT' : \calP'} \,
    &&\tuple{s',\impT'}
    &&\xrightarrow{\impT'' : \calP''}\,
    \cdots
    &&\xrightarrow{\impT''' : \calP'''} \,
    &&\tuple{s,t_\star},
\end{alignat}
where each $t \in \impT$, and each $\calP$ is a \emph{set of probabilities}.
The abstraction contains the behavior of the continuous MDP if $P \in \calP$ at every step in \cref{eq:MDP_path,eq:iMDP_path}, see, e.g.,~\cite{DBLP:conf/lics/JonssonL91}.
The following iMDP abstraction satisfies these requirements.
\begin{definition}[iMDP abstraction]
    \label{def:abstraction}
    For a conditioned MDP $\mdp_{|\imphist} = \tuple{ Q, q_I, A, P }$ and a time partition $\partition$ of $\imphist$, the iMDP abstraction $\imdp = \Abstract(\mdp_{|\imphist}, \partition) = \tuple{ \tilde{Q}, q_I, \tilde{A}, \calP }$, with states $\tilde{Q} = \big\{ \tuple{s, \partition(t)} : \tuple{s,t} \in Q \big\}$, actions $\tilde{A} = \big\{ \partition(t) : t \in A \big\}$, and uncertain transition function $\calP$ defined for all $\tuple{ s, \impT }, \tuple{ s', \impT' } \in \tilde{Q}$ as
        \begin{equation}
        \begin{split}
            \label{eq:probability_intervals}
            \calP \big( \tuple{ s,\impT }, \impT', \tuple{ s',\impT' } \big) \enskip = \enskip
            \closure\Big(
            \!\!\!\!\! \bigcup_{t \in \partition^{-1}(\impT), t' \in \partition^{-1}(\impT')}  \!\!\!\!\!
            P\big( \tuple{ s,t }, t', \tuple{ s', t'} \big)\enskip
            \Big),
        \end{split}
        \end{equation}
        where $\closure(x) = [\min(x), \max(x)]$ is the interval closure of $x$.
\end{definition}
%

\tikzset{box around red/.style={
    thick, draw=red, rounded corners,
    prefix after command= {\pgfextra{\tikzset{every label/.style={red, anchor=south}}}}
}}
\tikzset{box around blue/.style={
    thick, draw=plotblue, rounded corners,
    prefix after command= {\pgfextra{\tikzset{every label/.style={plotblue, anchor=south}}}}
}}

\newcommand{\spacing}{-0.3em}

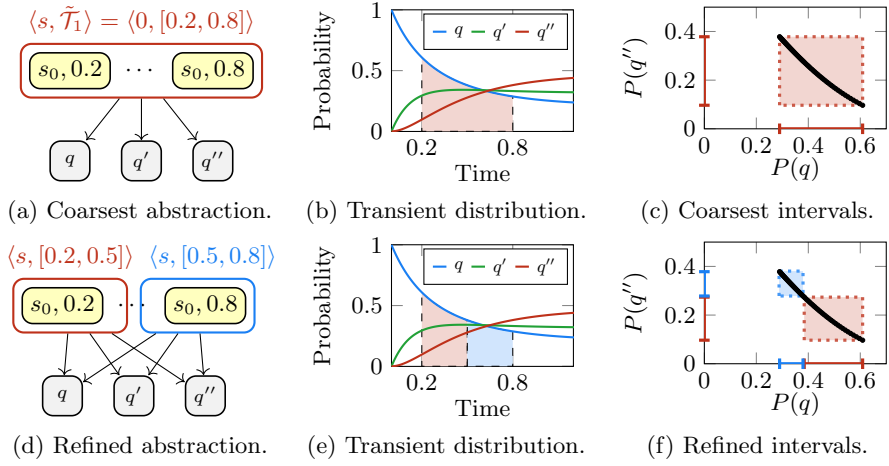
\begin{figure}[t]
\begin{subfigure}[b]{0.32\linewidth}
    \centering
    \iftikzcompile
    \begin{tikzpicture}[node distance=1.2cm]

    \def\xshift{0.3cm}
    \def\xshiftB{-0.25cm}
    \def\yshift{0cm}
      \node[state, empty label, xshift=\xshift] (s0t1a) {$s_0, 0.2$};
      \node[right of=s0t1a, xshift=\xshiftB] (s0t1b) {$\cdots$};
      \node[state, empty label, right of=s0t1b, xshift=\xshiftB] (s0t1c) {$s_0, 0.8$};
      \node[box around red, fit=(s0t1a) (s0t1c), label={above:${\tuple{s, \tilde\impT_1} = \tuple{ 0, [0.2, 0.8] }}$}] (s0T1) {};
      %
      \node[state, medium, below of=s0t1a, yshift=\yshift] (s0t2) {$q$};
      \node[state, medium, below of=s0t1b, yshift=\yshift] (s1t2) {$q'$};
      \node[state, medium, below of=s0t1c, yshift=\yshift] (s2t2) {$q''$};
      
      \path[->] (s0T1) edge [] node [left] {} (s0t2)
                (s0T1) edge [] node [left] {} (s1t2)  
                (s0T1) edge [] node [right] {} (s2t2);   
                
    \end{tikzpicture}
    \fi
    \caption{Coarsest abstraction.}
\label{fig:abstraction_states}
\end{subfigure}
\begin{subfigure}[b]{0.33\linewidth}
    \centering
    \iftikzcompile
    \begin{tikzpicture}[baseline]
\definecolor{color1}{rgb}{0.1,0.498039215686275,0.9549019607843137}
\begin{axis}[
        width=4cm,
        height=3.2cm,
        ymin=-0.001,
        ymax=1.001,
        xmin=0,
        xmax=120,
        xlabel={Time},
        xtick={20, 80},
        xticklabels={0.2, 0.8},
        x label style={at={(axis description cs:0.5,-0.2)}},
        ylabel={Probability},
        ytick={0,0.5,1.0},
        legend columns=-1,
        legend entries={$q$, $q'$, $q''$},
        legend pos=north east,
        legend image post style={xscale=0.3},
        legend style={nodes={scale=0.75, transform shape},
                      column sep=0.3ex,
                      row sep=0pt},
]

\path[name path=axis] (axis cs:0,0) -- (axis cs:120,0);

\addplot[name path=q0, thick, mark=none, color=plotblue] table[x=t, y=q0, col sep=comma] {Figures/Abstraction/Data/ctmc_queue_curves.csv};

\addplot[name path=q1, thick, mark=none, color=darkgreen] table[x=t, y=q1, col sep=comma] {Figures/Abstraction/Data/ctmc_queue_curves.csv};

\addplot[name path=q2, thick, mark=none, color=red] table[x=t, y=q2, col sep=comma] {Figures/Abstraction/Data/ctmc_queue_curves.csv};

\addplot[draw=black, dashed, fill=red!20] fill between[of=q0 and axis,soft clip={domain=20:80}];

\end{axis}
\end{tikzpicture}
    \vspace{\spacing}
    \fi
    \caption{Transient distribution.}
\label{fig:abstraction_transient}
\end{subfigure}
\begin{subfigure}[b]{0.33\linewidth}
    \centering
    \iftikzcompile
    \begin{tikzpicture}
    \begin{axis}[%
        width=4cm,
        height=3.2cm,
        xmin=0,
        xmax=0.7,
        ymin=0.0,
        ymax=0.5,
        xlabel={$P(q)$},
        ylabel={$P(q'')$},
        xtick={0,0.2,0.4,0.6},
        ytick={0,0.2,0.4,0.6},
        x label style={at={(axis description cs:0.5,-0.16)}},
    ]
    
    \addlegendimage{dashed, very thick, color=red}
    \addlegendimage{dotted, very thick, color=plotblue}
    
    \addplot[only marks,fill=black,mark size=0.8pt] table {Figures/Abstraction/Data/points.dat}; 

    \addplot[mark=|, red, very thick] table[] { 
    x                       y
    0.28950187683227596     0 
    0.609037086538144       0 
    }; 
    \addplot[mark=-, red, very thick] table[] { 
    x                       y
    0                       0.09677975126692534
    0                       0.3783307322408702 
    }; 
    
    \coordinate (low) at (axis cs:0.28950187683227596, 0.09677975126692534) {};
    \coordinate (upp) at (axis cs:0.609037086538144, 0.3783307322408702) {};
    
    \end{axis}
    
    \begin{pgfonlayer}{bg}
        \filldraw[dotted, color=red!80, fill=red!20, very thick] (low) rectangle (upp);
    \end{pgfonlayer}
    
    \end{tikzpicture}
    \vspace{\spacing}
    \fi
    \caption{Coarsest intervals.}
\label{fig:abstraction_intervals}
\end{subfigure}


\begin{subfigure}[b]{0.32\linewidth}
    \centering
    \iftikzcompile
    \begin{tikzpicture}[node distance=1.2cm]

    \def\xshift{0.3cm}
    \def\xshiftB{-0.25cm}
    \def\yshift{0cm}
    
    \node[state, empty label, xshift=\xshift] (s0t1a) {$s_0, 0.2$};
    \node[right of=s0t1a, xshift=\xshiftB] (s0t1b) {$\cdots$};
    \node[state, empty label, right of=s0t1b, xshift=\xshiftB] (s0t1c) {$s_0, 0.8$};
    %
    \node[state, medium, below of=s0t1a, yshift=\yshift] (s0t2) {$q$};
    \node[state, medium, below of=s0t1b, yshift=\yshift] (s1t2) {$q'$};
    \node[state, medium, below of=s0t1c, yshift=\yshift] (s2t2) {$q''$};

    \coordinate (s0_left)  at ($(s0t1b) + (-0.2cm, 0cm)$);
    \coordinate (s0_right) at ($(s0t1b) + (0.2cm, 0cm)$);
    
    \node[box around red, fit=(s0t1a) (s0_left), label={[anchor=south]above:${\tuple{ s, [0.2, 0.5] }}$}] (s0T1) {};    
    \node[box around blue, fit=(s0t1c) (s0_right), label={[anchor=south, xshift=0.2cm]above:${\tuple{ s, [0.5, 0.8] }}$}] (s0T2) {};
    
    \path[->] 
        (s0T1) edge [] node [above] {} (s0t2)
        (s0T1) edge [] node [above] {} (s1t2)  
        (s0T1) edge [] node [above] {} (s2t2)
        (s0T2) edge [] node [above] {} (s0t2)
        (s0T2) edge [] node [above] {} (s1t2)  
        (s0T2) edge [] node [above] {} (s2t2);   
    \end{tikzpicture}
    \fi
    \caption{Refined abstraction.}
\label{fig:abstraction_states_refined}
\end{subfigure}
\begin{subfigure}[b]{0.33\linewidth}
    \centering
    \iftikzcompile
    \begin{tikzpicture}[baseline]
\definecolor{color1}{rgb}{0.1,0.498039215686275,0.9549019607843137}
\begin{axis}[
        width=4cm,
        height=3.2cm,
        ymin=-0.001,
        ymax=1.001,
        xmin=0,
        xmax=120,
        xlabel={Time},
        xtick={20, 80},
        xticklabels={0.2, 0.8},
        x label style={at={(axis description cs:0.5,-0.2)}},
        ylabel={Probability},
        ytick={0,0.5,1.0},
        legend columns=-1,
        legend entries={$q$, $q'$, $q''$},
        legend pos=north east,
        legend image post style={xscale=0.3},
        legend style={nodes={scale=0.75, transform shape},
                      column sep=0.3ex,
                      row sep=0pt},
]

\path[name path=axis] (axis cs:0,0) -- (axis cs:120,0);

\addplot[name path=q0, thick, mark=none, color=plotblue] table[x=t, y=q0, col sep=comma] {Figures/Abstraction/Data/ctmc_queue_curves.csv};

\addplot[name path=q1, thick, mark=none, color=darkgreen] table[x=t, y=q1, col sep=comma] {Figures/Abstraction/Data/ctmc_queue_curves.csv};

\addplot[name path=q2, thick, mark=none, color=red] table[x=t, y=q2, col sep=comma] {Figures/Abstraction/Data/ctmc_queue_curves.csv};

\addplot[draw=black, dashed, fill=red!20] fill between[of=q0 and axis,soft clip={domain=20:50}];

\addplot[draw=black, dashed, fill=plotblue!20] fill between[of=q0 and axis,soft clip={domain=50:80}];

\end{axis}
\end{tikzpicture}
    \vspace{\spacing}
    \fi
    \caption{Transient distribution.}
\label{fig:abstraction_transient_refined}
\end{subfigure}
\begin{subfigure}[b]{0.33\linewidth}
    \centering
    \iftikzcompile
    \begin{tikzpicture}
    \begin{axis}[%
        width=4cm,
        height=3.2cm,
        xmin=0,
        xmax=0.7,
        ymin=0.0,
        ymax=0.5,
        xlabel={$P(q)$},
        ylabel={$P(q'')$},
        xtick={0,0.2,0.4,0.6},
        ytick={0,0.2,0.4,0.6},
        x label style={at={(axis description cs:0.5,-0.16)}},
    ]
    
    \addlegendimage{dashed, very thick, color=red}
    \addlegendimage{dotted, very thick, color=plotblue}
    
    \addplot[only marks,fill=black,mark size=0.8pt] table {Figures/Abstraction/Data/points.dat}; 

    \addplot[mark=|, red, very thick] table[] { 
    x                       y
    0.38429912159797264     0 
    0.609037086538144       0 
    }; 
    \addplot[mark=-, red, very thick] table[] { 
    x                       y
    0                       0.09677975126692534
    0                       0.27367298878305124
    }; 

    \addplot[mark=|, plotblue, very thick] table[] { 
    x                       y
    0.28950187683227596     0 
    0.37967922538548715     0 
    }; 
    \addplot[mark=-, plotblue, very thick] table[] { 
    x                       y
    0                       0.27841114126188643
    0                       0.3783307322408702 
    }; 
    
    \coordinate (low) at (axis cs:0.38429912159797264, 0.09677975126692534) {};
    \coordinate (upp) at (axis cs:0.609037086538144, 0.27367298878305124) {};

    \coordinate (low2) at (axis cs:0.28748697365694736, 0.27841114126188643) {};
    \coordinate (upp2) at (axis cs:0.37967922538548715, 0.3807000222790507) {};
    
    \end{axis}

    \begin{pgfonlayer}{bg}
        \filldraw[dotted, color=red!80, fill=red!20, very thick] (low) rectangle (upp);
    \end{pgfonlayer}
    
    \begin{pgfonlayer}{bg}
        \filldraw[dotted, color=plotblue!80, fill=plotblue!20, very thick] (low2) rectangle (upp2);
    \end{pgfonlayer}
    
    \end{tikzpicture}
    \vspace{\spacing}
    \fi
    \caption{Refined intervals.}
\label{fig:abstraction_intervals_refined}
\end{subfigure}
\caption{Abstraction of an infinite set of MDP states for all times $t \in [0.2, 0.8]$ into (a) a single iMDP state $\tuple{s, [0.2, 0.8]}$ with probability intervals that overapproximate the transient distribution (b) as the rectangular set in (c), where the line shows the MDP transition probabilities for all $t \in [0.2,0.8]$.
The refinement (d) into two iMDP states $\tuple{s, [0.2, 0.5]}$ and $\tuple{s, [0.5, 0.8]}$ splits the approximation of the transient (e) into the two (less conservative) rectangular sets in (f).}
\label{fig:abstraction}
\end{figure}
An abstraction under the coarse time partition from \cref{fig:time_partition} is shown in \cref{fig:abstraction_states}.
The transition probabilities for each MDP state are defined by transient probabilities for the CTMC.
Thus, the uncertain transition function $\calP$ of the iMDP overapproximates these transient probabilities over a \emph{range of times} (as shown in \cref{fig:abstraction_transient}), yielding probability intervals as in \cref{fig:abstraction_intervals}.

\paragraph{Conditional reachability on iMDP}
We show that the iMDP abstraction can be used to obtain sound upper and lower bounds on the conditional reachability $\Weight(\imphist)$.
Let $\Weight_\imdp(\tilde{P}, \sched) \geq 0$ denote the value for the MDP $\imdp[\tilde{P}]$ induced by iMDP $\imdp$ under transition function $\tilde{P}$, and with scheduler $\sched \in \Sched_\imdp$:
\begin{equation}
    \Weight_\imdp(\tilde{P}, \sched) \coloneqq \sum_{s \in S} \Prob_{\imdp[\tilde{P}]}^\sched( \Finally \tuple{ s, t_\star } ) \cdot \weight(s).
\end{equation}
The next theorem, proven in 
\ifappendix
    \cref{appendix:proof:sandwich},
\else
    \cite[Appendix~B]{Badings2024TACAS:extended},
\fi
is the main result of this section.
\begin{theorem}
    \label{thm:sandwich}
    Let $\imdp = \Abstract(\mdp_{|\imphist}, \partition)$ be the iMDP abstraction for a conditioned MDP $\mdp_{|\imphist}$ and a time partition $\partition$ of $\imphist$.
    Then, it holds that
    \begin{equation}
    \begin{split}
        \max_{\sched \in \Sched_{\imdp}^\mathsf{con}} \min_{\tilde{P} \in \calP} 
        \Weight_\imdp(\tilde{P},\sched)
        \leq
        \Weight(\imphist)
        \leq
        \max_{\sched \in \Sched_{\imdp}^\mathsf{con}} \max_{\tilde{P} \in \calP} 
        \Weight_\imdp(\tilde{P},\sched).
        \label{eq:thm:sandwich}
    \end{split}
    \end{equation}
\end{theorem}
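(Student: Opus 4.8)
The plan is to derive both inequalities from a single \emph{realizability lemma}: for every instance $\rho \in \imphist$, the Markov chain induced by the consistent scheduler $\sched_\rho \consistent \rho$ on the conditioned MDP $\mdp_{|\imphist}$ is reproduced \emph{exactly} inside $\imdp$ by a matching pair $\tilde{\sched}_\rho \in \Sched_\imdp^\mathsf{con}$ and $\tilde{P}_\rho \in \calP$. If $\rho$ fixes times $t_1, \ldots, t_\hor$ with $t_i \in \impT_i$, I let $\tilde{\sched}_\rho$ select the partition actions $\partition(t_1), \ldots, \partition(t_\hor)$ and set $\tilde{P}_\rho(\tuple{s, \partition(t_i)}, \partition(t_{i+1}), \tuple{s', \partition(t_{i+1})}) = \pr_s(t_{i+1} - t_i)(s')$ for every $s, s'$. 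I would first verify this is a legal choice: each such value equals the transient probability $P(\tuple{s, t_i}, t_{i+1}, \tuple{s', t_{i+1}})$, which lies in the interval $\calP(\cdots)$ by the union in \cref{def:abstraction} (as $t_i \in \partition^{-1}(\partition(t_i))$), while summing over $s'$ gives $1$; hence $\tilde{P}_\rho(\tuple{s, \cdot}, \cdot, \cdot)$ is a genuine distribution inside the intervals. For state--action pairs not visited under $\tilde{\sched}_\rho$ I complete $\tilde{P}_\rho$ by fixing any single time in each element, which keeps $\tilde{P}_\rho \in \calP$ and does not affect the reachability value.

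The second ingredient makes the two induced chains isomorphic under the bijection $\tuple{s, t_i} \mapsto \tuple{s, \partition(t_i)}$, namely that the reset structure is abstracted \emph{exactly}. Since $\imphist(t) = o_i$ is constant on $\impT_i$, it is constant on every partition element $\psi \subseteq \impT_i$; thus membership in $Q_\mathsf{reset}(\imphist)$ depends only on the abstract state $\tuple{s, \psi}$, and the reset transition $\mathbbm{1}_{(s' = s_I)}$ from \cref{def:conditioned_MDP} is independent of $t$, so its abstract interval collapses to $[1,1]$ (resp.\ $[0,0]$). The resets are therefore forced and reproduced exactly by $\tilde{P}_\rho$. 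With both transient and reset transitions matching, the reachability probabilities of $\tuple{s, t_\star}$ coincide, giving $\Weight_\imdp(\tilde{P}_\rho, \tilde{\sched}_\rho) = \sum_{s \in S} \Prob^{\sched_\rho}_{\mdp_{|\imphist}}(\Finally \tuple{s, t_\star}) \cdot \weight(s) =: V(\rho)$, which by \cref{thm2:problem_solution} is precisely the instance value.

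Both bounds then follow quickly. For the \textbf{upper bound}, every instance gives $V(\rho) = \Weight_\imdp(\tilde{P}_\rho, \tilde{\sched}_\rho) \leq \max_{\sched \in \Sched_\imdp^\mathsf{con}} \max_{\tilde{P} \in \calP} \Weight_\imdp(\tilde{P}, \sched)$; taking the supremum over $\rho \in \imphist$ and applying \cref{thm2:problem_solution} yields $\Weight(\imphist) \leq \max_{\sched} \max_{\tilde{P}} \Weight_\imdp$. For the \textbf{lower bound}, let $\tilde{\sched}^\star$ attain the outer maximum on the left-hand side; each partition element it selects is a nonempty subset of some $\impT_i$, so I choose an instance $\rho^\star$ whose times lie in those elements, whence $\tilde{\sched}_{\rho^\star}$ agrees with $\tilde{\sched}^\star$ on all reachable states and $\tilde{P}_{\rho^\star} \in \calP$. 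Then $\min_{\tilde{P} \in \calP} \Weight_\imdp(\tilde{P}, \tilde{\sched}^\star) \leq \Weight_\imdp(\tilde{P}_{\rho^\star}, \tilde{\sched}^\star) = V(\rho^\star) \leq \Weight(\imphist)$.

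I expect the realizability lemma to be the main obstacle. The interval membership is immediate from \cref{def:abstraction}; the delicate part is arguing that a \emph{single} instance-induced $\tilde{P}_\rho$ is simultaneously a valid iMDP transition function and an exact reproduction of both transient and reset transitions, so that no abstraction slack is incurred for that particular instance. The conceptual point worth emphasizing is \emph{why} the robust minimum over $\calP$ gives a lower rather than an upper bound: a minimizing $\tilde{P}$ may combine transient probabilities evaluated at \emph{different} times within a partition element and need not correspond to any genuine instance, which can only push the value below the instance-realizable ones---exactly what the lower bound requires.
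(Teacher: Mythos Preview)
Your proposal is correct and follows essentially the same idea as the paper's proof: both rest on the fact that every instance $\rho\in\imphist$ can be \emph{exactly} realized inside the abstraction by picking the partition elements containing its times and setting the transition function to the corresponding transient probabilities (which lie in the intervals by construction of \cref{def:abstraction}).

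The only structural difference is packaging. The paper argues via a state map $H\colon Q\to\tilde Q$ and appeals to probabilistic (bi)simulation to assert the existence of a single $\tilde P\in\calP$ with $\max_{\sched\in\Sched_\imdp^{\mathsf{con}}}\Weight_\imdp(\tilde P,\sched)=\Weight(\imphist)$, from which both inequalities drop out at once. You instead build, for each instance $\rho$, a matched pair $(\tilde\sched_\rho,\tilde P_\rho)$ and derive the two bounds separately. Your route is more explicit---in particular, you spell out why the reset transitions are abstracted without slack (labels are constant on each $\impT_i$, hence on each partition cell) and why $\tilde P_\rho$ is a genuine element of $\calP$ (each row is an actual transient distribution summing to one)---points the paper leaves to the reader through its reference to \cite{DBLP:conf/lics/JonssonL91}. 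Both arguments buy the same conclusion; yours is self-contained, the paper's is shorter but leans on simulation folklore.
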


\paragraph{Construction of the iMDP}
We want to construct the abstract iMDP directly from the CTMC without first constructing the continuous MDP $\mdp_{|\imphist}$.
Consider computing the probability interval $\calP ( \tuple{ s,\impT }, \impT', \tuple{ s',\impT' } )$ for the iMDP transition from state $\tuple{s,\impT}$ to $\tuple{s',\impT'}$.
This interval is given by the minimum and maximum transient probabilities $\Pr_s(t' - t)(s')$ over all $t \in \impT$ and $t' \in \impT'$.
However, the problem is that the transient probabilities are not monotonic over time in general (see \cref{fig:abstraction_transient}), so it is unclear how to compute this interval.

Instead, we compute upper and lower bounds for the transient probabilities.
Let $\munderbar{t} = \min(\impT)$ and $\bar{t} = \max(\impT)$.
An upper bound on the transient probability is given by the probability to reach $s'$ from $s$ at \emph{some} time $t'-t$, $t \in \impT$, $t' \in \impT'$:
\begin{equation}
    \label{eq:transient_upper}
    \sup_{t \in \impT, t' \in \impT'} {\Pr}_s(t' - t)(s')
    \leq \sup_{t \in \impT, t' \in \impT'} \Prob_{\ctmc,s}( \Finally^{[t, t']} s' )
    = \Prob_{\ctmc,s}( \Finally^{[\munderbar{t}, \bar{t}']} s' ),
\end{equation}
where $\Prob_{\ctmc,s}$ is the probability measure for the CTMC starting in initial state $s$, and $\bar{t}' - \munderbar{t}$ is the maximal time difference.
A lower bound is given symmetrically by the transient probability to reach $s'$ in the CTMC at the \emph{earliest} possible time $\munderbar{t'} - \bar{t}$ and staying there for the \emph{full} remaining time $(\bar{t'} - \munderbar{t}) - (\munderbar{t}' - \bar{t})$:
\begin{equation}
    \label{eq:transient:lower}
    \inf_{t \in \impT, t' \in \impT'} {\Pr}_s(t' - t)(s')
    \geq {\Pr}_s(\munderbar{t'} - \bar{t})(s') \cdot \Prob_{\ctmc,s'}(\Always^{[0, (\bar{t'} - \munderbar{t}) - (\munderbar{t}' - \bar{t})]} s').
\end{equation} 

\subsection*{Abstraction refinement}
\label{subsec:abstraction:refinement}

To improve the tightness of the bounds in \cref{thm:sandwich}, we propose a refinement step that splits elements of the time partition $\partition$.
For example, we may split the single abstract state in \cref{fig:abstraction_states} into the two states in \cref{fig:abstraction_states_refined}.
\begin{definition}[Refinement of time partition]
    \label{def:refinement}
    Let $\partition$ and $\partition'$ be partitions as per \cref{def:partition}, for which $|\partition'| > |\partition|$.
    We call $\partition'$ a refinement of $\partition$  if for all $\psi' \in \partition'$, there exists a $\psi \in \partition$ such that $\psi' \subseteq \psi$.
\end{definition}
Any refinement $\partition$' of partition $\partition$ can be constructed by finitely many splits.
We lift the refinement to the iMDP, see also \cref{fig:abstraction_intervals,fig:abstraction_intervals_refined}. 
The refined iMDP $\imdp' = \Abstract(\mdp_{|\imphist}, \partition')$ has more states and actions, but each union in \cref{eq:probability_intervals} is over a smaller set than in iMDP $\Abstract(\mdp_{|\imphist}, \partition)$. 
Thus, the refinement leads to smaller probability intervals and, in general, to tighter bounds in \cref{thm:sandwich}.
Repeatedly refining every element of the partition yields an iMDP with arbitrarily many states and actions and with arbitrarily small probability intervals.
Hence, in the limit, we may recover the original continuous MDP by refinements, which also implies that the bounds in \cref{thm:sandwich} on the refined iMDP converge.

\paragraph{Refinement strategy}
By splitting every element of the partition $\partition$, the number of iMDP states and actions double per iteration, and the number of transitions grows exponentially.
Thus, we employ the following \emph{guided refinement strategy}.
At each iteration, we extract the scheduler $\sched^\star$ that attains the upper bound in \cref{thm:sandwich} and determine the set $\tilde{Q}_\mathsf{reach}^{\sched^\star} \subset \tilde{Q}$ of reachable iMDP states.
We only refine the reachable elements $\psi \in \partition$, that is, for which there exists a $t \in \psi$ and $s \in S$ such that $\tuple{s,t} \in \tilde{Q}_\mathsf{reach}^{\sched^\star}$.
Using this guided strategy, we iteratively shrink only the relevant probability intervals, resulting in the same convergence behavior as the naive strategy but without the severe increase in abstraction size.
\section{Computing Bounds on the Conditional Reachability}
\label{sec:algorithm}
\cref{thm:sandwich} provides bounds on the conditional reachability $\Weight(\imphist)$ in \cref{problem}, but computing these bounds involves optimizing over the subset of consistent schedulers.
Recall from \cref{def:consistent_schedulers} that a consistent scheduler chooses the same actions in different states.\footnote{Consistent schedulers are similar to (memoryless) schedulers in partially observable MDPs that choose the same action in states with the same observation label.}
As we are not aware of any efficient algorithm to optimize over the consistent schedulers, we compute the following straightforward bounds:
\begin{lemma}[Bounds on \cref{problem}]
    \label{lemma:computable_bounds}
    Let $\imdp = \Abstract(\mdp_{|\imphist}, \partition)$ be the iMDP abstraction for the unfolded MDP $\mdp_{|\imphist}$ and a time partition $\partition$.
    It holds that
    \begin{equation}
        \Weight(\imphist)
        \leq
        \max_{\sched \in \Sched_{\imdp}^\mathsf{con}} \max_{\tilde{P} \in \calP} 
        \Weight_\imdp(\tilde{P},\sched)
        \leq
        \max_{\sched \in \Sched_{\imdp}} \max_{\tilde{P} \in \calP} 
        \Weight_\imdp(\tilde{P},\sched).
        \label{eq:upper_bound_all_schedulers}
    \end{equation}
    Moreover, any consistent scheduler $\hat\sched \in \Sched_\imdp^\mathsf{cons}$ results in a lower bound. 
\end{lemma}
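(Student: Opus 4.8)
The plan is to derive both inequalities in \cref{eq:upper_bound_all_schedulers} as direct consequences of \cref{thm:sandwich}, supplemented only by the elementary fact that enlarging the set of schedulers over which we maximize cannot decrease the maximum. The left inequality $\Weight(\imphist) \le \max_{\sched \in \Sched_{\imdp}^\mathsf{con}} \max_{\tilde{P} \in \calP} \Weight_\imdp(\tilde{P},\sched)$ requires no new work: it is exactly the right-hand bound already proven in \cref{thm:sandwich}.

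For the second inequality I would invoke $\Sched_{\imdp}^\mathsf{con} \subseteq \Sched_{\imdp}$, which is immediate from \cref{def:consistent_schedulers} since every consistent scheduler is in particular a scheduler. Viewing $g(\sched) \coloneqq \max_{\tilde{P} \in \calP} \Weight_\imdp(\tilde{P},\sched)$ as a fixed function of $\sched$, maximizing $g$ over the superset $\Sched_{\imdp}$ can only give a value at least as large as maximizing over $\Sched_{\imdp}^\mathsf{con}$; this is just $\sup_{x \in Y} g(x) \le \sup_{x \in Z} g(x)$ for $Y \subseteq Z$.

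For the \emph{moreover} claim, I would fix an arbitrary consistent scheduler $\hat\sched \in \Sched_{\imdp}^\mathsf{con}$ and chain
\[
\min_{\tilde{P} \in \calP} \Weight_\imdp(\tilde{P}, \hat\sched)
\;\le\;
\max_{\sched \in \Sched_{\imdp}^\mathsf{con}} \min_{\tilde{P} \in \calP} \Weight_\imdp(\tilde{P}, \sched)
\;\le\;
\Weight(\imphist),
\]
where the first step holds because $\hat\sched$ is a feasible choice in the outer maximum and the second is the left-hand bound of \cref{thm:sandwich}. Thus the robust value $\min_{\tilde{P} \in \calP} \Weight_\imdp(\tilde{P}, \hat\sched)$ of any single consistent scheduler is a sound lower bound on $\Weight(\imphist)$.

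There is no real obstacle: the statement is a corollary of \cref{thm:sandwich} together with subset-monotonicity of the maximum. The only care needed is bookkeeping of the inner optimization over $\tilde{P} \in \calP$ --- keeping $\max$ on the upper-bound side and $\min$ on the lower-bound side --- and observing that the subset argument is applied only to the $\max$-over-$\calP$ objective, which is precisely why relaxing $\Sched_{\imdp}^\mathsf{con}$ to all of $\Sched_{\imdp}$ stays sound for the upper bound but would \emph{not} be sound if applied to the $\min$-over-$\calP$ (lower-bound) objective. The lemma's significance is therefore algorithmic rather than mathematical: it licenses replacing the intractable optimization over consistent schedulers by a tractable one over all schedulers, at the price of a looser but still valid upper bound.
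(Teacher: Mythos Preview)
Your proposal is correct and matches the paper's intended reasoning. The paper does not give an explicit proof of this lemma---it is introduced as a ``straightforward'' consequence---and your derivation (first inequality from \cref{thm:sandwich}, second from $\Sched_{\imdp}^\mathsf{con} \subseteq \Sched_{\imdp}$ and monotonicity of $\max$, lower bound by instantiating the left side of \cref{thm:sandwich} at a fixed consistent scheduler) is exactly the argument the paper has in mind.
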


\paragraph{Obtaining lower bounds} While we can use \emph{any} consistent scheduler in \cref{lemma:computable_bounds} to compute a lower bound on $\Weight(\imphist)$, we obtain better bounds by modifying a (potentially non-consistent) optimal scheduler $\sched^-$ under the worst-case choice of probabilities, i.e., $\sched^- = \argmax_{\sched \in \Sched_{\imdp}} \min_{\tilde{P} \in \calP} \Weight_\imdp(\tilde{P},\sched)$.
We check for inconsistency of scheduler $\sched^-$ by evaluating the following condition in all pairs of states $\tuple{s,t}, \tuple{s',t'} \in \tilde{Q}_\mathsf{reach}^{\sched^-} \subset \tilde{Q}$ reachable under $\sched^-$:
\begin{equation}
    t = t' \implies \sched(\tuple{s,t}) = \sched(\tuple{s',t})
    \quad
    \forall \tuple{s,t}, \tuple{s',t'} \in \tilde{Q}_\mathsf{reach}^{\sched^-}.
\end{equation}
We remove inconsistencies by changing the action in one of the states to match the others.
We take a greedy approach and always adapt to the action chosen most often across all iMDP states $\tuple{s,t} \in \tilde{Q}$ for the same time $t$.
For example, if $\sched(\tuple{s,t}) = \sched(\tuple{s',t}) \neq \sched(\tuple{s'',t})$, then we only modify $\sched(\tuple{s'',t})$ to match the other actions.
Because the set $\tilde{Q}_\mathsf{reach}^{\sched^-}$ is finite by construction, a finite number of modifications suffices to render any scheduler consistent.
The experiments in \cref{sec:experiments} show that modifying an inconsistent scheduler yields tighter lower bounds than taking the maximum over many sampled consistent schedulers.

\paragraph{Obtaining upper bounds}
The set of consistent schedulers is finite but prohibitively large, so enumerating over all consistent schedulers is infeasible.
For a sound upper bound, we instead optimize over all schedulers. 
The experiments in \cref{sec:experiments} show that we obtain (relatively) tight bounds. 
To further refine these upper bounds, the literature suggests another abstraction refinement loop, which can be formulated either directly on the imprecise evidence~\cite{DBLP:conf/tacas/CeskaJJK19} or on the consistent schedulers~\cite{DBLP:journals/tac/WintererJWJTKB21}. 
The latter approach leverages the fact that consistent schedulers can also be modeled as searching for (memoryless) schedulers in partially observable MDPs, where the schedulers would only observe the time but not the state. 
Finally, the hardness of optimizing over consistent schedulers in the iMDP remains open: Classical NP-hardness results for the problems above do not carry over.
\section{Numerical Experiments}
\label{sec:experiments}

We implemented our approach in a prototypical Python tool, which is available at \url{https://doi.org/10.5281/zenodo.10438984}.
The tool builds on top of \storm{}~\cite{DBLP:journals/sttt/HenselJKQV22} for the analysis of CTMCs and iMDPs.
It takes as input a CTMC $\ctmc$, a property defining the state-weight function~$\weight$, and imprecisely timed evidence $\imphist$.
The tool constructs the abstract iMDP for the coarsest time partition, computing the probability intervals as per \cref{eq:transient_upper,eq:transient:lower}.
The bounds on the conditional reachability in \cref{lemma:computable_bounds} are computed using robust value iteration.
Then, the tool applies guided refinements, as in \cref{subsec:abstraction:refinement}, and starts a new iteration with the refined partition.
After a predefined time limit, the tool returns the lower bound $\underline{\Weight(\imphist)}$ and upper bound $\overline{\Weight(\imphist)}$ on the conditional reachability $\Weight(\imphist)$:
\begin{equation}
        \underline{\Weight(\imphist)}
        =
        \min_{\tilde{P} \in \calP} \Weight_\imdp(\tilde{P},\hat\sched)
        \leq \Weight(\imphist) \leq
        \max_{\sched \in \Sched_{\imdp}} \max_{\tilde{P} \in \calP} \Weight_\imdp(\tilde{P},\sched)
        =
        \overline{\Weight(\imphist)},
        \label{eq:bounds_tool}
\end{equation}
where the consistent scheduler $\hat\sched$ for the lower bound is obtained by fixing all inconsistencies in the scheduler $\sched^-$ defined in \cref{sec:algorithm}.
The tool can also compute minimal conditional reachabilities (by swapping all $\min$ and $\max$ operators).

\begin{table*}[t]
\setlength{\tabcolsep}{3pt}

\centering
\caption{Overview of considered benchmarks.}
\label{tab:benchmarks_overview}

\scalebox{1}{
\begin{tabular}{lrrrl}
	\toprule
    \multicolumn{2}{c}{Example} & \multicolumn{2}{c}{{CTMC size}} & \multicolumn{1}{c}{State-weight function}\\
    \cmidrule(l){1-2} \cmidrule(l){3-4} \cmidrule(l){5-5}
    Name
    & Evid. len. ($|\imphist|$)
    & States
    & Transit.
    & Property
    \\
    \midrule
    \ex{Invent} & 3-14 & 3 & 4 & ``Prob. empty inventory within time 0.1''\\
    \ex{Ahrs}   & 4 & 74 & 196 & ``Prob. system failure within time 50''\\
    \ex{Phil}   & 4 & 34 & 89 & ``Prob. deadlock within time 1''\\
    \ex{Tandem} & 2 & 120 & 363 & ``Prob. both queues full within time 10''\\
    \ex{Polling} & 3 & 576 & 2208 & ``Prob. all stations empty within time 10''\\
    \bottomrule
\end{tabular}
}
\end{table*}%

\paragraph{Benchmarks}
We evaluate our approach on several CTMCs from the literature, creating multiple imprecisely timed evidence for each CTMC.
\cref{tab:benchmarks_overview} lists the evidence length (i.e., the number of observed times and labels), the number of CTMC states and transitions, and the property specifying the state-weight function.
More details on the benchmarks are in 
\ifappendix
    \cref{appendix:additional_results:benchmarks}.
\else
    \cite[Appendix~D.1]{Badings2024TACAS:extended},
\fi
All experiments run on an Intel Core i5 with 8GB RAM, using a time limit of 10 minutes.

\pgfmathsetlengthmacro\MajorTickLength{
      \pgfkeysvalueof{/pgfplots/major tick length} * 0.5
    }

\begin{figure}[t!]
\newcommand{\xlabelShift}{0.1cm}
\newcommand{\plotheight}{4cm}
\newcommand{\plotwidth}{0.325\linewidth}
\begin{subfigure}[b]{\plotwidth}
\begin{tikzpicture}

\definecolor{crimson2143940}{RGB}{214,39,40}
\definecolor{darkgray176}{RGB}{176,176,176}
\definecolor{gray127}{RGB}{127,127,127}
\definecolor{steelblue31119180}{RGB}{31,119,180}

\begin{axis}[
    width=\textwidth,
    height=\plotheight,
    xtick={0.1, 1, 10, 60, 600},
    xticklabels={0.1, 1, 10, 60, 600},
    ytick={0, 0.05, 0.1, 0.15},
    yticklabels={0, 0.05, 0.1, 0.15},
    tick label style={font=\scriptsize},
    label style={font=\scriptsize},
    major tick length=\MajorTickLength,
    xlabel style={yshift=\xlabelShift},
log basis x={10},
tick align=outside,
tick pos=left,
x grid style={darkgray176},
xlabel={Time [s]},
xmin=0.05, xmax=600,
xmode=log,
xtick style={color=black},
y grid style={darkgray176},
ylabel={Cond. reach. prob.},
ymin=0.0, ymax=0.15,
ytick style={color=black}
]
\path [draw=gray127, fill=gray127, opacity=0.5]
(axis cs:0.05,0.0764071214681387)
--(axis cs:0.05,0.0823696792960419)
--(axis cs:0.01,0.0823696792960419)
--(axis cs:0.03,0.0823696792960419)
--(axis cs:0.1,0.0823696792960419)
--(axis cs:0.31,0.0823696792960419)
--(axis cs:0.82,0.0823696792960419)
--(axis cs:2.12,0.0823696792960419)
--(axis cs:4.96,0.0823696792960419)
--(axis cs:9.21,0.0823696792960419)
--(axis cs:14.82,0.0823696792960419)
--(axis cs:22.58,0.0823696792960419)
--(axis cs:32.48,0.0823696792960419)
--(axis cs:46.51,0.0823696792960419)
--(axis cs:62.03,0.0823696792960419)
--(axis cs:80.3,0.0823696792960419)
--(axis cs:107.18,0.0823696792960419)
--(axis cs:134.04,0.0823696792960419)
--(axis cs:177.51,0.0823696792960419)
--(axis cs:210.98,0.0823696792960419)
--(axis cs:248.01,0.0823696792960419)
--(axis cs:289.57,0.0823696792960419)
--(axis cs:334.64,0.0823696792960419)
--(axis cs:384.65,0.0823696792960419)
--(axis cs:439.34,0.0823696792960419)
--(axis cs:499.57,0.0823696792960419)
--(axis cs:565.35,0.0823696792960419)
--(axis cs:637.81,0.0823696792960419)
--(axis cs:637.81,0.0764071214681387)
--(axis cs:637.81,0.0764071214681387)
--(axis cs:565.35,0.0764071214681387)
--(axis cs:499.57,0.0764071214681387)
--(axis cs:439.34,0.0764071214681387)
--(axis cs:384.65,0.0764071214681387)
--(axis cs:334.64,0.0764071214681387)
--(axis cs:289.57,0.0764071214681387)
--(axis cs:248.01,0.0764071214681387)
--(axis cs:210.98,0.0764071214681387)
--(axis cs:177.51,0.0764071214681387)
--(axis cs:134.04,0.0764071214681387)
--(axis cs:107.18,0.0764071214681387)
--(axis cs:80.3,0.0764071214681387)
--(axis cs:62.03,0.0764071214681387)
--(axis cs:46.51,0.0764071214681387)
--(axis cs:32.48,0.0764071214681387)
--(axis cs:22.58,0.0764071214681387)
--(axis cs:14.82,0.0764071214681387)
--(axis cs:9.21,0.0764071214681387)
--(axis cs:4.96,0.0764071214681387)
--(axis cs:2.12,0.0764071214681387)
--(axis cs:0.82,0.0764071214681387)
--(axis cs:0.31,0.0764071214681387)
--(axis cs:0.1,0.0764071214681387)
--(axis cs:0.03,0.0764071214681387)
--(axis cs:0.01,0.0764071214681387)
--(axis cs:0.05,0.0764071214681387)
--cycle;

\addplot [semithick, black]
table {%
0.05 0.0823696792960419
0.01 0.0823696792960419
0.03 0.0823696792960419
0.1 0.0823696792960419
0.31 0.0823696792960419
0.82 0.0823696792960419
2.12 0.0823696792960419
4.96 0.0823696792960419
9.21 0.0823696792960419
14.82 0.0823696792960419
22.58 0.0823696792960419
32.48 0.0823696792960419
46.51 0.0823696792960419
62.03 0.0823696792960419
80.3 0.0823696792960419
107.18 0.0823696792960419
134.04 0.0823696792960419
177.51 0.0823696792960419
210.98 0.0823696792960419
248.01 0.0823696792960419
289.57 0.0823696792960419
334.64 0.0823696792960419
384.65 0.0823696792960419
439.34 0.0823696792960419
499.57 0.0823696792960419
565.35 0.0823696792960419
637.81 0.0823696792960419
};
\addplot [semithick, crimson2143940]
table {%
0.01 0.025165
0.03 0.040696
0.1 0.055381
0.31 0.065055
0.82 0.065981
2.12 0.068411
4.96 0.068671
9.21 0.068949
14.82 0.069668
22.58 0.069793
32.48 0.070054
46.51 0.070192
62.03 0.070307
80.3 0.070334
107.18 0.070368
134.04 0.070482
177.51 0.070493
210.98 0.070557
248.01 0.070622
289.57 0.070629
334.64 0.070635
384.65 0.07066
439.34 0.070688
499.57 0.07072
565.35 0.070751
637.81 0.070755
};
\addplot [semithick, crimson2143940, dashed]
table {%
0.01 0.132122
0.03 0.112328
0.1 0.097331
0.31 0.087791
0.82 0.082348
2.12 0.07943
4.96 0.077918
9.21 0.077149
14.82 0.07676
22.58 0.076565
32.48 0.076468
46.51 0.076419
62.03 0.076394
80.3 0.076382
107.18 0.076376
134.04 0.076373
177.51 0.076371
210.98 0.07637
248.01 0.07637
289.57 0.07637
334.64 0.07637
384.65 0.07637
439.34 0.07637
499.57 0.07637
565.35 0.07637
637.81 0.07637
};
\addplot [semithick, steelblue31119180, dashed]
table {%
0.01 0.025165
0.03 0.042245
0.1 0.05885
0.31 0.069813
0.82 0.075965
2.12 0.079201
4.96 0.080856
9.21 0.081693
14.82 0.082114
22.58 0.082325
32.48 0.082431
46.51 0.082484
62.03 0.08251
80.3 0.082523
107.18 0.08253
134.04 0.082533
177.51 0.082535
210.98 0.082536
248.01 0.082536
289.57 0.082536
334.64 0.082536
384.65 0.082536
439.34 0.082536
499.57 0.082536
565.35 0.082536
637.81 0.082536
};
\addplot [semithick, steelblue31119180]
table {%
0.01 0.132122
0.03 0.115587
0.1 0.102293
0.31 0.093477
0.82 0.091266
2.12 0.089242
4.96 0.088743
9.21 0.088276
14.82 0.088195
22.58 0.087931
32.48 0.087661
46.51 0.087526
62.03 0.08743
80.3 0.087392
107.18 0.087324
134.04 0.087262
177.51 0.087257
210.98 0.087223
248.01 0.087193
289.57 0.087189
334.64 0.087188
384.65 0.087172
439.34 0.087155
499.57 0.087154
565.35 0.087147
637.81 0.087138
};
\draw (axis cs:0.1,0.085) node[
  scale=0.5,
  anchor=base west,
  text=black,
  rotate=0.0
]{$W(\Omega)'$};
\end{axis}%
\end{tikzpicture}%
    \caption{\ex{Invent} with evidence 1.}
    \label{subfig:invent_1}
\end{subfigure}
\hfill
\begin{subfigure}[b]{\plotwidth}
\begin{tikzpicture}

\definecolor{crimson2143940}{RGB}{214,39,40}
\definecolor{darkgray176}{RGB}{176,176,176}
\definecolor{gray127}{RGB}{127,127,127}
\definecolor{steelblue31119180}{RGB}{31,119,180}

\begin{axis}[
    width=\textwidth,
    height=\plotheight,
    xtick={0.1, 1, 10, 60, 600},
    xticklabels={0.1, 1, 10, 60, 600},
    ytick={0.9, 0.95, 1},
    yticklabels={0.9, 0.95, 1},
    tick label style={font=\scriptsize},
    label style={font=\scriptsize},
    major tick length=\MajorTickLength,
    xlabel style={yshift=\xlabelShift},
log basis x={10},
tick align=outside,
tick pos=left,
x grid style={darkgray176},
xlabel={Time [s]},
xmin=0.8, xmax=600,
xmode=log,
xtick style={color=black},
y grid style={darkgray176},
ylabel=\empty, 
ymin=0.90, ymax=1.0,
ytick style={color=black}
]
\path [draw=gray127, fill=gray127, opacity=0.5]
(axis cs:0.05,0.957807961117679)
--(axis cs:0.05,0.962605785891122)
--(axis cs:0.46,0.962605785891122)
--(axis cs:1.9,0.962605785891122)
--(axis cs:6.6,0.962605785891122)
--(axis cs:23.56,0.962605785891122)
--(axis cs:77.48,0.962605785891122)
--(axis cs:249.42,0.962605785891122)
--(axis cs:800.39,0.962605785891122)
--(axis cs:800.39,0.957807961117679)
--(axis cs:800.39,0.957807961117679)
--(axis cs:249.42,0.957807961117679)
--(axis cs:77.48,0.957807961117679)
--(axis cs:23.56,0.957807961117679)
--(axis cs:6.6,0.957807961117679)
--(axis cs:1.9,0.957807961117679)
--(axis cs:0.46,0.957807961117679)
--(axis cs:0.05,0.957807961117679)
--cycle;

\addplot [semithick, black]
table {%
0.05 0.962605785891122
0.46 0.962605785891122
1.9 0.962605785891122
6.6 0.962605785891122
23.56 0.962605785891122
77.48 0.962605785891122
249.42 0.962605785891122
800.39 0.962605785891122
};
\addplot [semithick, crimson2143940]
table {%
0.46 0.904601
1.9 0.928636
6.6 0.942189
23.56 0.948768
77.48 0.951069
249.42 0.952714
800.39 0.954194
};
\addplot [semithick, crimson2143940, dashed]
table {%
0.46 0.98367
1.9 0.973195
6.6 0.966025
23.56 0.961857
77.48 0.959615
249.42 0.958452
800.39 0.95786
};
\addplot [semithick, steelblue31119180, dashed]
table {%
0.46 0.904602
1.9 0.932688
6.6 0.94712
23.56 0.956096
77.48 0.959298
249.42 0.961071
800.39 0.962041
};
\addplot [semithick, steelblue31119180]
table {%
0.46 0.983669
1.9 0.97504
6.6 0.969646
23.56 0.966662
77.48 0.965627
249.42 0.965129
800.39 0.964306
};
\draw (axis cs:1,0.965) node[
  scale=0.5,
  anchor=base west,
  text=black,
  rotate=0.0
]{$W(\Omega)'$};
\end{axis}%
\end{tikzpicture}%
    \caption{\ex{Ahrs} with evidence 1.}
    \label{subfig:ahrs_1}
\end{subfigure}
\hfill
\begin{subfigure}[b]{\plotwidth}
\begin{tikzpicture}

\definecolor{crimson2143940}{RGB}{214,39,40}
\definecolor{darkgray176}{RGB}{176,176,176}
\definecolor{gray127}{RGB}{127,127,127}
\definecolor{steelblue31119180}{RGB}{31,119,180}

\begin{axis}[
    width=\textwidth,
    height=\plotheight,
    xtick={0.1, 1, 10, 60, 600},
    xticklabels={0.1, 1, 10, 60, 600},
    ytick={0.06, 0.07, 0.08},
    yticklabels={0.06, 0.07, 0.08},
    scaled ticks=false,
    tick label style={font=\scriptsize},
    label style={font=\scriptsize},
    major tick length=\MajorTickLength,
    xlabel style={yshift=\xlabelShift},
log basis x={10},
tick align=outside,
tick pos=left,
x grid style={darkgray176},
xlabel={Time [s]},
xmin=0.8, xmax=600,
xmode=log,
xtick style={color=black},
y grid style={darkgray176},
ylabel=\empty, 
ymin=0.06, ymax=0.08,
ytick style={color=black}
]
\path [draw=gray127, fill=gray127, opacity=0.5]
(axis cs:0.05,0.0709074678300218)
--(axis cs:0.05,0.0712881603797118)
--(axis cs:0.31,0.0712881603797118)
--(axis cs:1.2,0.0712881603797118)
--(axis cs:3.93,0.0712881603797118)
--(axis cs:12.92,0.0712881603797118)
--(axis cs:33.32,0.0712881603797118)
--(axis cs:86.63,0.0712881603797118)
--(axis cs:192.94,0.0712881603797118)
--(axis cs:367.61,0.0712881603797118)
--(axis cs:646.85,0.0712881603797118)
--(axis cs:646.85,0.0709074678300218)
--(axis cs:646.85,0.0709074678300218)
--(axis cs:367.61,0.0709074678300218)
--(axis cs:192.94,0.0709074678300218)
--(axis cs:86.63,0.0709074678300218)
--(axis cs:33.32,0.0709074678300218)
--(axis cs:12.92,0.0709074678300218)
--(axis cs:3.93,0.0709074678300218)
--(axis cs:1.2,0.0709074678300218)
--(axis cs:0.31,0.0709074678300218)
--(axis cs:0.05,0.0709074678300218)
--cycle;

\addplot [semithick, black]
table {%
0.05 0.0712881603797118
0.31 0.0712881603797118
1.2 0.0712881603797118
3.93 0.0712881603797118
12.92 0.0712881603797118
33.32 0.0712881603797118
86.63 0.0712881603797118
192.94 0.0712881603797118
367.61 0.0712881603797118
646.85 0.0712881603797118
};
\addplot [semithick, crimson2143940]
table {%
0.31 0.063327
1.2 0.067147
3.93 0.068848
12.92 0.069642
33.32 0.069735
86.63 0.069871
192.94 0.070054
367.61 0.070065
646.85 0.070075
};
\addplot [semithick, crimson2143940, dashed]
table {%
0.31 0.076902
1.2 0.073683
3.93 0.072323
12.92 0.071667
33.32 0.071319
86.63 0.071139
192.94 0.071048
367.61 0.071002
646.85 0.070979
};
\addplot [semithick, steelblue31119180, dashed]
table {%
0.31 0.063327
1.2 0.067637
3.93 0.069517
12.92 0.070412
33.32 0.070844
86.63 0.071056
192.94 0.071161
367.61 0.071213
646.85 0.071239
};
\addplot [semithick, steelblue31119180]
table {%
0.31 0.076902
1.2 0.074281
3.93 0.073015
12.92 0.072452
33.32 0.072343
86.63 0.072226
192.94 0.072078
367.61 0.072059
646.85 0.072057
};
\draw (axis cs:1,0.0718) node[
  scale=0.5,
  anchor=base west,
  text=black,
  rotate=0.0
]{$W(\Omega)'$};
\end{axis}%
\end{tikzpicture}%
    \caption{\ex{Ahrs} with evidence 2.}
    \label{subfig:ahrs_2}
\end{subfigure}
\begin{subfigure}[b]{\plotwidth}
\begin{tikzpicture}

\definecolor{crimson2143940}{RGB}{214,39,40}
\definecolor{darkgray176}{RGB}{176,176,176}
\definecolor{gray127}{RGB}{127,127,127}
\definecolor{steelblue31119180}{RGB}{31,119,180}

\begin{axis}[
    width=\textwidth,
    height=\plotheight,
    xtick={0.1, 1, 10, 60, 600},
    xticklabels={0.1, 1, 10, 60, 600},
    ytick={0.4, 0.6, 0.8, 1},
    yticklabels={\phantom{0}0.4, 0.6, 0.8, 1},
    tick label style={font=\scriptsize},
    label style={font=\scriptsize},
    major tick length=\MajorTickLength,
    xlabel style={yshift=\xlabelShift},
log basis x={10},
tick align=outside,
tick pos=left,
x grid style={darkgray176},
xlabel={Time [s]},
xmin=0.8, xmax=600,
xmode=log,
xtick style={color=black},
y grid style={darkgray176},
ylabel={Cond. reach. prob.},
ymin=0.4, ymax=1.0,
ytick style={color=black}
]
\path [draw=gray127, fill=gray127, opacity=0.5]
(axis cs:0.05,0.824094391236942)
--(axis cs:0.05,0.839301713822009)
--(axis cs:0.16,0.839301713822009)
--(axis cs:0.55,0.839301713822009)
--(axis cs:2.3,0.839301713822009)
--(axis cs:7.97,0.839301713822009)
--(axis cs:28.86,0.839301713822009)
--(axis cs:91.66,0.839301713822009)
--(axis cs:295.82,0.839301713822009)
--(axis cs:912.23,0.839301713822009)
--(axis cs:912.23,0.824094391236942)
--(axis cs:912.23,0.824094391236942)
--(axis cs:295.82,0.824094391236942)
--(axis cs:91.66,0.824094391236942)
--(axis cs:28.86,0.824094391236942)
--(axis cs:7.97,0.824094391236942)
--(axis cs:2.3,0.824094391236942)
--(axis cs:0.55,0.824094391236942)
--(axis cs:0.16,0.824094391236942)
--(axis cs:0.05,0.824094391236942)
--cycle;

\addplot [semithick, black]
table {%
0.05 0.839301713822009
0.16 0.839301713822009
0.55 0.839301713822009
2.3 0.839301713822009
7.97 0.839301713822009
28.86 0.839301713822009
91.66 0.839301713822009
295.82 0.839301713822009
912.23 0.839301713822009
};
\addplot [semithick, crimson2143940]
table {%
0.16 0.510816
0.55 0.6763
2.3 0.753933
7.97 0.78932
28.86 0.799641
91.66 0.804624
295.82 0.806785
912.23 0.807689
};
\addplot [semithick, crimson2143940, dashed]
table {%
0.16 0.952249
0.55 0.906499
2.3 0.870919
7.97 0.848703
28.86 0.836396
91.66 0.829932
295.82 0.826621
912.23 0.824945
};
\addplot [semithick, steelblue31119180, dashed]
table {%
0.16 0.510816
0.55 0.691344
2.3 0.771221
7.97 0.806593
28.86 0.823193
91.66 0.831052
295.82 0.834785
912.23 0.836695
};
\addplot [semithick, steelblue31119180]
table {%
0.16 0.952248
0.55 0.913046
2.3 0.882564
7.97 0.863673
28.86 0.857128
91.66 0.853789
295.82 0.852212
912.23 0.851548
};
\draw (axis cs:1,0.85) node[
  scale=0.5,
  anchor=base west,
  text=black,
  rotate=0.0
]{$W(\Omega)'$};
\end{axis}%
\end{tikzpicture}%
    \caption{\ex{Phil} with evidence 1.}
    \label{subfig:phil_1}
\end{subfigure}
\hfill
\begin{subfigure}[b]{\plotwidth}
\begin{tikzpicture}

\definecolor{crimson2143940}{RGB}{214,39,40}
\definecolor{darkgray176}{RGB}{176,176,176}
\definecolor{gray127}{RGB}{127,127,127}
\definecolor{steelblue31119180}{RGB}{31,119,180}

\begin{axis}[
    width=\textwidth,
    height=\plotheight,
    xtick={0.1, 1, 10, 60, 600},
    xticklabels={0.1, 1, 10, 60, 600},
    ytick={0, 0.01, 0.02},
    yticklabels={0, 0.01, 0.02},
    scaled ticks=false,
    tick label style={font=\scriptsize},
    label style={font=\scriptsize},
    major tick length=\MajorTickLength,
    xlabel style={yshift=\xlabelShift},
log basis x={10},
tick align=outside,
tick pos=left,
x grid style={darkgray176},
xlabel={Time [s]},
xmin=0.8, xmax=600,
xmode=log,
xtick style={color=black},
y grid style={darkgray176},
ylabel=\empty, 
ymin=0.0, ymax=0.02,
ytick style={color=black}
]
\path [draw=gray127, fill=gray127, opacity=0.5]
(axis cs:0.05,0.00358341191726812)
--(axis cs:0.05,0.00359636177823899)
--(axis cs:0.86,0.00359636177823899)
--(axis cs:2.61,0.00359636177823899)
--(axis cs:7.04,0.00359636177823899)
--(axis cs:17.79,0.00359636177823899)
--(axis cs:46.14,0.00359636177823899)
--(axis cs:98.73,0.00359636177823899)
--(axis cs:195.88,0.00359636177823899)
--(axis cs:354.2,0.00359636177823899)
--(axis cs:598.37,0.00359636177823899)
--(axis cs:920.42,0.00359636177823899)
--(axis cs:920.42,0.00358341191726812)
--(axis cs:920.42,0.00358341191726812)
--(axis cs:598.37,0.00358341191726812)
--(axis cs:354.2,0.00358341191726812)
--(axis cs:195.88,0.00358341191726812)
--(axis cs:98.73,0.00358341191726812)
--(axis cs:46.14,0.00358341191726812)
--(axis cs:17.79,0.00358341191726812)
--(axis cs:7.04,0.00358341191726812)
--(axis cs:2.61,0.00358341191726812)
--(axis cs:0.86,0.00358341191726812)
--(axis cs:0.05,0.00358341191726812)
--cycle;

\addplot [semithick, black]
table {%
0.05 0.00359636177823899
0.86 0.00359636177823899
2.61 0.00359636177823899
7.04 0.00359636177823899
17.79 0.00359636177823899
46.14 0.00359636177823899
98.73 0.00359636177823899
195.88 0.00359636177823899
354.2 0.00359636177823899
598.37 0.00359636177823899
920.42 0.00359636177823899
};
\addplot [semithick, crimson2143940]
table {%
0.86 0.002904
2.61 0.002934
7.04 0.003052
17.79 0.003117
46.14 0.003216
98.73 0.003216
195.88 0.00328
354.2 0.003281
598.37 0.003283
920.42 0.003318
};
\addplot [semithick, crimson2143940, dashed]
table {%
0.86 0.016526
2.61 0.007741
7.04 0.00516
17.79 0.004296
46.14 0.003916
98.73 0.003749
195.88 0.003671
354.2 0.003633
598.37 0.003614
920.42 0.003605
};
\addplot [semithick, steelblue31119180, dashed]
table {%
0.86 0.002904
2.61 0.002936
7.04 0.003062
17.79 0.003225
46.14 0.003357
98.73 0.003453
195.88 0.003515
354.2 0.003549
598.37 0.003567
920.42 0.003577
};
\addplot [semithick, steelblue31119180]
table {%
0.86 0.016526
2.61 0.008203
7.04 0.005352
17.79 0.004923
46.14 0.004349
98.73 0.004349
195.88 0.004117
354.2 0.004114
598.37 0.004111
920.42 0.004009
};
\draw (axis cs:1,0.004) node[
  scale=0.5,
  anchor=base west,
  text=black,
  rotate=0.0
]{$W(\Omega)'$};
\end{axis}%
\end{tikzpicture}%
    \caption{\ex{Tandem} with evidence 1.}
    \label{subfig:tandem_1}
\end{subfigure}
\hfill
\begin{subfigure}[b]{\plotwidth}
\begin{tikzpicture}

\definecolor{crimson2143940}{RGB}{214,39,40}
\definecolor{darkgray176}{RGB}{176,176,176}
\definecolor{gray127}{RGB}{127,127,127}
\definecolor{steelblue31119180}{RGB}{31,119,180}

\begin{axis}[
    width=\textwidth,
    height=\plotheight,
    xtick={0.1, 1, 10, 60, 600},
    xticklabels={0.1, 1, 10, 60, 600},
    ytick={0.4, 0.6, 0.8, 1},
    yticklabels={\phantom{0}0.4, 0.6, 0.8, 1},
    tick label style={font=\scriptsize},
    label style={font=\scriptsize},
    major tick length=\MajorTickLength,
    xlabel style={yshift=\xlabelShift},
log basis x={10},
tick align=outside,
tick pos=left,
x grid style={darkgray176},
xlabel={Time [s]},
xmin=0.8, xmax=600,
xmode=log,
xtick style={color=black},
y grid style={darkgray176},
ylabel=\empty, 
ymin=0.4, ymax=1.0,
ytick style={color=black}
]
\path [draw=gray127, fill=gray127, opacity=0.5]
(axis cs:0.05,0.744682293404531)
--(axis cs:0.05,0.74839342375536)
--(axis cs:93.14,0.74839342375536)
--(axis cs:367.09,0.74839342375536)
--(axis cs:2951.89,0.74839342375536)
--(axis cs:2951.89,0.744682293404531)
--(axis cs:2951.89,0.744682293404531)
--(axis cs:367.09,0.744682293404531)
--(axis cs:93.14,0.744682293404531)
--(axis cs:0.05,0.744682293404531)
--cycle;

\addplot [semithick, black]
table {%
0.05 0.74839342375536
93.14 0.74839342375536
367.09 0.74839342375536
2951.89 0.74839342375536
};
\addplot [semithick, crimson2143940]
table {%
93.14 0.687551
367.09 0.713631
2951.89 0.726752
};
\addplot [semithick, crimson2143940, dashed]
table {%
93.14 0.791948
367.09 0.778071
2951.89 0.777599
};
\addplot [semithick, steelblue31119180, dashed]
table {%
93.14 0.687551
367.09 0.717634
2951.89 0.73141
};
\addplot [semithick, steelblue31119180]
table {%
93.14 0.791948
367.09 0.779804
2951.89 0.781912
};
\draw (axis cs:1,0.75) node[
  scale=0.5,
  anchor=base west,
  text=black,
  rotate=0.0
]{$W(\Omega)'$};
\end{axis}%
\end{tikzpicture}%
    \caption{\ex{Polling} with evidence 1.}
    \label{subfig:polling_1}
\end{subfigure}
\caption{Results for different CTMCs and different imprecisely timed evidence. The blue lines are the upper bound $\overline{\Weight(\imphist)}$ (solid) and lower bound $\underline{\Weight(\imphist)}$ (dashed) on $\Weight(\imphist)$; red lines show the analogous lower bounds.}
\label{fig:results}
\end{figure}
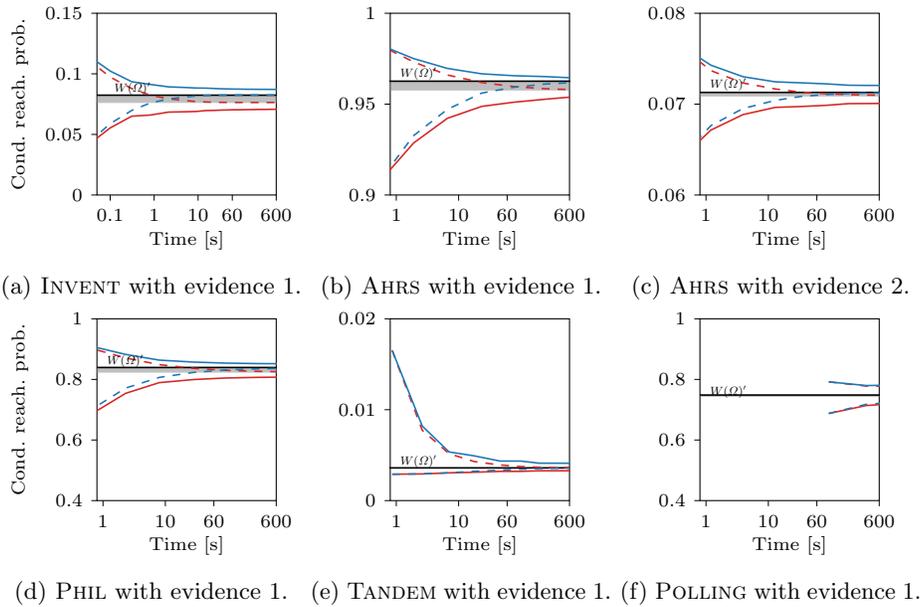

\subsubsection{Feasibility of our approach}
We investigate if our approach yields tight bounds on the weighted reachability.
\cref{fig:results} shows the results for each example with different imprecise evidences.
The gray area shows the weighted reachabilities (as per \cref{thm2:problem_solution}) for 500 precisely timed instances $\rho \in \imphist$ sampled from the imprecise evidence. 
Recall that the weighted reachability $\Weight(\imphist)$ is an upper bound to the weighted reachability for each precisely timed evidence $\rho \in \imphist$.
Thus, the upper bound of the gray areas in \cref{fig:results}, indicated as ${\Weight(\imphist)}'$, is a lower bound of the actual (but unknown) value $\Weight(\imphist)$.
The blue lines are the upper bound $\overline{\Weight(\imphist)}$ (solid) and lower bound $\underline{\Weight(\imphist)}$ (dashed) on $\Weight(\imphist)$ returned by our approach over the runtime (note the log-scale).
Similarly, the red lines are the bounds obtained for \emph{minimizing} the minimal weighted reachability.

\paragraph{Tightness of bounds}
\Cref{fig:results} shows that we obtain reasonably tight bounds within a minute.
In all examples, the lower bound converges close to the maximum of the samples.
The improvement is steepest at the start, indicating that the bounds can be quickly improved by only a few refinement steps.
In the long run, the improvement of the bounds diminishes, both because each refinement takes longer, and the improvement in each iteration gets smaller.

While not clearly visible in \cref{subfig:invent_1}, the lower bound $\underline{\Weight(\imphist)}$ (dashed blue line) slightly exceeds the maximal sampled value ${\Weight(\imphist)}'$ (gray area) in the end.
Thus, the lower bound $\underline{\Weight(\imphist)}$ is closer to the actual weighted reachability $\Weight(\imphist)$ than the maximal lower bound obtained by sampling.
We observed the same results when increasing the number of samples used to compute ${\Weight(\imphist)}'$ to $10\,000$.

\Cref{subfig:ahrs_1,subfig:ahrs_2} show the general benefit of conditioning on evidence.
While evidence 1 for \ex{AHRS} results in a state in which a system failure within the next 50 time units is very likely, a failure conditioned on evidence 2 is very unlikely.

\begin{table*}[t!]
\setlength{\tabcolsep}{3pt}

\centering
\caption{Results for all benchmarks (evidence length $|\imphist|$ is given after the name).}

\scalebox{0.82}{
\begin{tabular}{lrrlrrrrrr}
    \toprule
    Example & \multicolumn{2}{c}{Refine} & \multicolumn{1}{c}{{Results}} & \multicolumn{3}{c}{{iMDP size}} & \multicolumn{3}{c}{{Timings [s]}}\\
    \cmidrule(l){1-1} \cmidrule(l){2-3} \cmidrule(l){4-4} \cmidrule(l){5-7} \cmidrule(l){8-10}
    Name ($|\imphist|$)
    & Iter.
    & \#split
    & Bounds on $\Weight(\imphist)$
    & States
    & Actions
    & Transit.
    & Unfold
    & Analysis
    & Total
    \\
    \midrule
    \ex{Invent-1} (4)  & 25 & 555 & [0.082536, 0.087138] & 898 & 128307 & 278163 & 537.51 & 100.28 & 637.81\\
    \ex{Invent-2} (4)  & 27 & 585 & [0.071768, 0.078328] & 1180 & 167917 & 503537 & 606.91 & 43.85 & 650.74\\
    \ex{Invent-3} (9)  & 14 & 1176& [0.071757, 0.078577] & 2372 & 369329 & 1107877 & 658.77 & 127.83 & 786.57\\
    \ex{Invent-4} (15) & 7  & 528 & [0.070924, 0.080409] & 1016 & 39927 & 115119 & 42.63 & 974.89 & 1017.50\\
    \ex{Ahrs-1} (4)    & 6  & 177 & [0.962041, 0.964306] & 6283 & 282538 & 1415346 & 620.75 & 179.65 & 800.39\\
    \ex{Ahrs-2} (4)    & 8  & 154 & [0.071239, 0.072057] & 727 & 20626 & 81362 & 577.64 & 69.19 & 646.85\\
    \ex{Ahrs-3} (4)    & 6  & 176 & [0.964936, 0.969535] & 6112 & 280954 & 1334231 & 749.38 & 152.61 & 902.00\\
    \ex{Ahrs-4} (4)    & 7  & 300 & [0.209591, 0.213820] & 7179 & 535763 & 3618439 & 1801.81 & 111.39 & 1913.18\\
    \ex{Phil-1} (5)    & 7  & 339 & [0.836695, 0.851548] & 4122 & 370091 & 3887339 & 851.92 & 60.32 & 912.23\\
    \ex{Phil-2} (5)    & 6  & 209 & [0.236734, 0.246067] & 4050 & 203549 & 3669721 & 419.97 & 376.73 & 796.70\\
    \ex{Tandem-1} (2)  & 9  & 77  & [0.003577, 0.004009] & 1203 & 24561 & 362657 & 917.29 & 3.11 & 920.42\\
    \ex{Tandem-2} (2)  & 7  & 80  & [0.130187, 0.162762] & 587 & 25096 & 75548 & 549.03 & 327.93 & 876.96\\
    \ex{Polling-1} (3) & 2  & 9   & [0.731410, 0.781912] & 3267 & 9798 & 2379462 & 348.83 & 2603.08 & 2951.89\\
    \bottomrule
\end{tabular}
}
\label{tab:benchmarks_summary}
\end{table*}%

\subsubsection{Scalability}
We investigate the scalability of our approach.
\Cref{tab:benchmarks_summary} provides the refinement statistics, bounds, model sizes, and runtimes for all benchmarks.
The refinement statistics show the number of iterations (Iter.) and the total number of splits made in the partition.
The bounds on $W(\imphist)$ (which are the solid and dashed blue lines in \cref{fig:results}) and the iMDP sizes are both given for the final iteration.
For the timings, we provide the total time (over all iterations) and distinguish between the time spent on unfolding the model, i.e., constructing the iMDP, and analyzing it.
Our approach terminates if after an iteration, the total run time so far exceeds the time limit of 10 minutes.
The total runtime can, therefore, be significantly longer than 10 minutes.

\paragraph{CTMC size}
The size of the CTMC has a large impact on the total runtime.
For example, for evidence with 4 labels, we can perform up to 27 iterations for \ex{Invent} (3 CTMC states) but only 6-8 for \ex{Ahrs} (74 CTMC states).
For \ex{Polling} (576 states) with evidence of length 2, performing 2 iterations takes nearly 50 minutes.
The CTMC size affects the unfolding, which requires computing the transient probabilities from all states in one layer to all states in the next one.
A clear example is \ex{Tandem-1} (120 CTMC states), where nearly all of the runtime is spent on the unfolding.
A larger CTMC also leads to more transitions in the iMDP and thus, can increase the analysis time.
An example is \ex{Polling-1} (576 CTMC states), where most of the runtime is spent in the analysis.

\paragraph{Length of evidence}
The time per refinement step increases with the length of the evidence.
For example, for \ex{Invent-4} (with 15 labels), only 7 iterations are performed because the resulting iMDP has 15 layers, so the value iteration becomes the bottleneck (nearly 96\% of the runtime for this example is spent on analyzing the iMDP). This is consistent with experiments on unfolded MDPs in~\cite{DBLP:conf/cav/JungesTS20,DBLP:conf/tacas/HartmannsJQW23}, where policy iteration-based methods lead to better results.

\paragraph{Caching improves performance}
To reduce runtimes, we implemented caching in our tool, which allows reusing transient probability computations.
For example, if all labels in the evidence have a time interval of the same width (which is the case for \ex{Ahrs-1}), transient probabilities are the same between layers of the unfolding.
\cref{tab:benchmarks_overview} shows that the unfolding times for \ex{Ahrs-1} are indeed lower than for, e.g., \ex{Ahrs-3}, which has time intervals of different widths.

\paragraph{Likelihood of evidence}
The size of the iMDP is influenced by the number of CTMC states corresponding to the observed labels.
Less likely observations can, therefore, mean that fewer CTMC states need to be considered in each layer.
For example, the evidence in \ex{Ahrs-2} is 17 times less likely (probability of 0.01, with 569 states) than \ex{Ahrs-4} (probability of 0.17, with 4007 states), and as a result the total runtime of \ex{Ahrs-2} is less than for \ex{Ahrs-4}.
\section{Related work}
\label{sec:related}

Beyond the related work discussed in \cref{sec:introduction} on DTAs~\cite{DBLP:journals/corr/abs-1101-3694,DBLP:journals/pe/AmparoreD18,DBLP:conf/cav/FengKLXZ18} and synthesis of timeouts \cite{DBLP:conf/qest/BrazdilKKNR15,DBLP:conf/mascots/KorenciakKR16,DBLP:journals/tomacs/BaierDKKR19}, the following work is related to ours.

Imprecisely timed evidence can also be expressed via multiphase timed until formulas in continuous-time linear logic~\cite{DBLP:conf/tacas/GuanY22}.
However, similar to DTA, conditioning and computing the maximal weighted reachability are not supported.

Conditional probabilities naturally appear in runtime monitoring~\cite{DBLP:journals/fmsd/SanchezSABBCFFK19,DBLP:series/lncs/BartocciDDFMNS18} and speech recognition~\cite{DBLP:journals/ftsig/GalesY07}, and is, e.g., studied for hidden Markov models~\cite{DBLP:conf/rv/StollerBSGHSZ11} and MDPs~\cite{DBLP:conf/tacas/BaierKKM14,DBLP:conf/cav/JungesTS20}.
Approximate model checking of conditional continuous stochastic logic for CTMCs is studied in~\cite{DBLP:journals/ipl/GaoXZZ13,DBLP:conf/atva/GaoHZ013} by means of a product construction formalized as CTMC, but their algorithm is incompatible with imprecise observation times.
Conditional sampling in CTMCs is studied by~\cite{hobolth2009simulation}, and maximum likelihood inference of paths in CTMCs by~\cite{DBLP:conf/nips/Perkins09}.

The abstraction of continuous stochastic models into iMDPs is well-studied~\cite{LSAZ21}.
Various papers develop abstractions of stochastic hybrid and dynamical systems into iMDPs~\cite{Badings2022JAIR,Badings2023AAAI,DBLP:conf/tacas/CauchiA19} and relate to early work in~\cite{DBLP:conf/lics/JonssonL91}.
Our abstraction in \cref{sec:abstraction} is similar to a game-based abstraction, in which the (possibly infinite-state) model is abstracted into a two-player stochastic game~\cite{DBLP:journals/fmsd/KattenbeltKNP10,DBLP:conf/tacas/HahnHWZ10,DBLP:conf/qest/HahnNPWZ11}. In particular, iMDPs are a special case of a stochastic game in which the actions of the second player in each state only differ in transition probabilities~\cite{DBLP:journals/ior/NilimG05,DBLP:journals/mor/Iyengar05}.
An interesting extension of our approach is to consider CTMCs with uncertain \emph{transition rates}, which have recently also been studied extensively, e.g., in \cite{DBLP:conf/rtss/HanKM08,DBLP:journals/acta/CeskaDPKB17,DBLP:journals/jss/CalinescuCGKP18,DBLP:conf/qest/CardelliGLTTV21,Cardelli2023TAC,DBLP:conf/cav/BadingsJJSV22}.
\section{Conclusion}
\label{sec:conclusion}

We have presented the first method for computing reachability probabilities in CTMCs that are conditioned on evidence with imprecise observation times.
The method combines an unfolding of the problem into an infinite MDP with an iterative abstraction into a finite iMDP.
Our experiments have shown the applicability of our method across several benchmarks.

A natural next step is to embed our method in a predictive runtime monitoring framework, which introduces the challenge of running our algorithm in realtime.
Another interesting extension is to consider uncertainty in the observed labels.
Furthermore, this paper gives rise to four concrete challenges.
First, finding better methods to overapproximate the union over MDP probabilities in \cref{eq:probability_intervals} may lead to tighter bounds on the weighted reachability.
Second, we want to optimize over the consistent schedulers only, potentially via techniques used in~\cite{DBLP:conf/cav/AndriushchenkoC21}.
Third, we wish to explore better refinement strategies for the iMDP.
The final challenge is to improve the computational performance of our implementation.
One promising option to improve performance is to adapt symbolic policy iteration~\cite{DBLP:journals/tomacs/BaierDKKR19}, which only considers small sets of candidate actions instead of all actions.

\clearpage
\bibliographystyle{splncs04}
\bibliography{literature}

\ifappendix
    \clearpage
    \appendix
    
    \section{Proof of \cref{thm1:risk_on_MDP}}
\label{appendix:proof:trans}

The proof of \cref{thm1:risk_on_MDP} is based on \cref{lemma:trans} below, which states that, for every $\rho \in \imphist$ with consistent scheduler $\sched \consistent \rho$, it holds that
\begin{equation}
    \label{eq:thm1:risk_on_MDP:proof1}
    \Prob_\ctmc(\pi(t_\hor) = s \mid [\pi \models \rho])
    = \Prob_\mdp^\sched( \Finally \tuple{ s, t_\star } \mid [\xi \models \rho]).
\end{equation}
That is, the conditional transient probability $\Prob_\ctmc(\pi(t_\hor) = s \mid [\pi \models \rho])$ equals the conditional reachability probabilities in \cref{eq:thm1:risk_on_MDP:proof1} for the unfolded MDP $\mdp$, under a scheduler $\sched \sim \rho$ consistent to $\rho$.
We then use \cref{eq:thm1:risk_on_MDP:proof1} to rewrite  \cref{problem} as
\begin{equation}
\begin{split}
    \label{eq:thm1:risk_on_MDP:proof2}
    \Weight(\imphist) = \sup_{\rho \in \imphist} \, \sum_{s \in S} \Prob_\mdp^\sched( \Finally \tuple{ s, t_\star } \mid [\xi \models \rho]) \cdot \weight(s),
\end{split}
\end{equation}
where $\sched \consistent \rho$, as per \cref{def:consistent_schedulers}.
Due to the one-to-one correspondence between choices $\rho \in \imphist$ and consistent schedulers, we can replace the supremum over $\rho \in \imphist$ by the supremum over consistent schedulers, which yields the expression in \cref{eq:thm1:risk_on_MDP}.

Next, we formalize the lemma that shows \cref{eq:thm1:risk_on_MDP:proof1}.
In the proof of this lemma, we use the notion of the \emph{state-trace} $\sTr_\rho(\pi) \in S^\hor$ of a CTMC path $\pi$ onto the time points $t_1,\ldots,t_\hor$ of the precisely timed evidence $\rho$, which is defined as follows:
\begin{equation}
    \label{eq:state_trace}
    \sTr_\rho(\pi) = \big( \pi(t_1), \pi(t_2), \ldots, \pi(t_\hor) \big).
\end{equation}
Conditional reachability probabilities in the CTMC and in the unfolded MDP are then related as follows.
\begin{lemma}
    \label{lemma:trans}
    For a CTMC $\ctmc$ and the imprecise evidence $\imphist$, let $\mdp = \Unfold(\ctmc, \timegraph_\imphist)$ be the unfolded MDP.
    For every instance $\rho \in \imphist$ with corresponding consistent scheduler $\sched \in \Sched_\mdp^\mathsf{cons}$, i.e., such that $\sched \consistent \rho$, it holds that
    \begin{equation}
        \label{eq:lemma:trans}
        \Prob_\ctmc(\pi(t_\hor) = s \mid [\pi \models \rho])
        = \Prob_\mdp^\sched( \Finally \tuple{ s, t_\star } \mid [\xi \models \rho]).
    \end{equation}
\end{lemma}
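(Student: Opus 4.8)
The plan is to prove \cref{eq:lemma:trans} by expanding both conditional probabilities with Bayes' rule and then matching the resulting numerators and denominators through a common object, namely the induced law over state-traces. Applying Bayes' rule (as in \cref{eq:BayesRule}) to each side, it suffices to establish the numerator identity
\[
\Prob_\ctmc([\pi(t_\hor)=s]\cap[\pi\models\rho]) = \Prob_\mdp^\sched([\Finally\tuple{s,t_\star}]\cap[\xi\models\rho])
\]
together with the denominator identity $\Prob_\ctmc(\pi\models\rho)=\Prob_\mdp^\sched(\xi\models\rho)$. Crucially, every event appearing here depends on the sampled path only through its values at the finitely many time points $t_1,\ldots,t_\hor$, so the whole argument reduces to comparing the two laws over state-traces $\sTr_\rho(\pi)$.

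The key step is to establish a probability-preserving correspondence between finite MDP paths under the consistent scheduler $\sched$ and CTMC state-traces. Under $\sched \consistent \rho$, the induced Markov chain visits exactly the time layers $0,t_1,\ldots,t_\hor,t_\star$, so every finite path has the form $\tuple{s_I,0}\,\tuple{s_1,t_1}\cdots\tuple{s_\hor,t_\hor}\,\tuple{s_\hor,t_\star}$. By \cref{def:unfolded_mdp}, its probability equals $\prod_{i=0}^{\hor-1}\pr_{s_i}(t_{i+1}-t_i)(s_{i+1})$, with $s_0=s_I$, $t_0=0$, since the final transition to $t_\star$ is deterministic. On the CTMC side, the Markov property of $\ctmc$ gives the Chapman--Kolmogorov factorization
\[
\Prob_\ctmc\big(\textstyle\bigcap_{i=1}^\hor[\pi(t_i)=s_i]\big)=\prod_{i=0}^{\hor-1}\pr_{s_i}(t_{i+1}-t_i)(s_{i+1}),
\]
because the transient distributions $\pr_s(\cdot)$ are precisely the time-indexed transition kernels of the CTMC. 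Hence the MDP path probability and the CTMC state-trace probability coincide term by term.

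It then remains to match the events. By definition $\pi\models\rho$ iff $L(\pi(t_i))=o_i$ for all $i$, i.e.\ iff the trace lies in $\prod_i L^{-1}(o_i)$; the analogous MDP consistency event $\xi\models\rho$ selects exactly those paths whose $i$-th state $\tuple{s_i,t_i}$ satisfies $L(s_i)=o_i$. Thus both consistency events pick out the same set of traces, which yields the denominator identity. For the numerators, under $\sched$ the event $\Finally\tuple{s,t_\star}$ holds iff the deterministic last step is taken from $\tuple{s_\hor,t_\hor}$ with $s_\hor=s$, matching $\pi(t_\hor)=s$. Summing the shared per-trace probabilities over this common event set gives the numerator identity, and \cref{eq:lemma:trans} follows.

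The main obstacle is the second step: justifying the Chapman--Kolmogorov factorization of the joint CTMC law over $t_1,\ldots,t_\hor$ and arguing that all events involved are measurable cylinder events depending only on the finite state-trace, so that marginalizing out the continuous residence-time information is legitimate. Once this factorization is aligned with the transient transition probabilities of \cref{def:unfolded_mdp}, the remaining event-matching is routine bookkeeping.
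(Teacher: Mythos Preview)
Your proposal is correct and follows essentially the same approach as the paper: both apply Bayes' rule to reduce \cref{eq:lemma:trans} to a numerator identity and a denominator identity, then establish both by expressing each side as a sum over state-traces $s_{1:\hor}$ whose per-trace mass is the product $\prod_i \pr_{s_{i-1}}(t_i-t_{i-1})(s_i)$, matching the CTMC transient factorization to the unfolded MDP transition probabilities from \cref{def:unfolded_mdp}. The paper's write-up makes the partition of $\Pi_\rho$ by preimages of $\sTr_\rho$ and the set $\Gamma_\rho$ explicit, whereas you invoke Chapman--Kolmogorov directly, but the mathematical content is the same.
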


\begin{proof}
First, let us use Bayes' rule to rewrite the right-hand side of \cref{eq:lemma:trans} as
\begin{equation}
    \label{eq:lemma:trans:proof1}
    \Prob_\mdp^\sched( \Finally \tuple{ s, t_\star } \mid [\xi \models \rho]) = 
    \frac{\Prob_\mdp^\sched( \Finally \tuple{ s, t_\star } \cap [\xi \models \rho])}
    {\Prob_\mdp^\sched( \xi \models \rho)}.
\end{equation}
We will prove \cref{lemma:trans} by showing that the numerator and denominator in \cref{eq:lemma:trans:proof1} are equivalent to those in \cref{eq:BayesRule}.
In other words, we will show that
\begin{align}
    \label{eq:lemma2a}
    \Prob_\ctmc \big( [\pi(t_\hor) = s] \cap [\pi \models \rho] \big) &= 
    \Prob_\mdp^\sched \big( \Finally \tuple{ s, t_\star } \cap [\pi \models \rho] \big) \enskip \forall s \in S
    \\
    \label{eq:lemma2b}
    \Prob_\ctmc \big( \pi \models \rho \big) &= 
    \Prob_\mdp^\sched \big( \xi \models \rho \big),
\end{align}
where $\sched \consistent \rho$ are consistent as per \cref{def:consistent_schedulers}.
We prove \cref{eq:lemma2b} first and then prove \cref{eq:lemma2a} in a largely analogous manner.
    
\textbf{Proof of \cref{eq:lemma2b}}.
From \cref{eq:BayesRule}, we have for every $\rho \in \imphist$ that
\begin{equation}
    \label{eq:Lemma2:proof1}
    \Prob_\ctmc(\pi \models \rho)
    =
    \int_\Pi \mathbbm{1}_{(\pi \models \rho)} \Pr(\pi) d\pi
    =
    \Prob_\ctmc(\pi \in \Pi_\rho),
\end{equation}
where $\Pi_\rho = \{ \pi \in \Pi : \pi \models \rho \} \subset \Pi$ is the subset of CTMC paths consistent with evidence $\rho$.
Let $\Gamma_\rho$ be the set of state-traces that are consistent with evidence $\rho$:
\begin{equation}
    \label{eq:Gamma}
    \Gamma_\rho = 
    \bigcup_{\pi \in \Pi} \{ \sTr_\rho(\pi) : \pi \models \rho \} \subseteq S^{\hor}.
\end{equation}
Let us denote $(x_1,\ldots,x_k)$ by $x_{1:k}$ for brevity.
Using this notation, the preimages $\sTr^{-1}(s_{1:\hor})$ for all $s_{1:\hor} \in \Gamma_\rho$ form a partition of $\Pi_\rho$, that is:
\begin{equation}
\label{eq:Lemma2:proof2}
    \bar{\Pi} = \bigcup_{s_{1:\hor} \in \Gamma_\rho}{ \sTr^{-1}_\rho(s_{1:\hor}) }
    \,\, \text{and} \,\,
    \sTr^{-1}_\rho(s_{1:\hor}) \cup \sTr^{-1}_\rho(s_{1:\hor}') = \varnothing \, \enskip \forall s_{1:\hor}, s_{1:\hor}' \in \Gamma_\rho.
\end{equation}
Thus, we can rewrite \cref{eq:Lemma2:proof1} as a finite sum over all state-traces $s_{1:\hor} \in \Gamma_\rho$:
\begin{equation}
\label{eq:Lemma2:proof3}
    \Prob_\ctmc(\pi \models \rho)
    =
    \sum_{s_{1:\hor} \in \Gamma_\rho} \Prob_\ctmc \big( 
    \pi \in \Pi 
    : \sTr_\rho(\pi) = s_{1:\hor} 
    \big).
\end{equation}
The term $\Prob_\ctmc(\pi \in \Pi : \sTr_\rho(\pi) = s_{1:\hor})$ is the probability for a path $\pi$ whose state-trace is $s_{1:\hor}$.
This probability is equal to the product of the appropriate transient probabilities $\pr_{s_{i-1}}(t_i - t_{i-1})(s_i)$ for all $s \in \{1,\ldots,\hor\}$, as defined in \cref{sec:preliminaries}:
\begin{equation}
\begin{split}
    \Prob_\ctmc(\pi \models \rho) &= \sum_{s_{1:\hor} \in \Gamma_\rho} 
    \prod_{i=1}^{\hor} \pr_{s_{i-1}} (t_i - t_{i-1}) (s_i)
    \label{eq:Lemma2:proof4},
\end{split}
\end{equation}
where $s_0 = s_I$ and $t_0 = 0$.
Recall from \cref{def:unfolded_mdp} that the unfolded MDP has transition probabilities $P( \tuple{ s,t }, t', \tuple{ s',t' }) = \pr_s(t' - t)(s')$.
Hence, we obtain
\begin{equation}
\begin{split}
    \label{eq:Lemma2:proof5}
    \Prob_\ctmc(\pi \models \rho)
    &= \sum_{s_{1:\hor} \in \Gamma_\rho} \prod_{i=1}^\hor P \big( \tuple{ s_{i-1},t_{i-1} }, t_i, \tuple{ s_i,t_i } \big)
    \\
    &= \sum_{s_{1:\hor} \in \Gamma_\rho} \Prob_\mdp^\sched \big( \xi \in \Xi_\mdp : \xi = \tuple{ s_I, 0 }, \tuple{ s_1,t_1 },\ldots, \tuple{ s_\hor, t_\hor }
    \big).
    \nonumber
\end{split}
\end{equation}
A state-trace $s_{1:\hor}$ belongs to $\Gamma_\rho$ if and only if the associated MDP path $\xi = \tuple{ s_I, 0 }, \tuple{ s_1,t_1 },\ldots, \tuple{ s_\hor, t_\hor }
\in \Xi_\mdp$ is consistent with $\rho$, i.e., $\xi \models \rho$.
Thus, we can rewrite \cref{eq:Lemma2:proof5} as the desired expression:
\begin{equation}
    \Prob_\ctmc(\pi \models \rho) 
    = \sum_{\xi \in \Xi_\mdp} \Prob_\mdp^\sched(\xi) \cdot \mathbbm{1}_{(\xi \models \rho)}
    = \Prob_\mdp^\sched(\xi \models \rho).
    \label{eq:Lemma2:proof6}
\end{equation}

\textbf{Proof of \cref{eq:lemma2a}.}
Again, using the fact that the preimages $\sTr^{-1}(s_{1:\hor})$ for all $s_{1:\hor} \in \Gamma_\rho$ form a partition of $\Pi_\rho$ (where $\sTr$ is defined by \cref{eq:state_trace}), we obtain
\begin{equation}
    \label{eq:Lemma2:proof7}
    \Prob_\ctmc([\pi(t_\hor) = s] \cap [\pi \models \rho]) = 
    \sum_{s_{1:\hor} \in \Gamma_\rho} \Prob_\ctmc \big( 
    \pi \in \Pi 
    : [\pi(t_\hor) = s] \cap [\sTr_\rho(\pi) = s_{1:\hor}]
    \big).
\end{equation}
Compared to \cref{eq:lemma2a}, we additionally require that $\pi(t_\hor) = s$, which corresponds with reaching the terminal state $\tuple{ s,t_\star } \in Q$ in the unfolded MDP $\mdp$ corresponding with CTMC state $s \in S$.
As a result, we have that
\begin{align}
    \Prob_\ctmc([\pi(t_\hor) = s] \cap [\pi \models \rho]) 
    &= \sum_{s_{1:\hor} \in \Gamma_\rho} \prod_{i=1}^\hor P \big( \tuple{ s_{i-1},t_{i-1} }, t_i, \tuple{ s_i,t_i }\big) \cdot \mathbbm{1}_{(s_\hor = s)}
    \nonumber
    \\
    &= \sum_{s_{1:\hor} \in \Gamma_\rho} \Prob_\mdp^\sched \big( \tuple{ s_I, 0 },\tuple{ s_1,t_1 },\ldots,\tuple{ s_\hor, t_\hor }
    \big) \cdot \mathbbm{1}_{(s_\hor = s)}
    \nonumber
    \\
    &= \sum_{\xi \in \Xi_\mdp} \Prob_\mdp^\sched(\xi) \cdot \mathbbm{1}_{(\xi \models \rho)} \cdot \mathbbm{1}_{(\xi \models \Finally \tuple{ s,t_\star })}
    \nonumber
    \\
    &= \Prob_\mdp^\sched \big( \Finally \tuple{ s,t_\star } \cap [\pi \models \rho] \big).
    \label{eq:Lemma2:proof8}
\end{align}
Observe \cref{eq:Lemma2:proof8} is the desired expression in \cref{eq:lemma2a}, so we conclude the proof.
\end{proof}


\section{Proof of \cref{thm:sandwich}}
\label{appendix:proof:sandwich}

Let $H \colon Q \to \tilde{Q}$ be a function that maps every state of MDP $\mdp_{|\imphist}$ to a state of iMDP $\imdp$, such that $H(\tuple{s,t}) = \tuple{s,\impT} \in \tilde{Q}$, where $t \in \impT$.
The mapping $H$ is well-defined as $\tilde{Q}$ represents a proper partition of $Q$.
We prove \cref{thm:sandwich} by showing that for every MDP state $\tuple{s,t} \in Q$, the corresponding iMDP state $H(\tuple{s,t}) = \tuple{s,\impT} \in \tilde{Q}$ overapproximates its behavior.
Formally, for the conditioned MDP, take any transition from state $\tuple{s,t} \in Q$ via (enabled) action $t' \in A(\tuple{s,t})$ to state $\tuple{s',t'} \in Q$.
For any such transition, there exists an iMDP transition $\tuple{s,\impT} \in \tilde{Q}$ via $\impT' \in A(\tuple{s,\impT})$ to state $\tuple{s',\impT'} \in \tilde{Q}$ such that
\begin{enumerate}
    \item there exists $\tilde{P} \in \calP$ such that $P\big( \tuple{s,t}, t', \tuple{s',t'} \big) = \tilde{P}\big( \tuple{s,\impT}, \impT', \tuple{s',\impT'} \big)$, and
    \item it holds that $H(\tuple{s,t}) = \tuple{s,\impT} \in \tilde{Q}$ and $H(\tuple{s',t'}) = \tuple{s',\impT'} \in \tilde{Q}$.
\end{enumerate}
Observe that the converse also holds: for any iMDP transition, there exists a corresponding MDP transition such that the conditions above hold.
These conditions formalize that there always exists a transition function $\tilde{P} \in \calP$ such that the induced MDP $\imdp[\tilde{P}]$ is a \emph{probabilistic bisimulation} of the conditioned MDP $\mdp_{|\imphist}$, similar as in~\cite{DBLP:conf/lics/JonssonL91}.
Hence, there exists a $\tilde{P} \in \calP$ such that 
\begin{equation}
    \label{eq:thm:sandwich:proof1}
    \max_{\sched \in \Sched_\imdp^\mathsf{con}} \Weight_\imdp(\tilde{P}, \sched) = \Weight(\imphist).
\end{equation}
The upper and lower bounds in \cref{eq:thm:sandwich} follow directly from \cref{eq:thm:sandwich:proof1}, so we conclude the proof.

    \section{Computing the Probability for Given Evidence}
\label{appendix:path_probabilities}

We discuss the variation from \cref{remark:variation} of computing the probability for observing the given precise evidence $\rho$ in more detail.
Specifically, we show that, with minor modifications to our unfolding procedure, we can compute the probability that a CTMC generates the given (precise) evidence $\rho$.
Instead of looping all states $\tuple{ s,t } \in Q_\mathsf{reset}$ inconsistent with the evidence (defined in \cref{eq:inconsistent_states}) back to the initial state, we now create self-loops for those states.
Formally, given an unfolded MDP $\mdp = \Unfold(\ctmc, \timegraph_\rho) = \tuple{ Q, q_I, A, P }$ for precise evidence $\rho$, we define the modified MDP $\mdp_\rho = \tuple{ Q, q_I, A, P_\rho }$ with transition function $P_\rho$ defined for all $\tuple{ s, t }, \tuple{ s', t' } \in Q$ as
\begin{align*}
    \label{eq:path_MDP}
    P_\rho \big( \tuple{ s, t }, t', \tuple{ s', t' } \big) = \begin{cases}
        P \big( \tuple{ s, t }, t', \tuple{ s', t' } \big) & \text{if } \tuple{ s,t } \notin Q_\mathsf{reset}(\imphist),
        \\
        \mathbbm{1}_{(\tuple{ s, t } = \tuple{ s', t' })} & \text{if } \tuple{ s,t } \in Q_\mathsf{reset}(\imphist), 
    \end{cases}
\end{align*}
with $Q_\mathsf{reset}$ defined by \cref{eq:inconsistent_states}.
This transformation of the unfolded MDP is shown in \cref{fig:path_probability} for two different precisely timed evidences.
Then, the probability $\Prob_\ctmc(\pi \models \rho)$ that CTMC $\ctmc$ generates the evidence $\rho$ is the probability that $\mdp_\rho$ reaches a state $\tuple{s,t_\star}$ for time $t_\star$ and any CTMC state $s \in S$:
\begin{equation}
    \label{eq:probability_of_history}
    \Prob_\ctmc(\pi \models \rho) = \sum_{s \in S} \Prob_{\mdp_\rho}( \Finally\tuple{ s, t_\star } ). 
\end{equation}
Intuitively, \cref{eq:probability_intervals} computes the probability of ever reaching a terminal state at time $t_\star$.
Because all paths inconsistent with the evidence $\rho$ are trapped by the self-loops (in non-terminal states), \cref{eq:probability_intervals} thus computes the probability that the CTMC generates a path that is consistent with $\rho$.
For imprecise evidence $\imphist$, we can also ask for the \emph{worst-case} probability to obtain any instance $\rho \in \imphist$, by modifying the unfolded MDP $\mdp = \Unfold(\ctmc, \timegraph_\imphist)$ in an analogous manner.

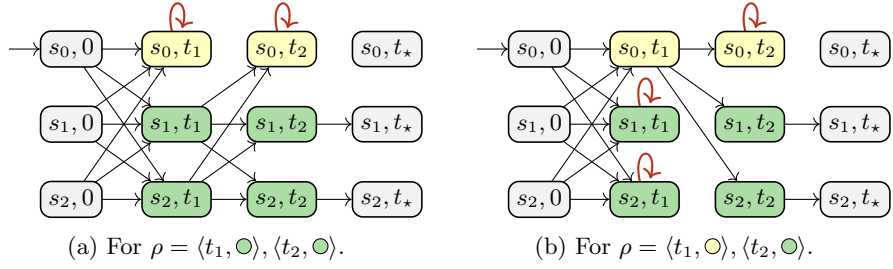
\begin{figure}[t]
\begin{subfigure}[b]{0.49\linewidth}
    \centering
    \iftikzcompile
    \begin{tikzpicture}[node distance=1.0cm]

    \def\xshift{0.4cm}
      \node[state, initial,initial where=left, initial text=] (s0t0) {$s_0, 0$};
      \node[state, below of=s0t0] (s1t0) {$s_1, 0$};
      \node[state, below of=s1t0] (s2t0) {$s_2, 0$};
      \node[state, empty label, right of=s0t0, xshift=\xshift] (s0t1) {$s_0, t_1$};
      \node[state, nonempty label, below of=s0t1] (s1t1) {$s_1, t_1$};
      \node[state, nonempty label, below of=s1t1] (s2t1) {$s_2, t_1$};
      \node[state, empty label, right of=s0t1, xshift=\xshift] (s0t2) {$s_0, t_2$};
      \node[state, nonempty label, below of=s0t2] (s1t2) {$s_1, t_2$};
      \node[state, nonempty label, below of=s1t2] (s2t2) {$s_2, t_2$};
      \node[state, right of=s0t2, xshift=\xshift] (s0star) {$s_0, t_\star$};
      \node[state, below of=s0star] (s1star) {$s_1, t_\star$};
      \node[state, below of=s1star] (s2star) {$s_2, t_\star$};
      \begin{pgfonlayer}{bg}    
      \path[->,shorten <= -2pt+\pgflinewidth] 
                (s0t0) edge [] node [above] {} (s0t1)
                (s0t0) edge [] node [above] {} (s1t1)
                (s0t0) edge [] node [above] {} (s2t1)
                (s1t0) edge [] node [above] {} (s0t1)
                (s1t0) edge [] node [above] {} (s1t1)
                (s1t0) edge [] node [above] {} (s2t1)
                (s2t0) edge [] node [above] {} (s0t1)
                (s2t0) edge [] node [above] {} (s1t1)
                (s2t0) edge [] node [above] {} (s2t1)
                %
                (s0t1) edge [loop above, red, thick] node [] {} (s0t1)
                (s1t1) edge [] node [above] {} (s0t2)
                (s1t1) edge [] node [above] {} (s1t2)
                (s1t1) edge [] node [above] {} (s2t2)
                (s2t1) edge [] node [above] {} (s0t2)
                (s2t1) edge [] node [above] {} (s1t2)
                (s2t1) edge [] node [above] {} (s2t2)
                %
                (s0t2) edge [loop above, red, thick] node [] {} (s0t2)
                (s1t2) edge [] node [above] {} (s1star)
                (s2t2) edge [] node [above] {} (s2star);
    \end{pgfonlayer}
    \end{tikzpicture}
    \fi
    \caption{For $\rho = \langle t_1, \bigcdot{nonempty_color} \rangle, \langle t_2, \bigcdot{nonempty_color} \rangle$.}
    \label{fig:path_probability1}
    \end{subfigure}
\hfill 
\begin{subfigure}[b]{0.49\linewidth}
    \centering
    \iftikzcompile
    \begin{tikzpicture}[node distance=1.0cm]

    \def\xshift{0.4cm}
      \node[state, initial,initial where=left, initial text=] (s0t0) {$s_0, 0$};
      \node[state, below of=s0t0] (s1t0) {$s_1, 0$};
      \node[state, below of=s1t0] (s2t0) {$s_2, 0$};
      \node[state, empty label, right of=s0t0, xshift=\xshift] (s0t1) {$s_0, t_1$};
      \node[state, nonempty label, below of=s0t1] (s1t1) {$s_1, t_1$};
      \node[state, nonempty label, below of=s1t1] (s2t1) {$s_2, t_1$};
      \node[state, empty label, right of=s0t1, xshift=\xshift] (s0t2) {$s_0, t_2$};
      \node[state, nonempty label, below of=s0t2] (s1t2) {$s_1, t_2$};
      \node[state, nonempty label, below of=s1t2] (s2t2) {$s_2, t_2$};
      \node[state, right of=s0t2, xshift=\xshift] (s0star) {$s_0, t_\star$};
      \node[state, below of=s0star] (s1star) {$s_1, t_\star$};
      \node[state, below of=s1star] (s2star) {$s_2, t_\star$};
      \begin{pgfonlayer}{bg}    
      \path[->,shorten <= -2pt+\pgflinewidth] 
                (s0t0) edge [] node [above] {} (s0t1)
                (s0t0) edge [] node [above] {} (s1t1)
                (s0t0) edge [] node [above] {} (s2t1)
                (s1t0) edge [] node [above] {} (s0t1)
                (s1t0) edge [] node [above] {} (s1t1)
                (s1t0) edge [] node [above] {} (s2t1)
                (s2t0) edge [] node [above] {} (s0t1)
                (s2t0) edge [] node [above] {} (s1t1)
                (s2t0) edge [] node [above] {} (s2t1)
                (s0t1) edge [] node [above] {} (s0t2)
                (s0t1) edge [] node [above] {} (s1t2)
                (s0t1) edge [] node [above] {} (s2t2)
                (s1t1) edge [loop above, red, thick] node [] {} (s1t1)
                (s2t1) edge [loop above, red, thick] node [] {} (s2t1)
                %
                (s0t2) edge [loop above, red, thick] node [] {} (s0t2)
                (s1t2) edge [] node [above] {} (s1star)
                (s2t2) edge [] node [above] {} (s2star);
    \end{pgfonlayer}
    \end{tikzpicture}
    \fi
    \caption{For $\rho = \langle t_1, \bigcdot{empty_color} \rangle, \langle t_2, \bigcdot{nonempty_color} \rangle$.}
    \label{fig:path_probability2}
\end{subfigure}
    \caption{Using the unfolded MDP to compute the probability for two precisely timed evidences. States that do not agree with the evidence are made absorbing.}
    \label{fig:path_probability}
\end{figure}
    \section{Details on Numerical Experiments}
\label{appendix:additional_results}

In this appendix, we provide additional details on the benchmarks used in \cref{sec:experiments}, and we provide more detailed results.

\subsection{Benchmarks}
\label{appendix:additional_results:benchmarks}

We describe each of the benchmarks used in \cref{sec:experiments} in more detail in the following.
\Cref{tab:benchmarks_evidence} provides the evidence for each example.
\begin{table*}[tp]
\setlength{\tabcolsep}{3pt}

\centering
\caption{Evidences for each benchmark.}
\label{tab:benchmarks_evidence}

\scalebox{1}{
\begin{tabular}{lp{0.85\textwidth}}
	\toprule
    Example & Evidence\\
    \midrule
    \ex{Invent-1} & 
        $\tuple{[0,0],\neg \texttt{empty}}$,
        $\tuple{[0.9,1.1],\neg \texttt{empty}}$,
        $\tuple{[1.9,2.1],\texttt{empty}}$,\\&
        $\tuple{[2.9,3.1],\neg \texttt{empty}}$\\
    \ex{Invent-2} & 
        $\tuple{[0,0],\neg \texttt{empty}}$,
        $\tuple{[0.9,1.1],\neg \texttt{empty}}$,
        $\tuple{[1.9,2.1],\neg \texttt{empty}}$,\\&
        $\tuple{[2.9,3.1],\neg \texttt{empty}}$\\
    \ex{Invent-3} & 
        $\tuple{[0,0],\neg \texttt{empty}}$,
        $\tuple{[0.9,1.1],\neg \texttt{empty}}$,
        $\tuple{[1.9,2.1],\neg \texttt{empty}}$,\\&
        $\tuple{[2.9,3.1],\neg \texttt{empty}}$,
        $\tuple{[3.9,4.1],\neg \texttt{empty}}$,
        $\tuple{[4.9,5.1],\neg \texttt{empty}}$,\\&
        $\tuple{[5.9,6.1],\neg \texttt{empty}}$,
        $\tuple{[6.9,7.1],\neg \texttt{empty}}$,
        $\tuple{[7.9,8.1],\neg \texttt{empty}}$\\
    \ex{Invent-4} & 
        $\tuple{[0,0],\neg \texttt{empty}}$,
        $\tuple{[0.9,1.1],\neg \texttt{empty}}$,
        $\tuple{[1.9,2.1],\neg \texttt{empty}}$,\\&
        $\tuple{[2.9,3.1],\neg \texttt{empty}}$,
        $\tuple{[3.9,4.1],\neg \texttt{empty}}$,
        $\tuple{[4.9,5.1],\texttt{empty}}$,\\&
        $\tuple{[5.9,6.1],\neg \texttt{empty}}$,
        $\tuple{[6.9,7.1],\neg \texttt{empty}}$,
        $\tuple{[7.9,8.1],\neg \texttt{empty}}$,\\&
        $\tuple{[8.9,9.1],\neg \texttt{empty}}$,
        $\tuple{[9.9,10.1],\texttt{empty}}$,
        $\tuple{[10.9,11.1],\texttt{empty}}$,\\&
        $\tuple{[11.9,12.1],\neg \texttt{empty}}$,
        $\tuple{[12.9,13.1],\neg \texttt{empty}}$,
        $\tuple{[13.9,14.1],\neg \texttt{empty}}$\\
    \ex{Ahrs-1} & 
        $\tuple{[50,60],\texttt{init}}$,\\
        &$\tuple{[100,110],\texttt{A1}_f}$,\\
        &$\tuple{[200,210],\texttt{A1}_f \land \texttt{Spare}_f}$,\\
        &$\tuple{[300,310],\texttt{A1}_f \land \texttt{Spare}_f \land \texttt{B}_f}$\\
    \ex{Ahrs-2} & 
        $\tuple{[50,60],\texttt{init}}$,\\
        &$\tuple{[100,110],\neg \texttt{A1}_f \land \neg \texttt{Spare}_f \land \neg \texttt{B}_f}$,\\
        &$\tuple{[200,210],\texttt{A1}_f \land \neg \texttt{Spare}_f \land \neg \texttt{B}_f}$,\\
        &$\tuple{[300,310],\texttt{A1}_f \land \texttt{A2}_f \land \neg \texttt{Spare}_f \land \neg \texttt{B}_f}$\\
    \ex{Ahrs-3} & 
    $\tuple{[30,60],\texttt{init}}$,\\
        &$\tuple{[90,110],\texttt{A1}_f}$,\\
        &$\tuple{[190,205],\texttt{A1}_f \land \texttt{Spare}_f}$,\\
        &$\tuple{[300,310],\texttt{A1}_f \land \texttt{Spare}_f \land \texttt{B}_f}$\\
    \ex{Ahrs-4} & 
        $\tuple{[50,60],\neg \texttt{Spare}_f}$,\\
        &$\tuple{[100,110],\neg \texttt{Spare}_f}$,\\
        &$\tuple{[200,210],\neg \texttt{Spare}_f}$,\\
        &$\tuple{[300,310],\texttt{A1}_f \land \neg \texttt{Spare}_f}$\\
    \ex{Phil-1} & 
        $\tuple{[0.9,1.1],\neg \texttt{fork}_1}$,\\
        &$\tuple{[1.9,2.1],\neg \texttt{fork}_1 \land \neg \texttt{fork}_2}$,\\
        &$\tuple{[2.9,3.1],\neg \texttt{fork}_1 \land \neg \texttt{fork}_2 \land \neg \texttt{fork}_3}$,\\
        &$\tuple{[3.9,4.1],\neg \texttt{fork}_1 \land \neg \texttt{fork}_2 \land \neg \texttt{fork}_3 \land \neg \texttt{fork}_4}$\\
    \ex{Phil-2} & 
        $\tuple{[0.9,1.1],\neg \texttt{fork}_1}$,\\
        &$\tuple{[1.9,2.1],\neg \texttt{fork}_1}$,\\
        &$\tuple{[2.9,3.1],\neg \texttt{fork}_1}$,\\
        &$\tuple{[3.9,4.1],\texttt{fork}_1}$\\
    \ex{Tandem-1} & 
        $\tuple{[2.5,3.0],\neg \texttt{first\_full}}$,\\
        &$\tuple{[5.0,5.5],\texttt{first\_full} \land \neg \texttt{second\_full}}$\\
    \ex{Tandem-2} & 
        $\tuple{[2.5,3.0],\texttt{first\_full}}$,\\
        &$\tuple{[5.0,5.5],\texttt{first\_full} \land \texttt{second\_full}}$\\
    \ex{Polling-1} & 
        $\tuple{[0.9,1.1],\texttt{waiting1}}$,\\
        &$\tuple{[1.9,2.1],\texttt{waiting2}}$,\\
        &$\tuple{[2.9,3.1],\texttt{waiting3}}$\\
    \bottomrule
\end{tabular}
}
\end{table*}%

\textsc{Invent} is the inventory model from \cref{fig:ctmc} with the label \texttt{empty} if the inventory is empty and $\neg\texttt{empty}$ otherwise.
The state-weight function is defined by the probability of reaching an empty inventory within a time bound of $0.1$.

\textsc{Ahrs} is a dynamic fault tree model of an Active Heat Rejection System~\cite{BoudaliD05}.
The model was taken from the \ex{Ffort} fault tree collection~\cite{Ffort19} and converted into a CTMC using \storm~\cite{DBLP:journals/tii/VolkJK18}.
The evidence is given by observations of the failures of sub-systems and components, for instance $\texttt{A1}_f$ and $\texttt{Spare}_f$.
The state-weight function is given by the probability of system failure within the next $50$ time units.

\textsc{Phil} models a variant of the dining philosophers~\cite{DBLP:journals/acta/Dijkstra71} and was taken from the \ex{QComp} benchmark collection~\cite{DBLP:conf/tacas/HartmannsKPQR19}.
As evidence, we can observe for each fork whether it is currently in use ($\neg \texttt{fork}_i$) or available.
The state-weight function is given by the probability of reaching a deadlock within $1$ time unit.

\textsc{Tandem} models a tandem queuing network consisting of a Coxian distribution with two phases sequentially composed with a M/M/1-queue~\cite{hermanns1999multi}.
As evidence, we observe whether any of the two queues is full.
The state-weight function is given by the probability that both queues will be full within 10 time units.

\textsc{Polling} models a cyclic server polling system~\cite{DBLP:conf/infocom/ChoiT92}.
Six stations are are handled by one polling server, which processes the jobs of the stations with a given rate.
As evidence, we observe whether stations are empty, i.e., have no jobs.
The state-weight function is given by the probability that all stations have no jobs within 10 time units.

\subsection{Additional results}
\begin{figure}[tp]
{\newcommand{\xlabelShift}{0.1cm}
\newcommand{\plotheight}{4cm}
\newcommand{\plotwidth}{0.325\linewidth}
\begin{subfigure}[b]{\plotwidth}
\begin{tikzpicture}

\definecolor{crimson2143940}{RGB}{214,39,40}
\definecolor{darkgray176}{RGB}{176,176,176}
\definecolor{gray127}{RGB}{127,127,127}
\definecolor{steelblue31119180}{RGB}{31,119,180}

\begin{axis}[
    width=\textwidth,
    height=\plotheight,
    xtick={0.1, 1, 10, 60, 600},
    xticklabels={0.1, 1, 10, 60, 600},
    ytick={0, 0.05, 0.1, 0.15},
    yticklabels={0, 0.05, 0.1, 0.15},
    scaled ticks=false,
    tick label style={font=\scriptsize},
    label style={font=\scriptsize},
    major tick length=\MajorTickLength,
    xlabel style={yshift=\xlabelShift},
log basis x={10},
tick align=outside,
tick pos=left,
x grid style={darkgray176},
xlabel={Time [s]},
xmin=0.05, xmax=600,
xmode=log,
xtick style={color=black},
y grid style={darkgray176},
ylabel={Cond. reach. prob.},
ymin=0.00, ymax=0.15,
ytick style={color=black}
]
\path [draw=gray127, fill=gray127, opacity=0.5]
(axis cs:0.05,0.0709661857196395)
--(axis cs:0.05,0.0722810926218195)
--(axis cs:0.02,0.0722810926218195)
--(axis cs:0.04,0.0722810926218195)
--(axis cs:0.12,0.0722810926218195)
--(axis cs:0.31,0.0722810926218195)
--(axis cs:0.71,0.0722810926218195)
--(axis cs:1.87,0.0722810926218195)
--(axis cs:4.08,0.0722810926218195)
--(axis cs:7.46,0.0722810926218195)
--(axis cs:12.15,0.0722810926218195)
--(axis cs:18.75,0.0722810926218195)
--(axis cs:26.96,0.0722810926218195)
--(axis cs:37.15,0.0722810926218195)
--(axis cs:49.71,0.0722810926218195)
--(axis cs:64.48,0.0722810926218195)
--(axis cs:81.44,0.0722810926218195)
--(axis cs:101.86,0.0722810926218195)
--(axis cs:125.82,0.0722810926218195)
--(axis cs:152.13,0.0722810926218195)
--(axis cs:181.7,0.0722810926218195)
--(axis cs:214.69,0.0722810926218195)
--(axis cs:253.8,0.0722810926218195)
--(axis cs:296.86,0.0722810926218195)
--(axis cs:345.21,0.0722810926218195)
--(axis cs:398.2,0.0722810926218195)
--(axis cs:454.79,0.0722810926218195)
--(axis cs:516.87,0.0722810926218195)
--(axis cs:581.41,0.0722810926218195)
--(axis cs:650.74,0.0722810926218195)
--(axis cs:650.74,0.0709661857196395)
--(axis cs:650.74,0.0709661857196395)
--(axis cs:581.41,0.0709661857196395)
--(axis cs:516.87,0.0709661857196395)
--(axis cs:454.79,0.0709661857196395)
--(axis cs:398.2,0.0709661857196395)
--(axis cs:345.21,0.0709661857196395)
--(axis cs:296.86,0.0709661857196395)
--(axis cs:253.8,0.0709661857196395)
--(axis cs:214.69,0.0709661857196395)
--(axis cs:181.7,0.0709661857196395)
--(axis cs:152.13,0.0709661857196395)
--(axis cs:125.82,0.0709661857196395)
--(axis cs:101.86,0.0709661857196395)
--(axis cs:81.44,0.0709661857196395)
--(axis cs:64.48,0.0709661857196395)
--(axis cs:49.71,0.0709661857196395)
--(axis cs:37.15,0.0709661857196395)
--(axis cs:26.96,0.0709661857196395)
--(axis cs:18.75,0.0709661857196395)
--(axis cs:12.15,0.0709661857196395)
--(axis cs:7.46,0.0709661857196395)
--(axis cs:4.08,0.0709661857196395)
--(axis cs:1.87,0.0709661857196395)
--(axis cs:0.71,0.0709661857196395)
--(axis cs:0.31,0.0709661857196395)
--(axis cs:0.12,0.0709661857196395)
--(axis cs:0.04,0.0709661857196395)
--(axis cs:0.02,0.0709661857196395)
--(axis cs:0.05,0.0709661857196395)
--cycle;

\addplot [semithick, black]
table {%
0.05 0.0722810926218195
0.02 0.0722810926218195
0.04 0.0722810926218195
0.12 0.0722810926218195
0.31 0.0722810926218195
0.71 0.0722810926218195
1.87 0.0722810926218195
4.08 0.0722810926218195
7.46 0.0722810926218195
12.15 0.0722810926218195
18.75 0.0722810926218195
26.96 0.0722810926218195
37.15 0.0722810926218195
49.71 0.0722810926218195
64.48 0.0722810926218195
81.44 0.0722810926218195
101.86 0.0722810926218195
125.82 0.0722810926218195
152.13 0.0722810926218195
181.7 0.0722810926218195
214.69 0.0722810926218195
253.8 0.0722810926218195
296.86 0.0722810926218195
345.21 0.0722810926218195
398.2 0.0722810926218195
454.79 0.0722810926218195
516.87 0.0722810926218195
581.41 0.0722810926218195
650.74 0.0722810926218195
};
\addplot [semithick, crimson2143940]
table {%
0.02 0.023507
0.04 0.037037
0.12 0.050444
0.31 0.059599
0.71 0.060193
1.87 0.062749
4.08 0.062863
7.46 0.062971
12.15 0.064078
18.75 0.064137
26.96 0.064249
37.15 0.064303
49.71 0.064354
64.48 0.064405
81.44 0.064756
101.86 0.064786
125.82 0.064844
152.13 0.064872
181.7 0.0649
214.69 0.064928
253.8 0.064955
296.86 0.064981
345.21 0.065007
398.2 0.065059
454.79 0.065084
516.87 0.065095
581.41 0.065109
650.74 0.06511
};
\addplot [semithick, crimson2143940, dashed]
table {%
0.02 0.126406
0.04 0.107001
0.12 0.092237
0.31 0.082885
0.71 0.077574
1.87 0.074736
4.08 0.073268
7.46 0.072522
12.15 0.072145
18.75 0.071956
26.96 0.071861
37.15 0.071814
49.71 0.07179
64.48 0.071778
81.44 0.071773
101.86 0.07177
125.82 0.071768
152.13 0.071767
181.7 0.071767
214.69 0.071767
253.8 0.071767
296.86 0.071767
345.21 0.071767
398.2 0.071767
454.79 0.071767
516.87 0.071767
581.41 0.071767
650.74 0.071767
};
\addplot [semithick, steelblue31119180, dashed]
table {%
0.02 0.023507
0.04 0.037386
0.12 0.051175
0.31 0.060549
0.71 0.065925
1.87 0.068788
4.08 0.070264
7.46 0.071013
12.15 0.07139
18.75 0.071579
26.96 0.071674
37.15 0.071721
49.71 0.071745
64.48 0.071757
81.44 0.071763
101.86 0.071765
125.82 0.071767
152.13 0.071768
181.7 0.071768
214.69 0.071768
253.8 0.071768
296.86 0.071768
345.21 0.071768
398.2 0.071768
454.79 0.071768
516.87 0.071768
581.41 0.071768
650.74 0.071768
};
\addplot [semithick, steelblue31119180]
table {%
0.02 0.126406
0.04 0.107271
0.12 0.092712
0.31 0.083469
0.71 0.083168
1.87 0.080589
4.08 0.080529
7.46 0.080465
12.15 0.079258
18.75 0.079228
26.96 0.079164
37.15 0.079131
49.71 0.079097
64.48 0.079062
81.44 0.078582
101.86 0.078566
125.82 0.078535
152.13 0.07852
181.7 0.078504
214.69 0.078487
253.8 0.078471
296.86 0.078454
345.21 0.078437
398.2 0.078402
454.79 0.078384
516.87 0.078366
581.41 0.078348
650.74 0.078328
};
\draw (axis cs:0.1,0.075) node[
  scale=0.5,
  anchor=base west,
  text=black,
  rotate=0.0
]{$W(\Omega)'$};
\end{axis}%
\end{tikzpicture}%
    \caption{\ex{Invent} with evidence 2.}
    \label{subfig:invent_2}
\end{subfigure}
\hfill
\begin{subfigure}[b]{\plotwidth}
\begin{tikzpicture}

\definecolor{crimson2143940}{RGB}{214,39,40}
\definecolor{darkgray176}{RGB}{176,176,176}
\definecolor{gray127}{RGB}{127,127,127}
\definecolor{steelblue31119180}{RGB}{31,119,180}

\begin{axis}[
    width=\textwidth,
    height=\plotheight,
    xtick={0.1, 1, 10, 60, 600},
    xticklabels={0.1, 1, 10, 60, 600},
    ytick={0, 0.05, 0.1, 0.15},
    yticklabels={0, 0.05, 0.1, 0.15},
    tick label style={font=\scriptsize},
    label style={font=\scriptsize},
    major tick length=\MajorTickLength,
    xlabel style={yshift=\xlabelShift},
log basis x={10},
tick align=outside,
tick pos=left,
x grid style={darkgray176},
xlabel={Time [s]},
xmin=0.8, xmax=600,
xmode=log,
xtick style={color=black},
y grid style={darkgray176},
ylabel=\empty, 
ymin=0.00, ymax=0.15,
ytick style={color=black}
]
\path [draw=gray127, fill=gray127, opacity=0.5]
(axis cs:0.05,0.0709305666449216)
--(axis cs:0.05,0.0722726800754794)
--(axis cs:0.03,0.0722726800754794)
--(axis cs:0.08,0.0722726800754794)
--(axis cs:0.25,0.0722726800754794)
--(axis cs:0.86,0.0722726800754794)
--(axis cs:2.51,0.0722726800754794)
--(axis cs:7.41,0.0722726800754794)
--(axis cs:16.52,0.0722726800754794)
--(axis cs:36.38,0.0722726800754794)
--(axis cs:65.34,0.0722726800754794)
--(axis cs:113.46,0.0722726800754794)
--(axis cs:184.07,0.0722726800754794)
--(axis cs:278.53,0.0722726800754794)
--(axis cs:413.12,0.0722726800754794)
--(axis cs:579.69,0.0722726800754794)
--(axis cs:786.57,0.0722726800754794)
--(axis cs:786.57,0.0709305666449216)
--(axis cs:786.57,0.0709305666449216)
--(axis cs:579.69,0.0709305666449216)
--(axis cs:413.12,0.0709305666449216)
--(axis cs:278.53,0.0709305666449216)
--(axis cs:184.07,0.0709305666449216)
--(axis cs:113.46,0.0709305666449216)
--(axis cs:65.34,0.0709305666449216)
--(axis cs:36.38,0.0709305666449216)
--(axis cs:16.52,0.0709305666449216)
--(axis cs:7.41,0.0709305666449216)
--(axis cs:2.51,0.0709305666449216)
--(axis cs:0.86,0.0709305666449216)
--(axis cs:0.25,0.0709305666449216)
--(axis cs:0.08,0.0709305666449216)
--(axis cs:0.03,0.0709305666449216)
--(axis cs:0.05,0.0709305666449216)
--cycle;

\addplot [semithick, black]
table {%
0.05 0.0722726800754794
0.03 0.0722726800754794
0.08 0.0722726800754794
0.25 0.0722726800754794
0.86 0.0722726800754794
2.51 0.0722726800754794
7.41 0.0722726800754794
16.52 0.0722726800754794
36.38 0.0722726800754794
65.34 0.0722726800754794
113.46 0.0722726800754794
184.07 0.0722726800754794
278.53 0.0722726800754794
413.12 0.0722726800754794
579.69 0.0722726800754794
786.57 0.0722726800754794
};
\addplot [semithick, crimson2143940]
table {%
0.03 0.023506
0.08 0.037033
0.25 0.050436
0.86 0.059591
2.51 0.060187
7.41 0.062743
16.52 0.062857
36.38 0.062965
65.34 0.064073
113.46 0.064131
184.07 0.064244
278.53 0.064297
413.12 0.064349
579.69 0.064399
786.57 0.064751
};
\addplot [semithick, crimson2143940, dashed]
table {%
0.03 0.126416
0.08 0.107008
0.25 0.09224
0.86 0.082884
2.51 0.077571
7.41 0.074733
16.52 0.073265
36.38 0.072518
65.34 0.072142
113.46 0.071952
184.07 0.071858
278.53 0.07181
413.12 0.071787
579.69 0.071775
786.57 0.071769
};
\addplot [semithick, steelblue31119180, dashed]
table {%
0.03 0.023506
0.08 0.037383
0.25 0.051169
0.86 0.060542
2.51 0.065918
7.41 0.068782
16.52 0.070258
36.38 0.071007
65.34 0.071384
113.46 0.071573
184.07 0.071668
278.53 0.071715
413.12 0.071739
579.69 0.071751
786.57 0.071757
};
\addplot [semithick, steelblue31119180]
table {%
0.03 0.126416
0.08 0.107276
0.25 0.092713
0.86 0.083468
2.51 0.083166
7.41 0.080586
16.52 0.080525
36.38 0.080462
65.34 0.079254
113.46 0.079224
184.07 0.07916
278.53 0.079127
413.12 0.079094
579.69 0.079059
786.57 0.078577
};
\draw (axis cs:1,0.073) node[
  scale=0.5,
  anchor=base west,
  text=black,
  rotate=0.0
]{$W(\Omega)'$};
\end{axis}

\end{tikzpicture}
    \caption{\ex{Invent} with evidence 3.}
    \label{subfig:invent_3}
\end{subfigure}
\hfill
\begin{subfigure}[b]{\plotwidth}
\begin{tikzpicture}

\definecolor{crimson2143940}{RGB}{214,39,40}
\definecolor{darkgray176}{RGB}{176,176,176}
\definecolor{gray127}{RGB}{127,127,127}
\definecolor{steelblue31119180}{RGB}{31,119,180}

\begin{axis}[
    width=\textwidth,
    height=\plotheight,
    xtick={0.1, 1, 10, 60, 600},
    xticklabels={0.1, 1, 10, 60, 600},
    ytick={0, 0.05, 0.1, 0.15},
    yticklabels={0, 0.05, 0.1, 0.15},
    tick label style={font=\scriptsize},
    label style={font=\scriptsize},
    major tick length=\MajorTickLength,
    xlabel style={yshift=\xlabelShift},
log basis x={10},
tick align=outside,
tick pos=left,
x grid style={darkgray176},
xlabel={Time [s]},
xmin=0.8, xmax=600,
xmode=log,
xtick style={color=black},
y grid style={darkgray176},
ylabel=\empty,
ymin=0.0, ymax=0.15,
ytick style={color=black}
]
\path [draw=gray127, fill=gray127, opacity=0.5]
(axis cs:0.05,0.0709451974300085)
--(axis cs:0.05,0.0722938112018405)
--(axis cs:10.96,0.0722938112018405)
--(axis cs:19.69,0.0722938112018405)
--(axis cs:29.48,0.0722938112018405)
--(axis cs:51.91,0.0722938112018405)
--(axis cs:107.65,0.0722938112018405)
--(axis cs:251.25,0.0722938112018405)
--(axis cs:505.12,0.0722938112018405)
--(axis cs:1017.5,0.0722938112018405)
--(axis cs:1017.5,0.0709451974300085)
--(axis cs:1017.5,0.0709451974300085)
--(axis cs:505.12,0.0709451974300085)
--(axis cs:251.25,0.0709451974300085)
--(axis cs:107.65,0.0709451974300085)
--(axis cs:51.91,0.0709451974300085)
--(axis cs:29.48,0.0709451974300085)
--(axis cs:19.69,0.0709451974300085)
--(axis cs:10.96,0.0709451974300085)
--(axis cs:0.05,0.0709451974300085)
--cycle;

\addplot [semithick, black]
table {%
0.05 0.0722938112018405
10.96 0.0722938112018405
19.69 0.0722938112018405
29.48 0.0722938112018405
51.91 0.0722938112018405
107.65 0.0722938112018405
251.25 0.0722938112018405
505.12 0.0722938112018405
1017.5 0.0722938112018405
};
\addplot [semithick, crimson2143940]
table {%
10.96 0.010971
19.69 0.035245
29.48 0.049944
51.91 0.059344
107.65 0.059975
251.25 0.062555
505.12 0.062678
1017.5 0.062788
};
\addplot [semithick, crimson2143940, dashed]
table {%
10.96 0.126417
19.69 0.107
29.48 0.092213
51.91 0.082829
107.65 0.077491
251.25 0.074636
505.12 0.073158
1017.5 0.072405
};
\addplot [semithick, steelblue31119180, dashed]
table {%
10.96 0.010971
19.69 0.036095
29.48 0.050835
51.91 0.060375
107.65 0.065803
251.25 0.068686
505.12 0.070171
1017.5 0.070924
};
\addplot [semithick, steelblue31119180]
table {%
10.96 0.126417
19.69 0.10727
29.48 0.092691
51.91 0.083428
107.65 0.083121
251.25 0.080535
505.12 0.080473
1017.5 0.080409
};
\draw (axis cs:1,0.073) node[
  scale=0.5,
  anchor=base west,
  text=black,
  rotate=0.0
]{$W(\Omega)'$};
\end{axis}

\end{tikzpicture}
    \caption{\ex{Invent} with evidence 4.}
    \label{subfig:invent_4}
\end{subfigure}
}

\newcommand{\plotheight}{5.5cm}
\begin{subfigure}[b]{0.5\linewidth}
\begin{tikzpicture}

\definecolor{crimson2143940}{RGB}{214,39,40}
\definecolor{darkgray176}{RGB}{176,176,176}
\definecolor{gray127}{RGB}{127,127,127}
\definecolor{steelblue31119180}{RGB}{31,119,180}

\begin{axis}[
    width=\textwidth,
    height=\plotheight,
    xtick={0.1, 1, 10, 60, 600},
    xticklabels={0.1, 1, 10, 60, 600},
    ytick={0.85, 0.9, 0.95, 1},
    yticklabels={0.85, 0.9, 0.95, 1},
log basis x={10},
tick align=outside,
tick pos=left,
x grid style={darkgray176},
xlabel={Time [s]},
xmin=0.8, xmax=600,
xmode=log,
xtick style={color=black},
y grid style={darkgray176},
ylabel={Cond. reach. prob.},
ymin=0.85, ymax=1.0,
ytick style={color=black}
]
\path [draw=gray127, fill=gray127, opacity=0.5]
(axis cs:0.05,0.957680733645)
--(axis cs:0.05,0.96580619582208)
--(axis cs:0.53,0.96580619582208)
--(axis cs:2.31,0.96580619582208)
--(axis cs:8.63,0.96580619582208)
--(axis cs:29.6,0.96580619582208)
--(axis cs:94.83,0.96580619582208)
--(axis cs:284.51,0.96580619582208)
--(axis cs:902,0.96580619582208)
--(axis cs:902,0.957680733645)
--(axis cs:902,0.957680733645)
--(axis cs:284.51,0.957680733645)
--(axis cs:94.83,0.957680733645)
--(axis cs:29.6,0.957680733645)
--(axis cs:8.63,0.957680733645)
--(axis cs:2.31,0.957680733645)
--(axis cs:0.53,0.957680733645)
--(axis cs:0.05,0.957680733645)
--cycle;

\addplot [semithick, black]
table {%
0.05 0.96580619582208
0.53 0.96580619582208
2.31 0.96580619582208
8.63 0.96580619582208
29.6 0.96580619582208
94.83 0.96580619582208
284.51 0.96580619582208
902 0.96580619582208
};
\addplot [semithick, crimson2143940]
table {%
0.53 0.861047
2.31 0.90718
8.63 0.928668
29.6 0.941207
94.83 0.94636
284.51 0.947352
902 0.948505
};
\addplot [semithick, crimson2143940, dashed]
table {%
0.53 0.994669
2.31 0.984359
8.63 0.973559
29.6 0.966009
94.83 0.96164
284.51 0.9593
902 0.95809
};
\addplot [semithick, steelblue31119180, dashed]
table {%
0.53 0.861047
2.31 0.920276
8.63 0.945178
29.6 0.955944
94.83 0.960777
284.51 0.963382
902 0.964936
};
\addplot [semithick, steelblue31119180]
table {%
0.53 0.994668
2.31 0.986196
8.63 0.977911
29.6 0.973028
94.83 0.971345
284.51 0.970409
902 0.969535
};
\draw (axis cs:1,0.97) node[
  scale=0.5,
  anchor=base west,
  text=black,
  rotate=0.0
]{$W(\Omega)'$};
\end{axis}

\end{tikzpicture}
    \caption{\ex{Ahrs} with evidence 3.}
    \label{subfig:ahrs_3}
\end{subfigure}
\hfill
\begin{subfigure}[b]{0.5\linewidth}
\begin{tikzpicture}

\definecolor{crimson2143940}{RGB}{214,39,40}
\definecolor{darkgray176}{RGB}{176,176,176}
\definecolor{gray127}{RGB}{127,127,127}
\definecolor{steelblue31119180}{RGB}{31,119,180}

\begin{axis}[
    width=\textwidth,
    height=\plotheight,
    xtick={0.1, 1, 10, 60, 600},
    xticklabels={0.1, 1, 10, 60, 600},
    ytick={0.15, 0.20, 0.25, 0.30},
    yticklabels={0.15, 0.20, 0.25, 0.30},
log basis x={10},
tick align=outside,
tick pos=left,
x grid style={darkgray176},
xlabel={Time [s]},
xmin=0.8, xmax=600,
xmode=log,
xtick style={color=black},
y grid style={darkgray176},
ylabel=\empty,
ymin=0.15, ymax=0.30,
ytick style={color=black}
]
\path [draw=gray127, fill=gray127, opacity=0.5]
(axis cs:0.05,0.206331807339844)
--(axis cs:0.05,0.210001959573082)
--(axis cs:0.44,0.210001959573082)
--(axis cs:1.8,0.210001959573082)
--(axis cs:5.11,0.210001959573082)
--(axis cs:16.9,0.210001959573082)
--(axis cs:54.51,0.210001959573082)
--(axis cs:176.13,0.210001959573082)
--(axis cs:547.54,0.210001959573082)
--(axis cs:1913.18,0.210001959573082)
--(axis cs:1913.18,0.206331807339844)
--(axis cs:1913.18,0.206331807339844)
--(axis cs:547.54,0.206331807339844)
--(axis cs:176.13,0.206331807339844)
--(axis cs:54.51,0.206331807339844)
--(axis cs:16.9,0.206331807339844)
--(axis cs:5.11,0.206331807339844)
--(axis cs:1.8,0.206331807339844)
--(axis cs:0.44,0.206331807339844)
--(axis cs:0.05,0.206331807339844)
--cycle;

\addplot [semithick, black]
table {%
0.05 0.210001959573082
0.44 0.210001959573082
1.8 0.210001959573082
5.11 0.210001959573082
16.9 0.210001959573082
54.51 0.210001959573082
176.13 0.210001959573082
547.54 0.210001959573082
1913.18 0.210001959573082
};
\addplot [semithick, crimson2143940]
table {%
0.44 0.158496
1.8 0.180208
5.11 0.191827
16.9 0.197751
54.51 0.200123
176.13 0.201279
547.54 0.20212
1913.18 0.202454
};
\addplot [semithick, crimson2143940, dashed]
table {%
0.44 0.255566
1.8 0.232449
5.11 0.219851
16.9 0.213184
54.51 0.209797
176.13 0.208061
547.54 0.20719
1913.18 0.206753
};
\addplot [semithick, steelblue31119180, dashed]
table {%
0.44 0.158496
1.8 0.18369
5.11 0.196826
16.9 0.203406
54.51 0.206703
176.13 0.208353
547.54 0.209179
1913.18 0.209591
};
\addplot [semithick, steelblue31119180]
table {%
0.44 0.255566
1.8 0.235434
5.11 0.224317
16.9 0.21847
54.51 0.216125
176.13 0.21499
547.54 0.214146
1913.18 0.21382
};
\draw (axis cs:1,0.22) node[
  scale=0.5,
  anchor=base west,
  text=black,
  rotate=0.0
]{$W(\Omega)'$};
\end{axis}

\end{tikzpicture}
    \caption{\ex{Ahrs} with evidence 4.}
    \label{subfig:ahrs_4}
\end{subfigure}

\begin{subfigure}[b]{0.5\linewidth}
\begin{tikzpicture}

\definecolor{crimson2143940}{RGB}{214,39,40}
\definecolor{darkgray176}{RGB}{176,176,176}
\definecolor{gray127}{RGB}{127,127,127}
\definecolor{steelblue31119180}{RGB}{31,119,180}

\begin{axis}[
    width=\textwidth,
    height=\plotheight,
    xtick={0.1, 1, 10, 60, 600},
    xticklabels={0.1, 1, 10, 60, 600},
    ytick={0, 0.2, 0.4},
    yticklabels={0, \phantom{0}0.2, 0.4},
log basis x={10},
tick align=outside,
tick pos=left,
x grid style={darkgray176},
xlabel={Time [s]},
xmin=0.8, xmax=600,
xmode=log,
xtick style={color=black},
y grid style={darkgray176},
ylabel={Cond. reach. prob.},
ymin=0.0, ymax=0.4,
ytick style={color=black}
]
\path [draw=gray127, fill=gray127, opacity=0.5]
(axis cs:0.05,0.235406380121471)
--(axis cs:0.05,0.240415459154772)
--(axis cs:0.34,0.240415459154772)
--(axis cs:1.29,0.240415459154772)
--(axis cs:4.91,0.240415459154772)
--(axis cs:17.13,0.240415459154772)
--(axis cs:58.93,0.240415459154772)
--(axis cs:209.5,0.240415459154772)
--(axis cs:796.7,0.240415459154772)
--(axis cs:796.7,0.235406380121471)
--(axis cs:796.7,0.235406380121471)
--(axis cs:209.5,0.235406380121471)
--(axis cs:58.93,0.235406380121471)
--(axis cs:17.13,0.235406380121471)
--(axis cs:4.91,0.235406380121471)
--(axis cs:1.29,0.235406380121471)
--(axis cs:0.34,0.235406380121471)
--(axis cs:0.05,0.235406380121471)
--cycle;

\addplot [semithick, black]
table {%
0.05 0.240415459154772
0.34 0.240415459154772
1.29 0.240415459154772
4.91 0.240415459154772
17.13 0.240415459154772
58.93 0.240415459154772
209.5 0.240415459154772
796.7 0.240415459154772
};
\addplot [semithick, crimson2143940]
table {%
0.34 0.103458
1.29 0.158484
4.91 0.194385
17.13 0.213827
58.93 0.222458
209.5 0.226798
796.7 0.229521
};
\addplot [semithick, crimson2143940, dashed]
table {%
0.34 0.37545
1.29 0.312596
4.91 0.274594
17.13 0.25556
58.93 0.246306
209.5 0.24191
796.7 0.239949
};
\addplot [semithick, steelblue31119180, dashed]
table {%
0.34 0.103458
1.29 0.163812
4.91 0.200048
17.13 0.219898
58.93 0.229628
209.5 0.234389
796.7 0.236734
};
\addplot [semithick, steelblue31119180]
table {%
0.34 0.37545
1.29 0.31525
4.91 0.278972
17.13 0.260694
58.93 0.252601
209.5 0.248723
796.7 0.246067
};
\draw (axis cs:1,0.25) node[
  scale=0.5,
  anchor=base west,
  text=black,
  rotate=0.0
]{$W(\Omega)'$};
\end{axis}%
\end{tikzpicture}%
    \caption{\ex{Phil} with evidence 2.}
    \label{subfig:phil_2}
\end{subfigure}
\hfill
\begin{subfigure}[b]{0.5\linewidth}
\begin{tikzpicture}

\definecolor{crimson2143940}{RGB}{214,39,40}
\definecolor{darkgray176}{RGB}{176,176,176}
\definecolor{gray127}{RGB}{127,127,127}
\definecolor{steelblue31119180}{RGB}{31,119,180}

\begin{axis}[
    width=\textwidth,
    height=\plotheight,
    xtick={0.1, 1, 10, 60, 600},
    xticklabels={0.1, 1, 10, 60, 600},
    ytick={0, 0.25, 0.5, 0.75, 1},
    yticklabels={0, 0.25, 0.5, 0.75, 1},
log basis x={10},
tick align=outside,
tick pos=left,
x grid style={darkgray176},
xlabel={Time [s]},
xmin=0.8, xmax=600,
xmode=log,
xtick style={color=black},
y grid style={darkgray176},
ylabel=\empty,
ymin=0.0, ymax=1.0,
ytick style={color=black}
]
\path [draw=gray127, fill=gray127, opacity=0.5]
(axis cs:0.05,0.13319693505324)
--(axis cs:0.05,0.133546368421131)
--(axis cs:4.45,0.133546368421131)
--(axis cs:7.54,0.133546368421131)
--(axis cs:14.04,0.133546368421131)
--(axis cs:32.08,0.133546368421131)
--(axis cs:75.79,0.133546368421131)
--(axis cs:183.89,0.133546368421131)
--(axis cs:407.36,0.133546368421131)
--(axis cs:876.96,0.133546368421131)
--(axis cs:876.96,0.13319693505324)
--(axis cs:876.96,0.13319693505324)
--(axis cs:407.36,0.13319693505324)
--(axis cs:183.89,0.13319693505324)
--(axis cs:75.79,0.13319693505324)
--(axis cs:32.08,0.13319693505324)
--(axis cs:14.04,0.13319693505324)
--(axis cs:7.54,0.13319693505324)
--(axis cs:4.45,0.13319693505324)
--(axis cs:0.05,0.13319693505324)
--cycle;

\addplot [semithick, black]
table {%
0.05 0.133546368421131
4.45 0.133546368421131
7.54 0.133546368421131
14.04 0.133546368421131
32.08 0.133546368421131
75.79 0.133546368421131
183.89 0.133546368421131
407.36 0.133546368421131
876.96 0.133546368421131
};
\addplot [semithick, crimson2143940]
table {%
4.45 0.084494
7.54 0.086299
14.04 0.092748
32.08 0.103213
75.79 0.107556
183.89 0.107982
407.36 0.107982
876.96 0.114281
};
\addplot [semithick, crimson2143940, dashed]
table {%
4.45 0.959545
7.54 0.62584
14.04 0.311273
32.08 0.200959
75.79 0.162314
183.89 0.146749
407.36 0.139832
876.96 0.136578
};
\addplot [semithick, steelblue31119180, dashed]
table {%
4.45 0.084494
7.54 0.086432
14.04 0.093191
32.08 0.103926
75.79 0.114613
183.89 0.122554
407.36 0.127439
876.96 0.130187
};
\addplot [semithick, steelblue31119180]
table {%
4.45 0.959544
7.54 0.629157
14.04 0.312435
32.08 0.202226
75.79 0.181864
183.89 0.181386
407.36 0.181386
876.96 0.162762
};
\draw (axis cs:1,0.14) node[
  scale=0.5,
  anchor=base west,
  text=black,
  rotate=0.0
]{$W(\Omega)'$};
\end{axis}%
\end{tikzpicture}%
    \caption{\ex{Tandem} with evidence 2.}
    \label{subfig:tandem_2}
\end{subfigure}

\caption{Additional results for different CTMCs and evidences.}
\label{fig:additional_results}
\end{figure}
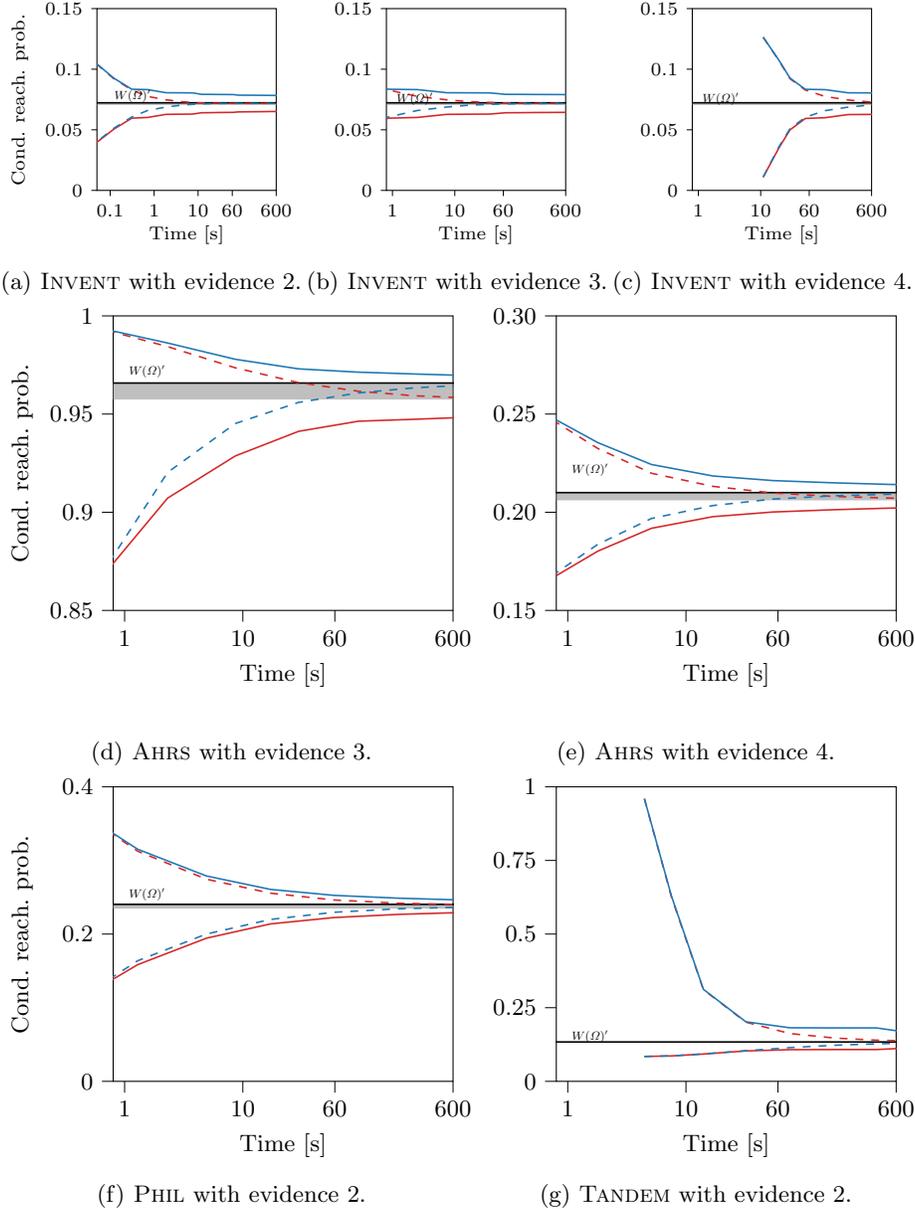

\Cref{fig:additional_results} provides additional plots for the benchmarks of \cref{sec:experiments} (see \cref{tab:benchmarks_summary} for the benchmark statistics).
Overall, we see the same results as observed in \cref{sec:experiments}: our method is able to find reasonably tight bounds on the weighted reachability within the used time limit of 10 minutes.

One major factor regarding scalability can be seen for \ex{Invent} with evidence 4 in \cref{subfig:invent_4}.
The evidence consists of 15 observations and as a result, our approach requires more than 10 seconds to obtain the first result.
However, within a minute and performing a few refinement steps, we still obtain a reasonable bound on the weighted reachability.

\Cref{subfig:tandem_2} shows that the coarse partitioning of the timings can initially lead to coarse bounds on the weighted reachability.
However, the bounds become again tighter after a few refinement steps.

\fi

\end{document}